\newtheorem{definition}{Definition}[section]
\newtheorem{proposition}{Proposition}[section]
\newtheorem{remark}{Remark}
\def\tb{{\boldsymbol{\theta}}}
\def\zerob{{\boldsymbol{0}}}
\def\mub{{\boldsymbol{\mu}}}
\def\oneb{{\boldsymbol{1}}}
\def\rmd{\,{\rm d}}
\def\xb{{\mathbf{x}}}
\def\Xb{{\mathbf{X}}}
\def\ub{{\mathbf{u}}}
\def\Ub{{\mathbf{U}}}
\def\mb{{\mathbf{m}}}
\def\rbb{{\mathbb R}}
\def\pbb{{\mathbb P}}
\def\onebb{{\mathbbm 1}}
\DeclareMathOperator{\argmax}{argmax}
\DeclareMathOperator{\CM}{CM}
\def\VC{{\cal V}}
\def\TC{{\cal T}}
\def\NC{{\cal N}}
\def\EC{{\cal E}}
\title{Reverse stress testing via multivariate modeling with vine copulas}
\author{Menglin Zhou$\dag$ and Natalia Nolde$\dag$}
\affil{$\dag$Department of Statistics, University of British Columbia, Canada}
\date{}
\begin{document}

\maketitle
\setcounter{page}{1}

\begin{abstract}
As an important tool in financial risk management, stress testing aims to evaluate the stability of
financial portfolios under some potential large shocks from extreme yet plausible scenarios of risk factors. The effectiveness of a stress test crucially depends on the choice of stress scenarios. In this paper we consider a pragmatic approach to stress scenario estimation that aims to address several practical challenges in the context of real life financial portfolios of currencies from a bank. Our method utilizes a flexible multivariate modelling framework based on vine copulas.

\end{abstract}
\noindent{\bf Key words}:  Reverse stress testing, stress scenarios, multivariate distributions, vine copulas, multivariate conditional mode 

\setcounter{equation}{0}

\section{Introduction}

Stress testing aims to evaluate the resilience of a financial system under some extreme yet plausible scenarios for its risk factors. Such scenarios are referred to as stress scenarios. It is argued that effectiveness of a stress test heavily relies on the choice of stress scenarios (see, e.g., \cite{Breuer_etal2009}, \cite{BreuerCsiszar2013}, \cite{Glasserman_etal2015}). However, selecting scenarios that are both severe and plausible is a significant challenge. Reserve stress testing involves estimation of stress scenarios given that the system experiences a certain adverse outcome. This is in contrast to traditional stress testing which considers impact of given stress scenarios on the financial system. The importance of reverse stress testing (RST) has been highlighted in the aftermath of the 2017-2019 financial crisis by supervisory authorities, such as \cite{Basel2009}, \cite{Committee2009} and \cite{FSA2009}.

Given a specified unfavorable portfolio outcome, the purpose of RST of a financial portfolio is to uncover stress scenarios for risk factors which result in that outcome. In the literature, different definitions of the adverse outcome are considered. \cite{Grundke2011} and \cite{GrundkePliszka2018} use the economic capital requirement to define the adverse effect. However, this definition is not easy to use in the context of statistical estimation of stress scenarios. Alternatively, the adverse effect can be defined as the loss of a portfolio exceeding (or equaling) some given threshold value (see, e.g., \cite{McneilSmith2012}, \cite{Glasserman_etal2015}, \cite{Kopeliovich_etal2015}). In this paper, we adopt the definition of stress scenarios as given in \cite{Glasserman_etal2015}. Let $\Xb$ be a $d$-dimensional random vector representing changes in risk factors and $L = g(\Xb)$ denote the corresponding portfolio loss for some real-valued function $g$ on $\rbb^d$. Then, for a large threshold $\ell>0$, a stress scenario is defined as:
\begin{equation}
    \mb(\ell) = \argmax_{\xb\in \rbb^d}f(\xb|L\geq \ell),
    \label{eq:RST_L}
\end{equation}
where $f(\cdot\mid L\geq \ell)$ denotes the conditional density of $\Xb$ given $L\geq \ell$, assumed to exist. That is, a stress scenario is the maximizer of the density of risk factors conditional on the portfolio loss to be above the given threshold. Statistically, it is just the conditional mode of risk factors, thus providing a most likely scenario for the risk factors while ensuring that the associated portfolio loss exceeds the threshold.

When it comes to modelling the distribution of risk factors and the associated portfolio loss, the assumption of either joint normality or ellipticality of risk factors along with linearity of the function $g(\cdot)$ are common; see, e.g., \cite{Grundke2011}, \cite{Kopeliovich_etal2015} and \cite{Glasserman_etal2015}. In high-dimensional settings, principle component analysis is often used for dimension reduction as in \cite{GrundkePliszka2018}. A framework which allows for non-elliptical distribution of risk factors is proposed by \cite{McneilSmith2012}, based on the concept of half-space depth. \cite{Pesenti2019} propose a flexible data-driven framework for stress scenario estimation based on a large Monte Carlo sample of risk factors and corresponding portfolio losses. 


In this paper we aim to fill some gaps in existing literature on stress scenario estimation by focusing specifically on practical challenges associated with the problem of reverse stress testing. Below is a list of various aspects of this problem, which we formulated based on the data from real life financial portfolios of currencies. While we list them as separate points, they are naturally interconnected. 

\begin{itemize}

\item Limited data: The data available for estimating stress scenarios includes historical time series of risk factors which may be of only moderate length and hence may not contain extreme values of portfolio losses that would be of interest for a financial institution. In this setting, the estimation procedure should allow for extrapolation outside the available data range to be of practical relevance. Hence, there is a need for either a fully parametric model or a semi-parametric approach that allows for extrapolation, for example, using techniques from extreme value theory.

\item Dimensionality: While sub-portfolios may only be determined by a few risk factors, their number can be considerably larger for middle level portfolios. Hence, estimation methods need to be computationally feasible for large portfolios.

\item Flexibility of modelling assumptions to capture stochastic properties of data: This requirement needs to be applied to both marginal behaviour and joint dependence structure of risk factors. The need for a flexible model is particularly important in the case of larger portfolios where risk factors may exhibit a variety of stochastic properties, including tail dependence but also tail independence among the risk factors as well as jointly with portfolio losses. Commonly used multivariate Gaussian or elliptical distributions are unlikely to provide an adequate fit to the data in high dimensions.

\item Multiple regimes in the data: Some datasets exhibit what can be described as the presence of multiple regimes each of which has different dependence characteristics among risk factors and portfolio losses. One example of such data is presented in Figure~\ref{fig:T_clust} showing scatter plots of risk factors against portfolio losses for a portfolio of five currencies. In these situations, the conditional density of risk factors given that the portfolio loss exceeds a high threshold may not be unimodal. This makes the definition of the stress scenario in~\eqref{eq:RST_L} not suitable as it would pick only the global maximum in the conditional density as the stress scenario but will ignore local maxima corresponding to stress scenarios in other regimes. One approach is to attempt to model these more complex data and then modify the definition of stress scenarios to capture potential multimodality. However, as a simpler option, we propose to split the data into different regimes using clustering and then apply a stress scenario estimation procedure within each cluster separately. This approach provides a solution that is a reasonable compromise in view of modelling and estimation complexities in a multi-regime setting. 

\item Lack of an explicit form for function $g$ linking risk factors to portfolio losses: It is common to assume that $g$ is a linear function or at least can be well approximated by a linear function. However, this is not always the case. The function $g$, while being deterministic, is not usually given explicitly. Portfolios are evaluated internally and only portfolio losses corresponding to historical values of risk factors are available for modelling and estimation purposes. This fact needs to be incorporated into the modelling framework.

\end{itemize}

Considering the practical challenges listed above, we propose a flexible modelling approach, which makes use of  vine copulas (\cite{Joe1996}, \cite{BedfordCooke2001}, \cite{BedfordCooke2002}, \cite{Czado2019}). Vine copulas offer a useful tool for constructing multivariate distributions in a high-dimensional setting that combines vine graphs and bivariate copulas. The need for only bivariate copula families is particularly advantageous as there exist a great variety of bivariate copulas, including elliptical copulas, Archimedean copulas (\cite{Joe1997}), extreme-value copulas (\cite{GudendorfSergers2010}) and nonparametric copulas (\cite{NaglerCzado2016}). In comparison,  the choice of multivariate copulas is much more limited and in general it would be hard to find a single multivariate copula that is able to capture the dependence structure for a high-dimensional data. Furthermore, modelling dependence with copulas provides flexibility in handling univariate marginal components separately from their dependence.

The rest of the paper is organized as follows. Section~\ref{sec::data} introduces three motivating datasets, which informed the list of challenges mentioned above and which will subsequently be used for numerical illustrations.
Section~\ref{sec::vine} reviews relevant background on vine copulas and their construction. In Section~\ref{sec::est}, we present three methods to estimate stress scenarios defined in~\eqref{eq:RST_L}. We begin with a naive approach, which relies on an estimate of the joint density of risk factors~$\Xb$ and the associated portfolio loss~$L$. This is followed by two modifications, one involving explicit modelling of function $g$ and the other exploring the connection between conditional densities of $\Xb$ given $L\ge\ell$ versus $L=\ell$. Simulation studies to assess performance of the different estimation methods are summarized in Section~\ref{sec::sim}. Section~\ref{sec::app} illustrates the methods when applied to real data. Conclusions are given in Section~\ref{sec::con}.

%

\section{Data description}
\label{sec::data}
In this section we describe three real life portfolios of currencies that will be used to set up a realistic data generating process in the simulation studies as well as to illustrate the proposed methodology in the application.

The data consists of daily time series of currency exchange rates (against the US dollar) for the period from August 1, 2008 to June 29, 2020, resulting in 3078 observations for each series. As risk factors, we consider daily relative changes in exchange rates. The series of the associated portfolio values is obtained using a bank’s internal valuation model. For the purpose of modelling, we consider day-to-day portfolio losses, $L_t = -(V_{t} - V_{t-1})$, where $V_t$ denotes the value of the portfolio at time $t$. These values have been scaled for data anonymization. 

The univariate marginal components of the risk factors and portfolio losses are modelled using either a skew-t distribution introduced by \cite{JonesFaddy2003}, or semi-parametrically with a kernel density estimator for the central part of the data and the generalized Pareto distribution for the upper and lower tails. The skew-t distribution provides a flexible model able to capture both heavy-tailed behaviour and asymmetry often observed in financial data. The use of the generalized Pareto distribution is justified by extreme value theory for threshold exceedances. Details on the estimation of univariate margins are provided in~\ref{appen::margin}. To assess dependence between portfolio losses and each of the portfolio risk factors we provide dependence summary statistics including the Kendall's tau coefficient and the upper tail dependence coefficient. The latter is  estimated using the method in~\cite{Lee_etal2018}.




\subsection{Portfolio A}\label{spA}
Portfolio A consists of four currencies: Canadian dollar (CAD), British pound (GBP), Australian dollar (AUD) and Hong Kong dollar (HKD). Table~\ref{tab:A_mar} gives estimates of the parameters of the marginal distributions. Based on the quantile-quantile (Q-Q) plots (not included in the paper), we found the skew-t distribution to provide a good fit for the portfolio losses and risk factors based on CAD, GBP and AUD currencies. Relative changes in exchange rate for HKD were better captured via the semi-parametric approach with the generalized Pareto distribution fitted to the tails. It should be noted that larger values of shape parameters $\alpha$ and $\beta$ of the skew-t distribution or the reciprocal of the shape parameter of the generalized Pareto distribution indicate lighter tails. The marginal estimates suggest that HKD data have rather different decay rates in their upper and lower tails in contrast to all the other variables.

\begin{table}[H]
\centering
\small
\caption{Marginal estimates for Portfolio A. The columns correspond to portfolio losses (L) and risk factors given by relative changes in currency exchange rates. Parameters $\alpha$ and $\beta$ are for the fitted skew-t distribution; ${\xi}_L$ and ${\xi}_U$ are shape parameters of the generalized Pareto distribution fitted to the lower and upper tails, respectively.}
\begin{tabular}{cccccc}
\hline
     & L & CAD    & GBP    & AUD    & HKD    \\ \hline
$\hat{\alpha}$  & 1.851 & 1.984 & 2.153 & 1.698 & - \\
$\hat{\beta}$  & 1.878 & 1.922 & 2.087 & 1.560 & - \\ 
$1/\hat{\xi}_L$ & -  & -  & -  & -  & 2.825\\
$1/\hat{\xi}_R$ & - & -  & -  & - & 4.274\\ \hline
\hline
\end{tabular}
\label{tab:A_mar}
\end{table}

Figure~\ref{fig:A_sc} displays scatter plots of normal scores for portfolio losses and each of the four risk factors. It can be observed that, for this portfolio, risk factors tend to be negatively associated with portfolio losses. The strength of dependence is largest for the Canadian dollar, nearly approaching counter-monotonicity. The British pound and Australian dollar exhibit intermediate levels of dependence, while the Hong Kong dollar displays only weak association with portfolio losses. This can be further quantified in term of the Kendall's tau correlation coefficient reported in Table~\ref{tab:A_dep}, as well as estimates of the upper tail dependence coefficient for negated values of the risk factors. The estimates of the tail dependence coefficient suggest presence of tail dependence for portfolio losses and risk factors based on CAD, GBP and AUD currencies.  

%
%

\begin{table}[H]
\centering
\small
\caption{Kendall's tau correlation coefficient ($\tau$) and estimated upper tail dependence coefficient ($\hat\lambda$) between risk factors and portfolio losses for Portfolio A. Risk factor values were negated in computation of tail dependence coefficient estimates.}
\begin{tabular}{ccccc}
\hline
     & CAD    & GBP    & AUD    & HKD    \\ \hline
$\tau$  & -0.963 & -0.302 & -0.475 & -0.095 \\
$\hat{\lambda}$ & 0.850  & 0.264  & 0.413  & 0.000  \\
\hline
\hline
\end{tabular}
\label{tab:A_dep}
\end{table}

\begin{figure}[ht]
\centering
\subfigure{\includegraphics[scale=0.22]{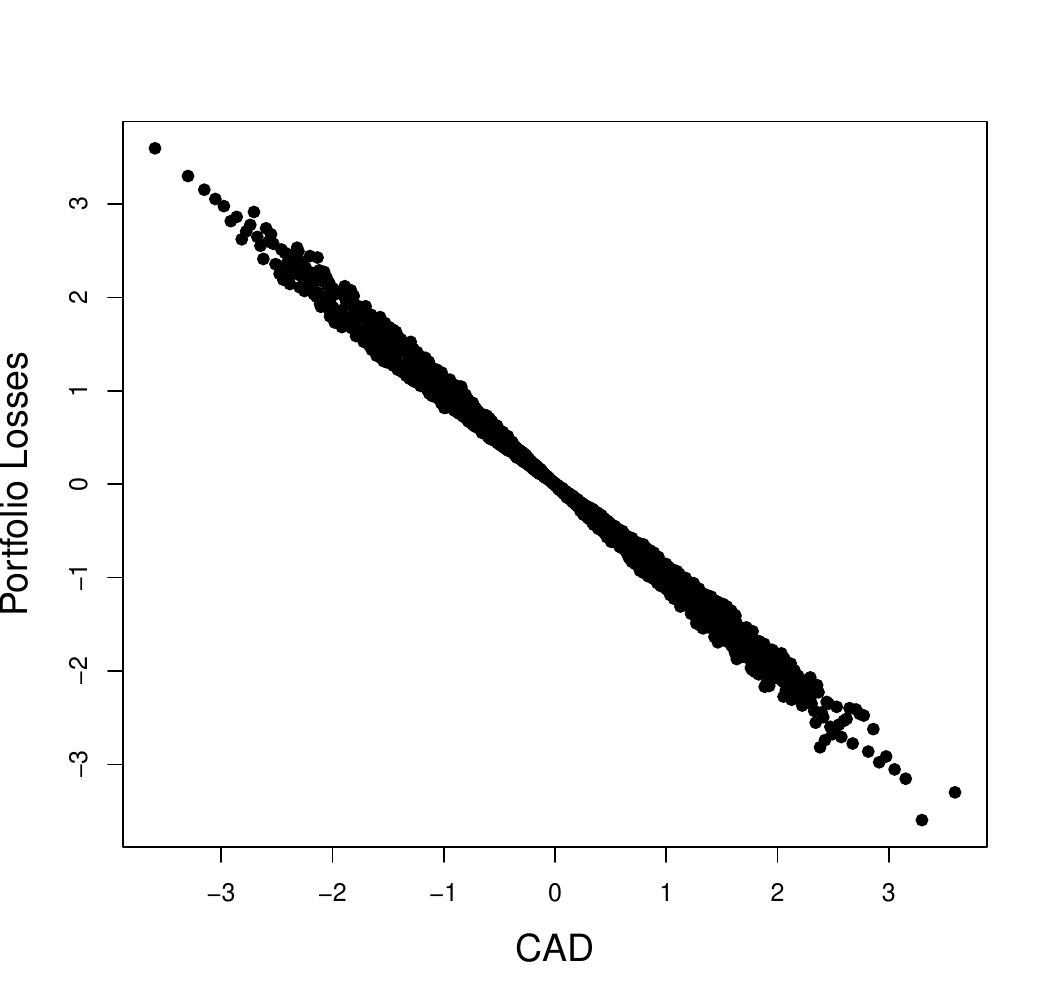}}
\subfigure{\includegraphics[scale=0.22]{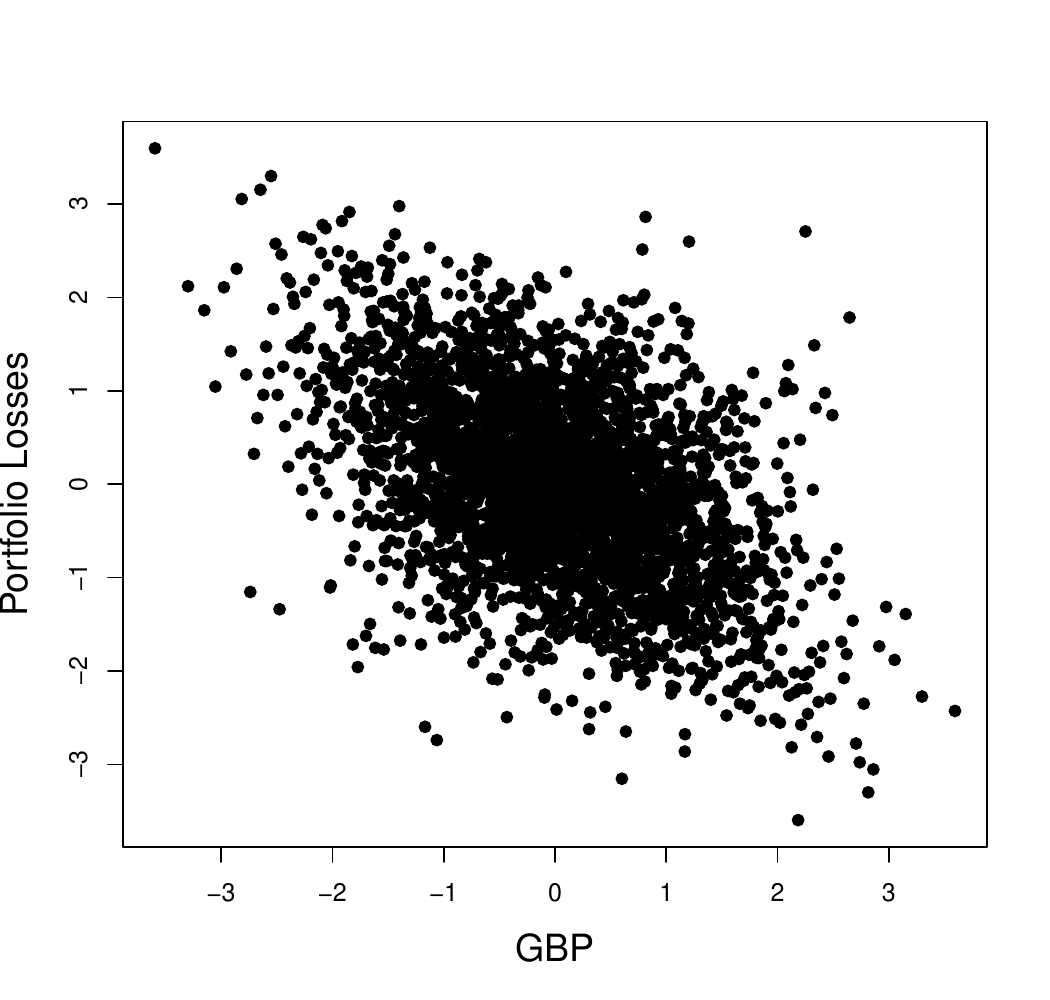}}
\subfigure{\includegraphics[scale=0.22]{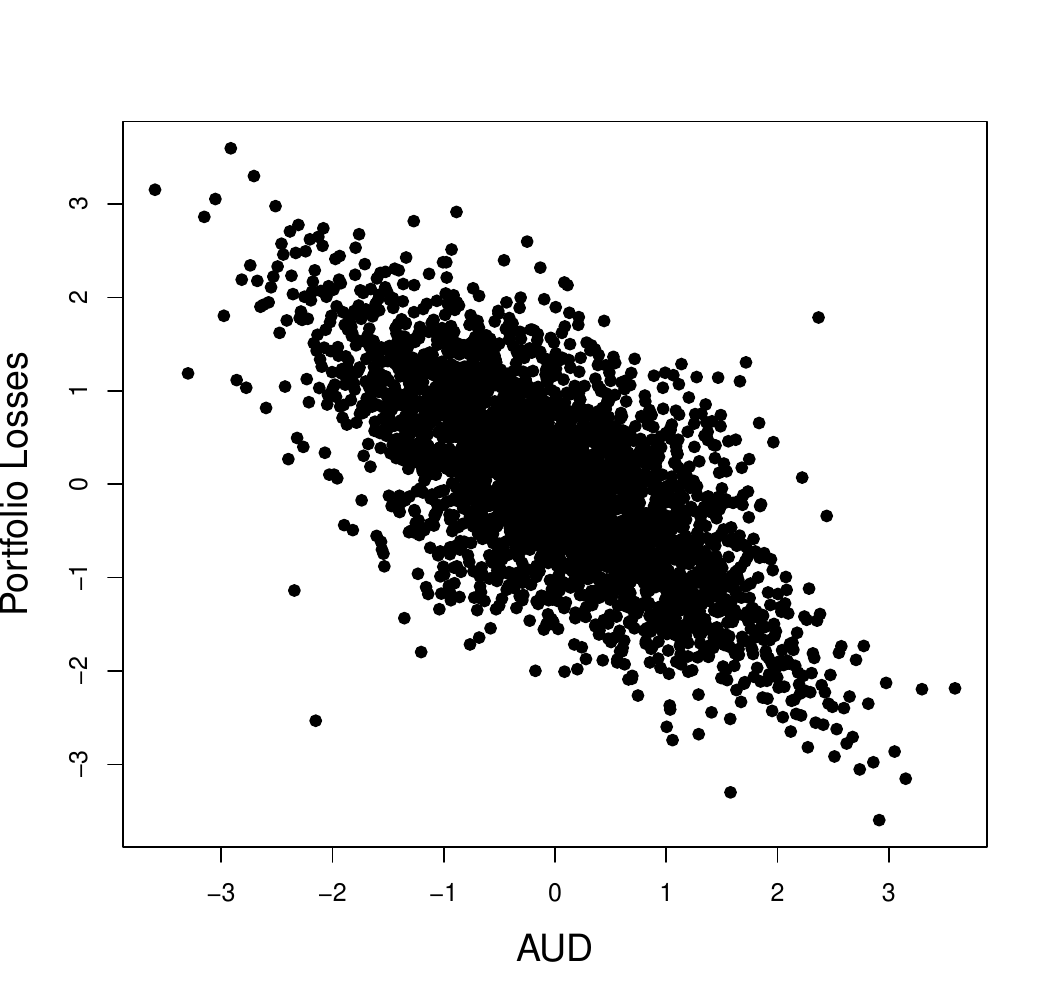}}
\subfigure{\includegraphics[scale=0.22]{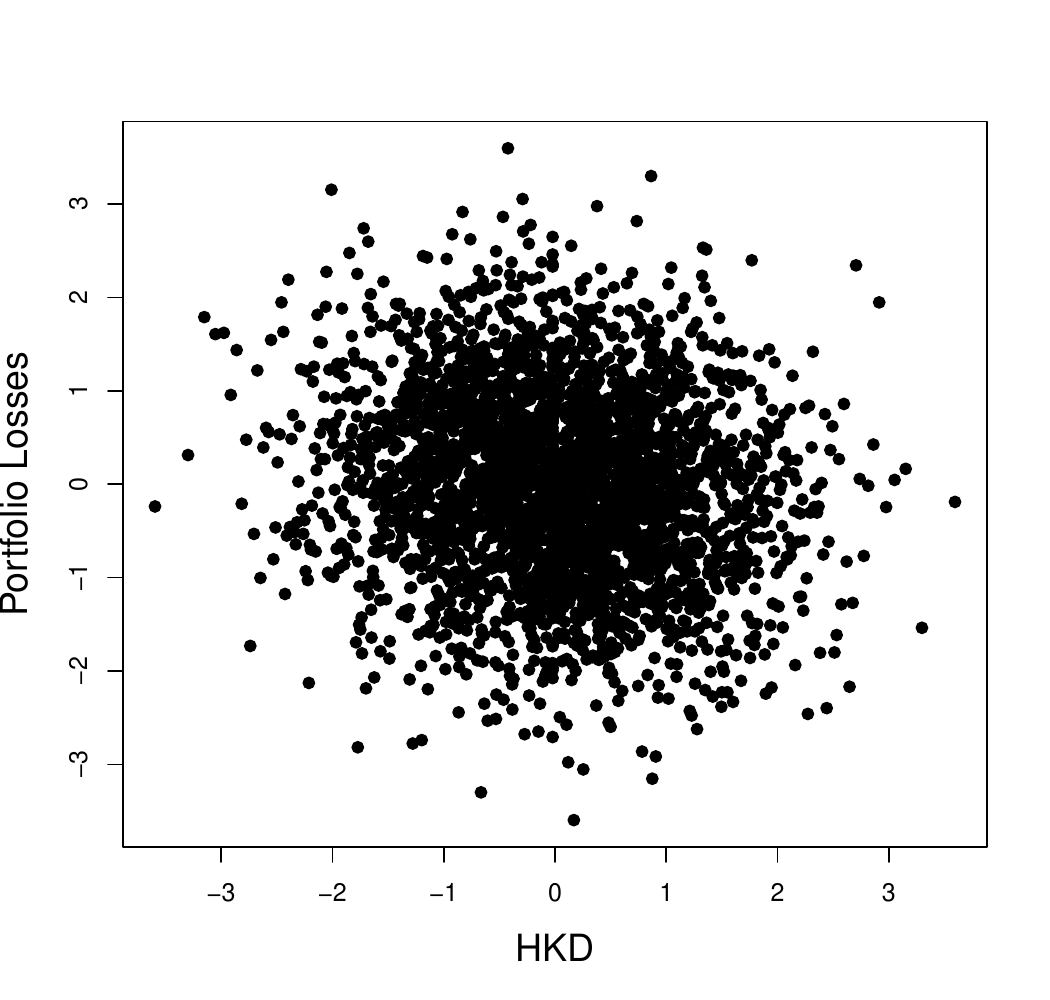}}
\caption{Scatter plots of normal scores between portfolio losses and relative changes in currency exchange rates for portfolio A.}
\label{fig:A_sc}
\end{figure}



\subsection{Portfolio B}\label{siB}

Portfolio B consists of five currencies: CAD, Japanese yen (JPY), AUD, Mexican peso (MXN) and EUR. The scatter plots of normal scores between daily relative changes in currency exchange rates and portfolio losses are shown in Figure~\ref{fig:T_sc}. It is apparent from these plots that dependence characteristics between risk factors and portfolio losses are more complex in portfolio~B than were previously seen in portfolio~A. As before, the strength of dependence ranges from weak in the case of JPY to very strong for CAD. However, a salient feature for these data is the lack of monotonicity in the dependence structure. In particular, here large portfolio losses are sometimes driven by large values in risk factors based on the CAD and AUD currencies, and other times by their small values. One possible explanation for this phenomenon is the presence of multiple regimes in the data. To explore this idea we consider clustering the data into two regimes,  where in one regime risk factors exhibit positive association with portfolio losses and in the other regime negative association.

\begin{figure}[ht]
\centering
\subfigure{\includegraphics[scale=0.17]{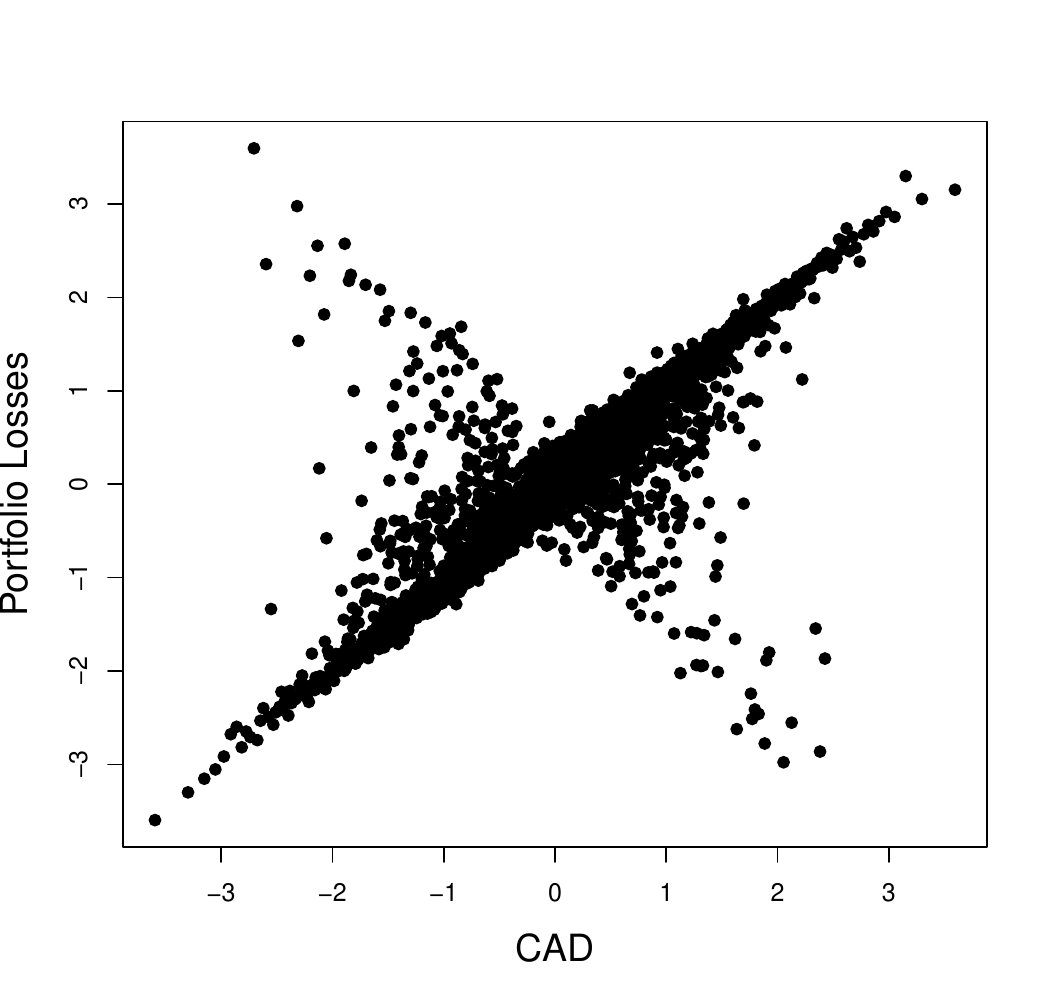}}
\subfigure{\includegraphics[scale=0.17]{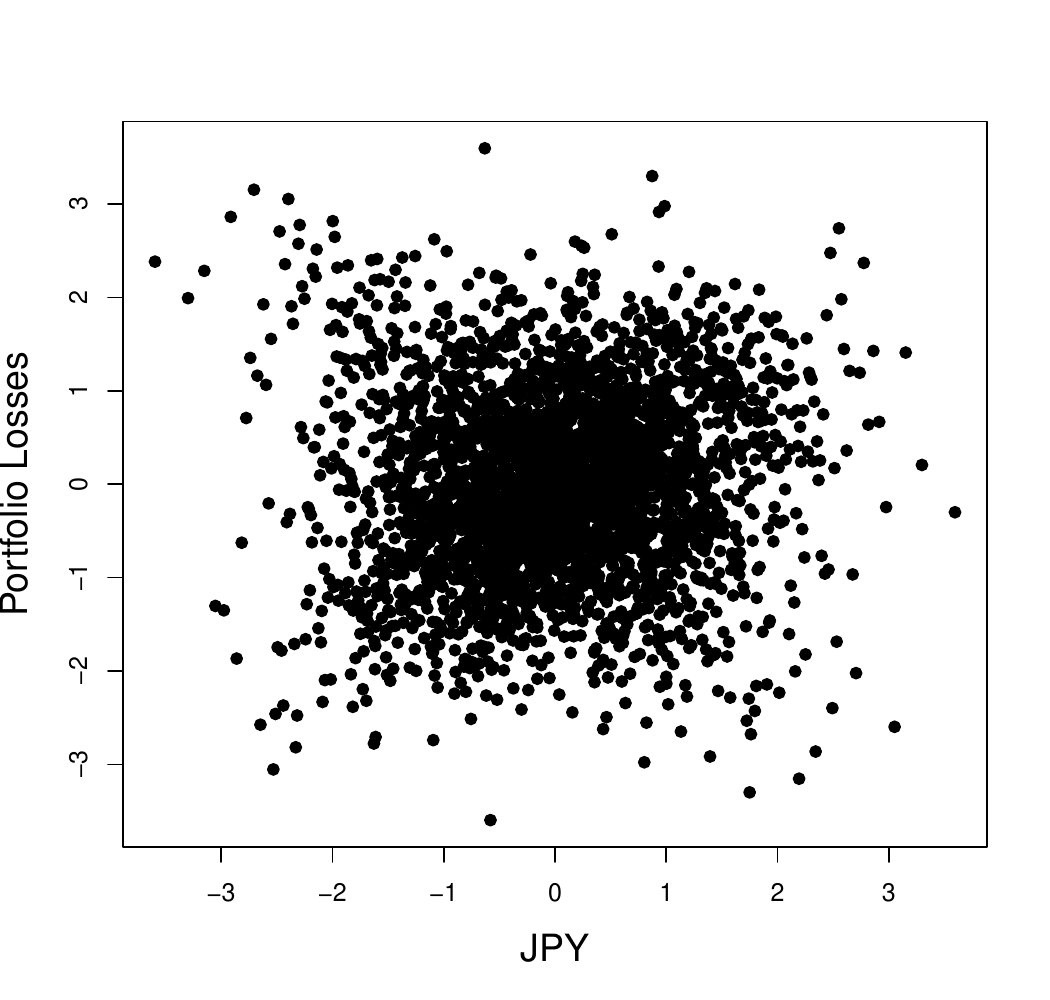}}
\subfigure{\includegraphics[scale=0.17]{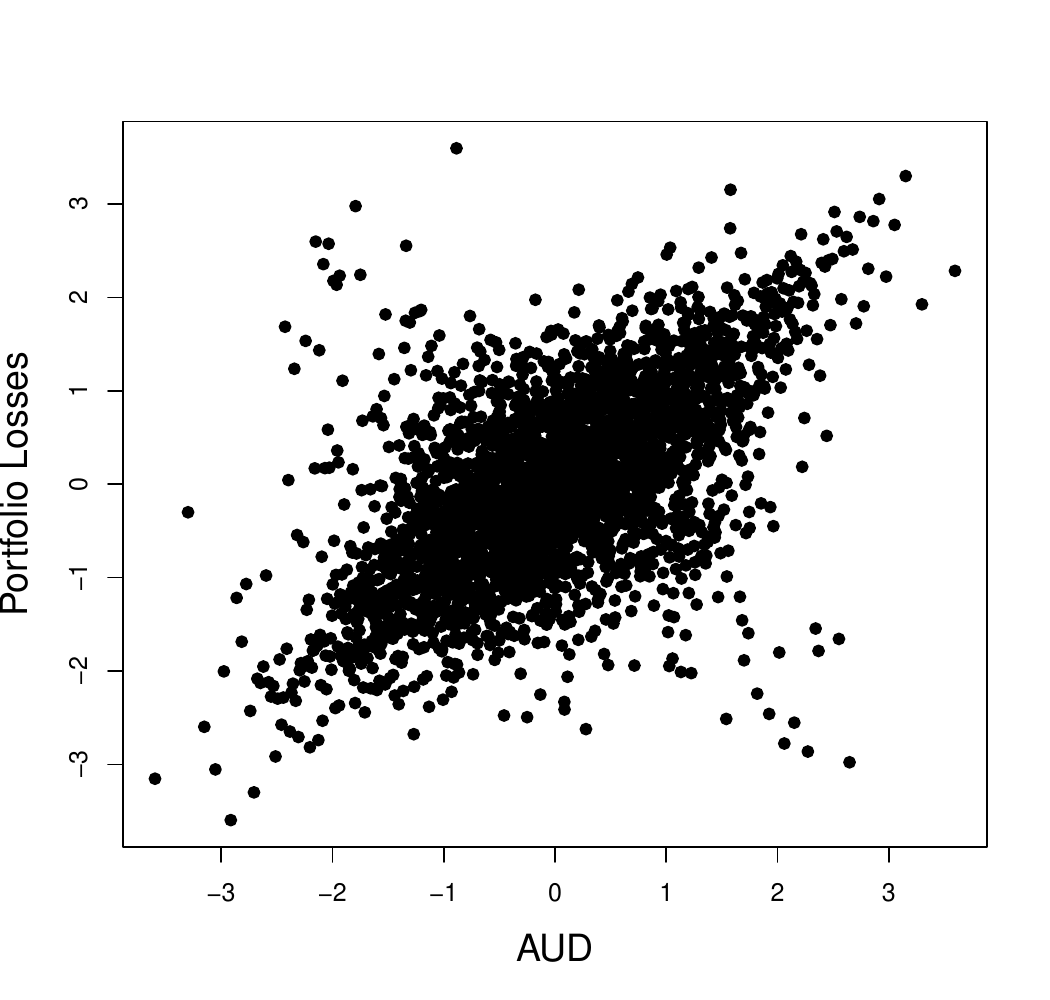}}
\subfigure{\includegraphics[scale=0.17]{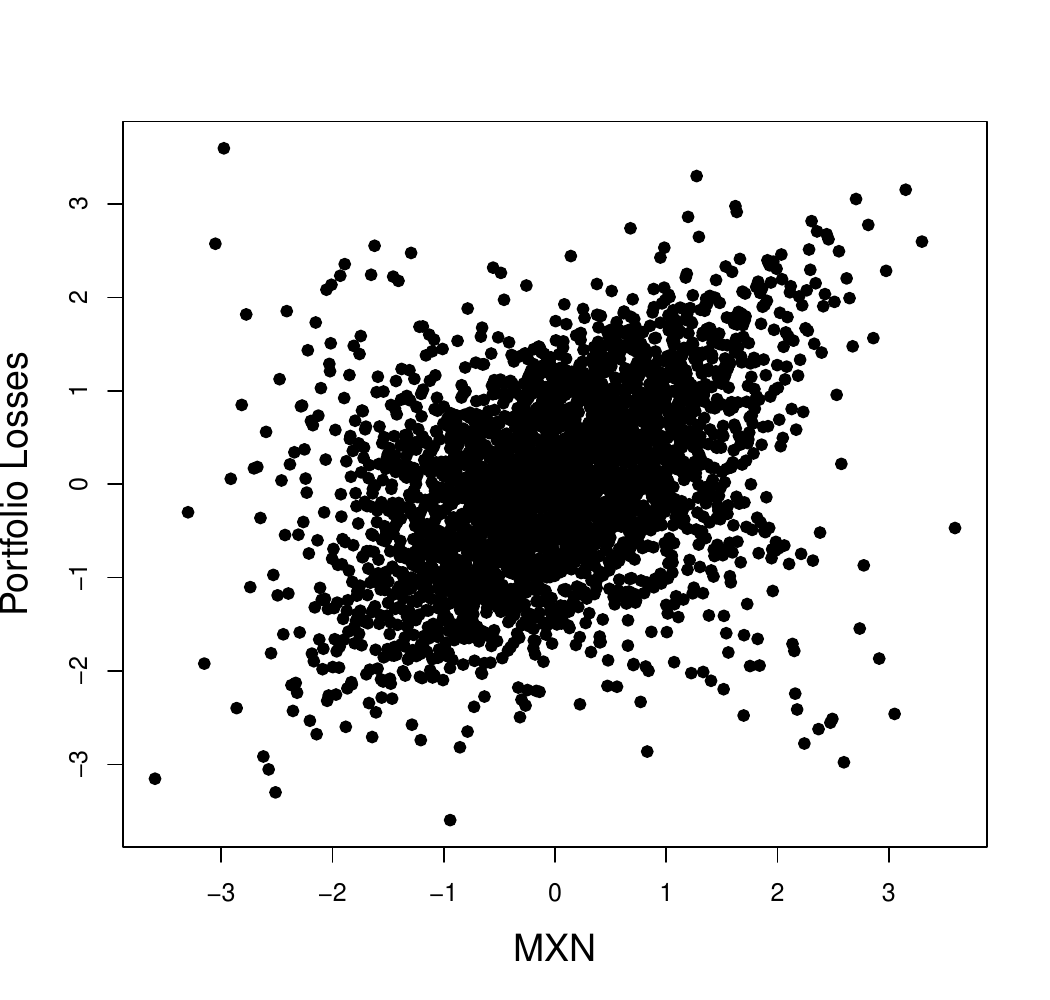}}
\subfigure{\includegraphics[scale=0.17]{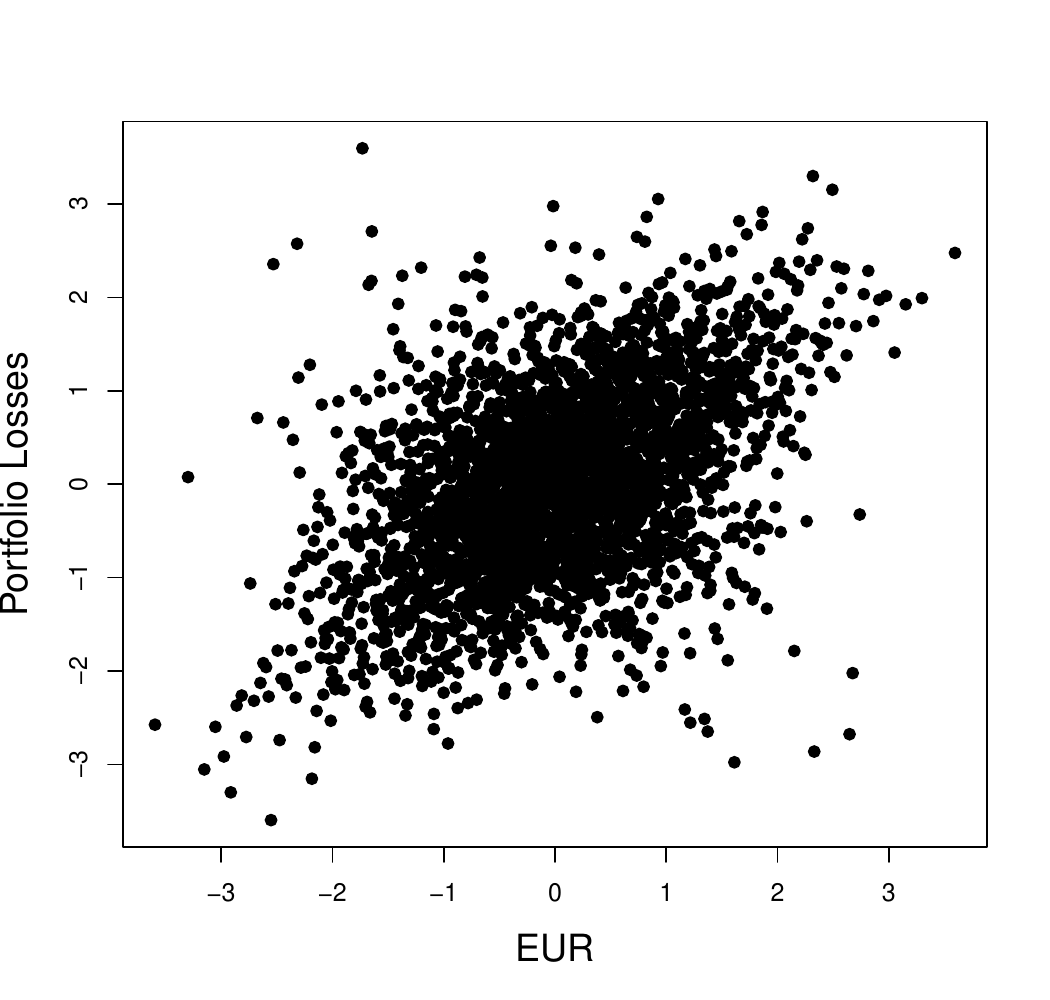}}
\caption{Scatter plots of normal scores between portfolio losses and relative changes in currency exchange rates for portfolio B.}
\label{fig:T_sc}
\end{figure}

Distance-based clustering algorithms are inapplicable here, as we aim to cluster observations according to dependence characteristics. We choose to do clustering via a mixture model, where one of the mixture components exhibits positive dependence while the other exhibits negative dependence. Considering the difficulty of finding a  multivariate parametric model to capture a 5-dimensional dependence structure, it is easier to use only 2-dimensional data to construct the mixture model. In this example, using only CAD and the portfolio losses may be able to generate reasonable clusters. However, when there are more than one risk factor showing multiple regimes, it is not a good idea to cluster with a single risk factor. Under this consideration, we aggregate the information of all risk factors into a single component using principle component analysis (PCA). To make the procedure more general, we adopt kernel PCA (\cite{Scholkopf_etal1998}) rather than the traditional linear PCA. However, in this example, both linear and kernel PCA techniques lead to similar results for these data. 

Once we have the first principle component from the kernel PCA, we transform it as well as the portfolio losses into uniform scale based on the empirical distribution and then construct a mixture of two bivariate t copulas, which is expressed as
$$
\Ub = 
\left\{
\begin{array}{lr}
        \Ub^{(1)}\sim tCop_2(\rho_1, \nu_1), &\text{if } W = 0;\\
        \Ub^{(2)}\sim tCop_2(\rho_2, \nu_2), &\text{if } W = 1,
\end{array}
\right.
$$
where $tCop_2(\rho, \nu)$ denotes the bivariate t copula with $\nu$ degrees of freedom and scale matrix  whose off-diagonal elements are $\rho$ and ones on the diagonal;  $W$ is the latent variable following Bernoulli distribution with success probability $\pi$. The parameters $\{\pi, \rho_1, \rho_2, \nu_1, \nu_2\}$ are estimated via the expectation–maximization (EM) algorithm. We finally cluster the observations by calculating the posterior probabilities $\pbb(W = k|\Ub = \ub_i)$ for $k\in \{0,1\}$, where $\{\ub_1,\ldots,\ub_n\}$ denotes the sample of transformed observations. The resulting two clusters are shown in Figure~\ref{fig:T_clust}.

\begin{figure}[ht]
\centering
\subfigure{\includegraphics[scale=0.25]{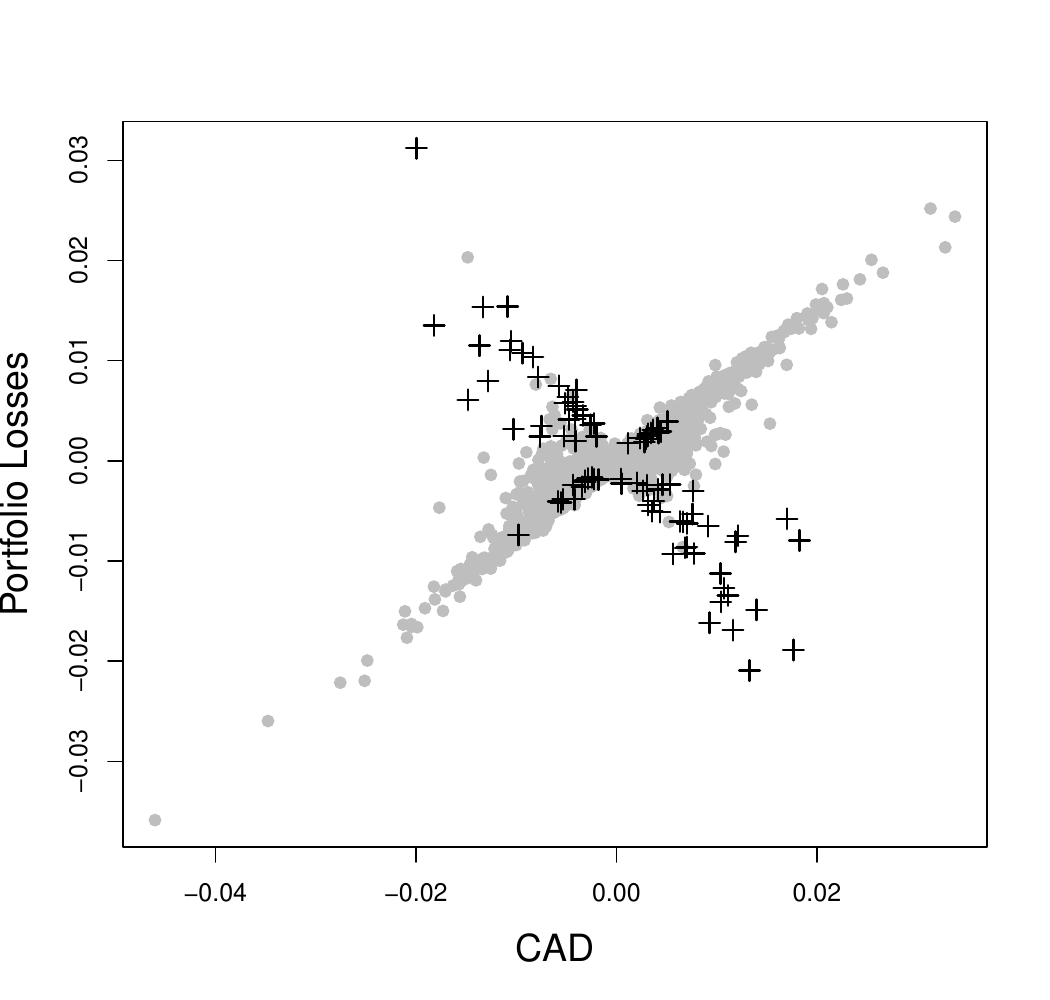}}
\subfigure{\includegraphics[scale=0.25]{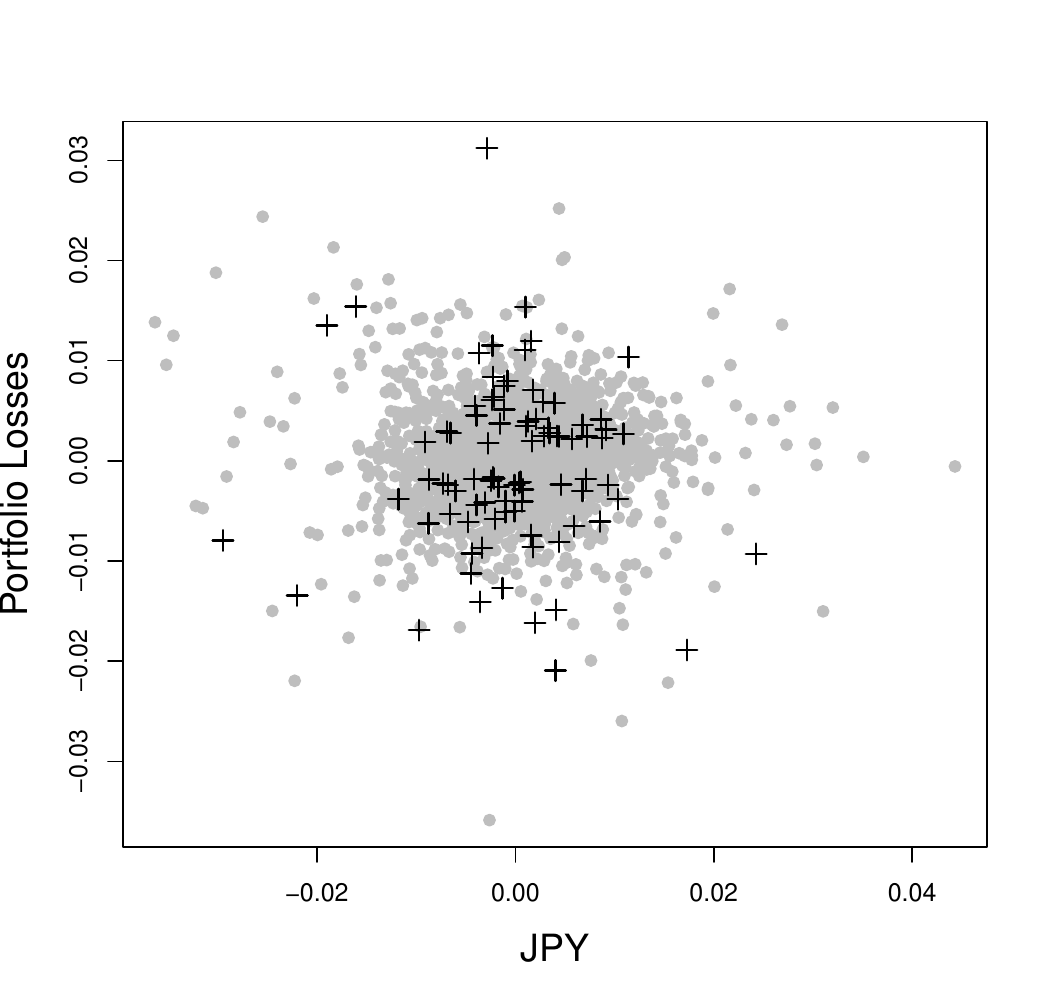}}
\subfigure{\includegraphics[scale=0.25]{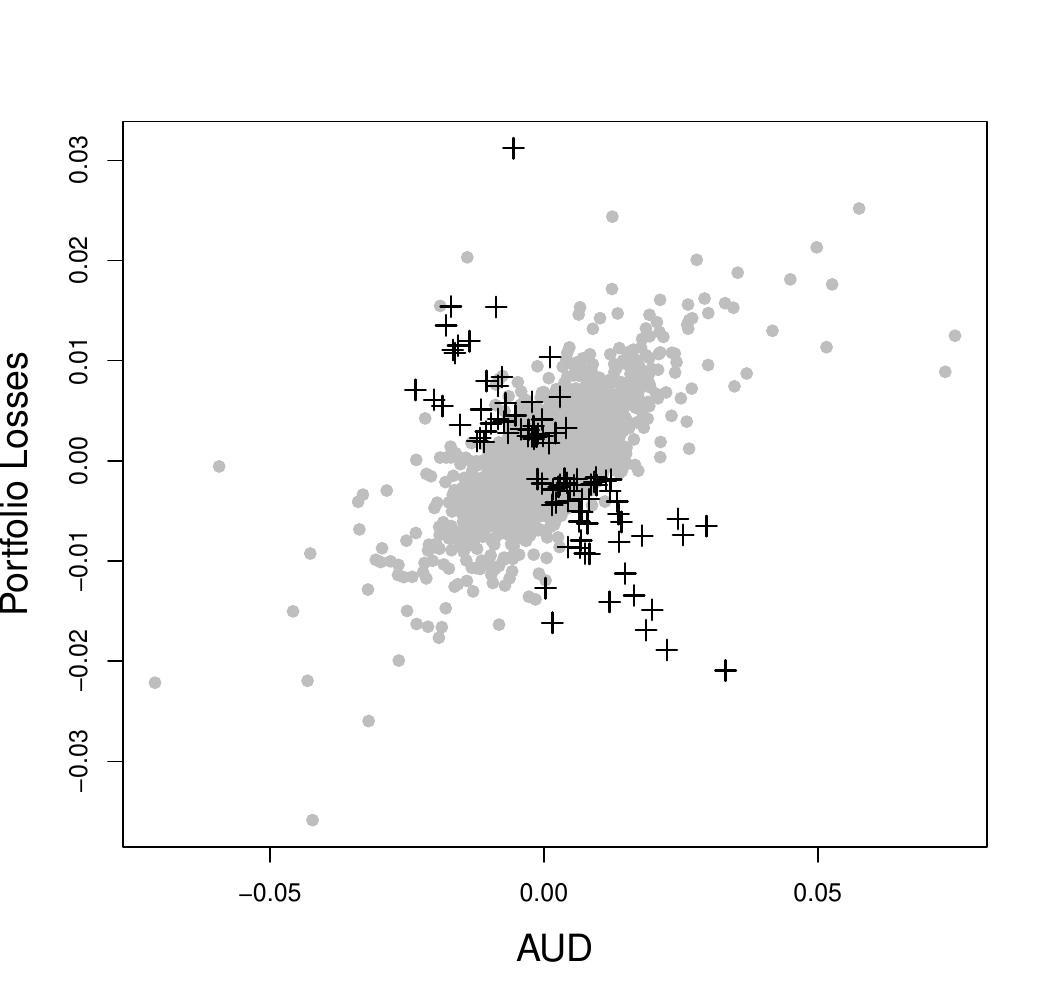}}
\subfigure{\includegraphics[scale=0.25]{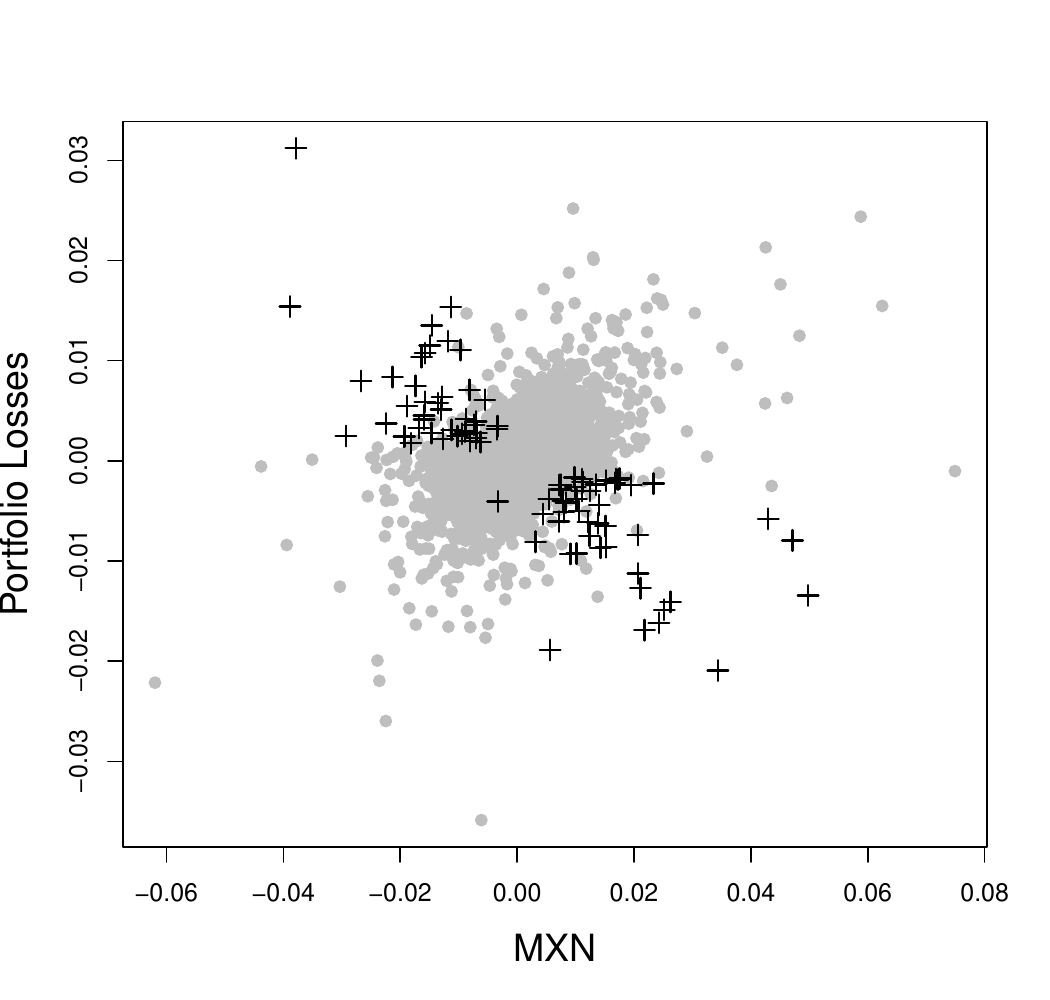}}
\subfigure{\includegraphics[scale=0.25]{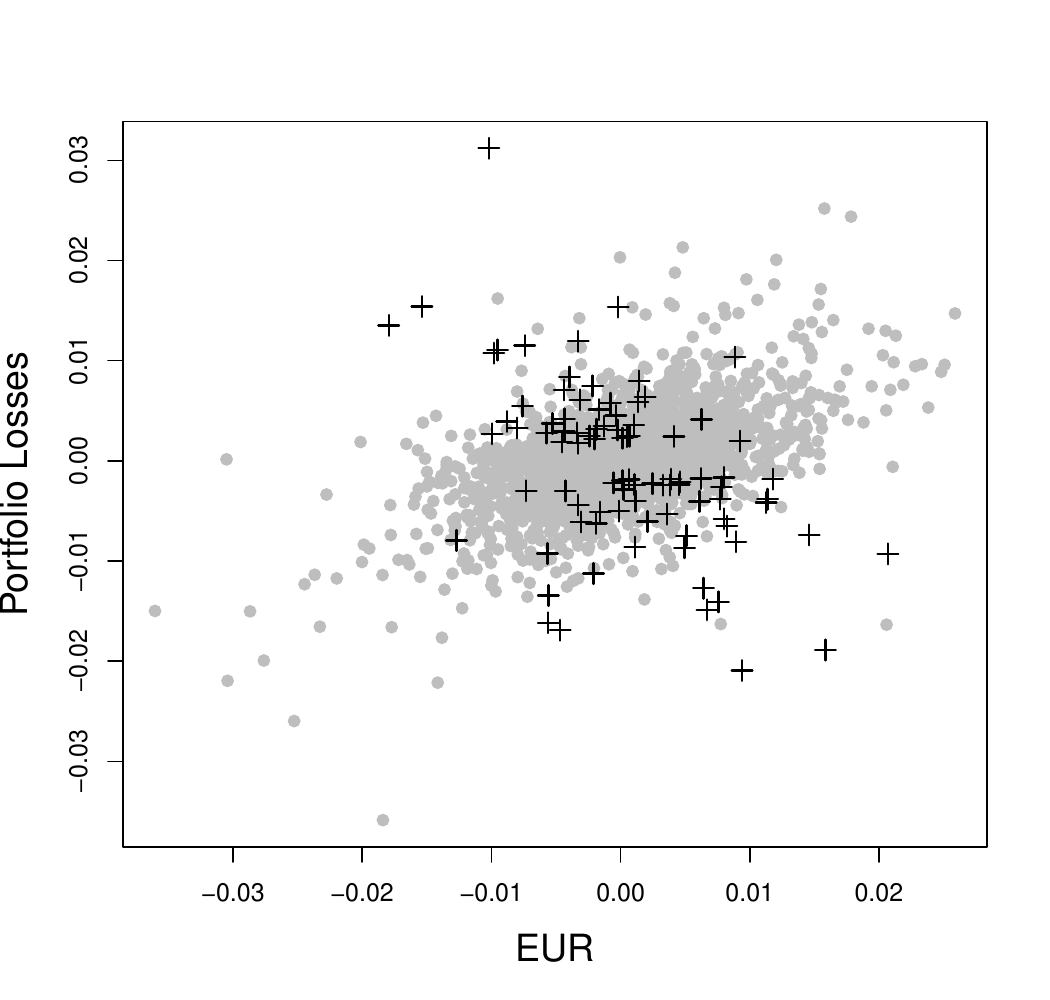}}
\caption{Scatter plots of scaled daily portfolio losses against daily relative changes in currency exchange rates for portfolio B. The round dots are cluster 1 and the cross dots are cluster 2.}
\label{fig:T_clust}
\end{figure}

Figure~\ref{fig:T_clust_ts} shows the time series of portfolio losses with different colours used to differentiate the two regimes. It is worth noting that cluster/regime 2 falls mostly on the dates in the period from December 2015 to February 2016  and then from March 2020 until June 2020, the end of the observation period. From the historical context, both of these time periods correspond to recessions, the mini-recession of 2015-2016 and the start of the COVID-19 pandemic.


\begin{figure}[ht]
\centering
\includegraphics[scale=0.5]{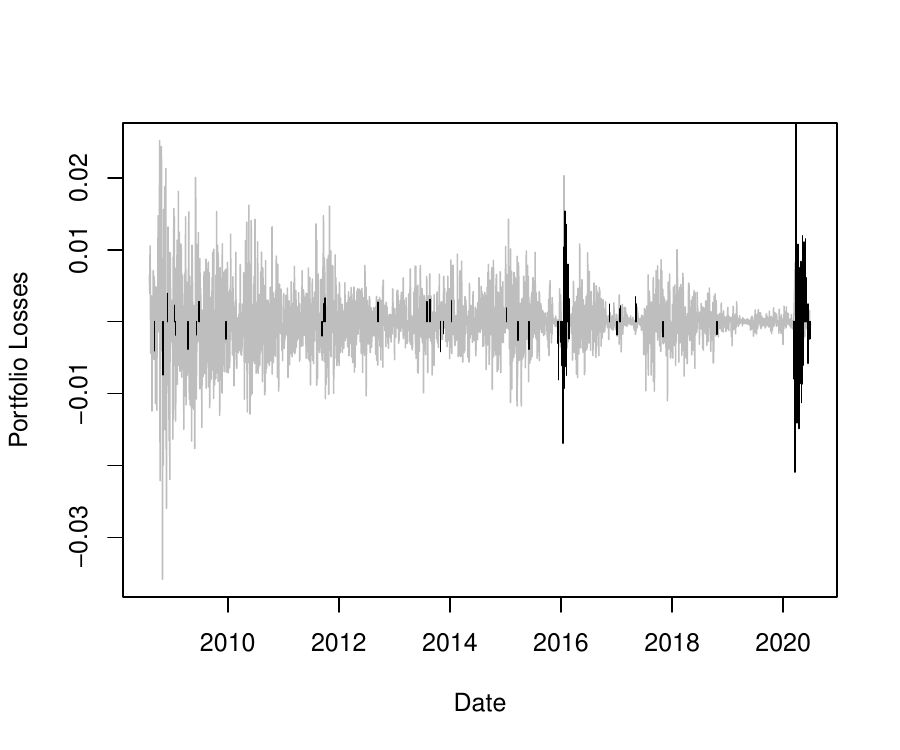}
\caption{Time series plot of the scaled daily portfolio losses for portfolio~B. The grey and black lines indicate time points corresponding to cluster 1 and cluster 2, respectively.}
\label{fig:T_clust_ts}
\end{figure}

Table~\ref{tab:T_dep} reports values of the dependence measures between risk factors and portfolio losses for each of the two clusters. As can be seen from the values of the Kendall's tau correlation coefficient and the estimates of the upper tail dependence coefficient, cluster 1 is characterized by strong positive association and tail dependence between portfolio losses and four of the five risk factors. In contrast, cluster~2 exhibits negative association, and large portfolio losses are tail dependent with negative but large in absolute value relative changes in exchange rates for CAD, AUD, MXN and EUR. 

Marginal parameter estimates for the two clusters/regimes are provided in~Table~\ref{tab:T_mar}. As for Portfolio~A, we found the semi-parametric approach with the generalized Pareto distribution for the tail observations works better when the upper and lower tails have very different tail decays. If this is not the case, the skew-t distribution provided adequate fit to the marginal components.


\begin{table}[H]
\centering
\small
\caption{Kendall's tau correlation coefficient ($\tau$) and estimated tail dependence coefficient ($\hat{\lambda}$) between risk factors and portfolio losses for Portfolio B. For cluster~2, risk factor values were negated in computation of tail dependence coefficient estimates.}
\begin{tabular}{cccccc|ccccc}
\hline
                                                & \multicolumn{5}{c|}{Cluster 1}              & \multicolumn{5}{c}{Cluster 2}              \\
                                                & CAD    & JPY    & AUD    & MXN    & EUR    & CAD    & JPY    & AUD    & MXN    & EUR    \\ \hline
$\tau$  & 0.793  & 0.068  & 0.465  & 0.346  & 0.343  & -0.607 & 0.023 & -0.632 & -0.590 & -0.311 \\
$\hat{\lambda}$ & 0.859  & 0.039  & 0.546  & 0.354  & 0.286  & 0.606  & 0.213  & 0.383  & 0.453  & 0.495  \\
\end{tabular}
\label{tab:T_dep}
\end{table}

\begin{table}[H]
\centering
\small
\caption{Marginal estimates for Portfolio B. The columns correspond to portfolio losses (L) and risk factors given by relative changes in currency exchange rates. Parameters $\alpha$ and $\beta$ are for the fitted skew-t distribution; ${\xi}_L$ and ${\xi}_U$ are shape parameters of the generalized Pareto distribution fitted to the lower and upper tails, respectively.}
\begin{tabular}{ccccccc|cccccc}
\hline
                                                & \multicolumn{6}{c|}{Cluster 1}              & \multicolumn{6}{c}{Cluster 2}              \\
                                           & L     & CAD    & JPY    & AUD    & MXN    & EUR &L    & CAD    & JPY    & AUD    & MXN    & EUR    \\ \hline
$\hat{\alpha}$ & -  & 1.938  & 1.774  & 1.695 & 2.361  & 2.289 & 3.489 & 9.752 & 1.592 & 10.000 & 10.000 & 8.901\\
$\hat{\beta}$   & - & 1.882 & 1.780 & 1.560 & 2.019 & 2.233 &3.365 & 10.000 & 1.723 & 7.884 & 7.059 & 7.050 \\
$1/\hat{\xi}_L$  & 6.757  & -  & -  & -  & -  & -  & -  & - & -  & -  & -  & -  \\
$1/\hat{\xi}_R$  & 25.641  & -  & -  & -  & -  & -  & -  & -  & -  &-  & - & -\\ \hline
\end{tabular}
\label{tab:T_mar}
\end{table}

\subsection{Portfolio C}
Portfolio C is the largest of the three considered portfolios determined by 18 currencies: CAD, JPY, GBP, Swiss franc (CHF), AUD, Danish Krone (DKK), Mexican Peso (MXN), HKD, South African rand (ZAR), Swedish krona (SEK), Norwegian krone (NOK), Turkish lira (TRY), New Zealand dollar (NZD), Singapore dollar (SGD), Thai baht (THB), Indian rupee (INR), Hungarian forint (HUF) and EUR. The scatter plots of normal scores between risk factors and portfolio losses are provided in~\ref{appen::sc}. All the currencies are found to be positively associated with the portfolio losses. 

Tables~\ref{tab:P_dep} and~\ref{tab:P_mar} provide dependence summary statistics and marginal parameter estimates for this portfolio. The estimates of the dependence measures reveal a broad range of dependence characteristics. The risk factor based on MXN currency has the strongest dependence with portfolio losses and is clearly also tail dependent, while majority of other risk factors show only mild degree of dependence or very week dependence as in the case of HKD. The marginal parameter estimates indicate differences in tail behaviour across risk factors and the portfolio loss.   

\begin{table}[H]
\centering
\small
\caption{Kendall's tau correlation coefficient ($\tau$) and estimated upper tail dependence coefficient ($\hat\lambda$) between risk factors and portfolio losses for Portfolio C. For JPY, risk factor values were negated in computation of tail dependence coefficient estimates.}
\begin{tabular}{cccccccccc}
\hline
      & CAD    & JPY    & GBP    & CHF    & AUD    & DKK    & MXN    & HKD    & ZAR    \\ \hline
$\tau$     & 0.237  & -0.066 & 0.183  & 0.099  & 0.310  & 0.184  & 0.844  & 0.073  & 0.389  \\
$\hat{\lambda}$ & 0.248  & 0.261  & 0.203  & 0.086  & 0.290  & 0.158  & 0.811  & 0.000  & 0.348  \\\\\hline
& SEK    & NOK    & TRY    & NZD    & SGD    & THB    & INR    & HUF    & EUR    \\ \hline
$\tau$  & 0.225  & 0.261  & 0.325  & 0.270  & 0.291  & 0.188  & 0.183  & 0.256  & 0.185  \\
$\hat{\lambda}$ & 0.238  & 0.264  & 0.176  & 0.250  & 0.269  & 0.145  & 0.157  & 0.210  & 0.156  \\
\end{tabular}
\label{tab:P_dep}
\end{table}

\begin{table}[H]
\centering
\small
\caption{Marginal estimates for Portfolio C. The columns correspond to portfolio losses (L) and risk factors given by relative changes in currency exchange rates. Parameters $\alpha$ and $\beta$ are for the fitted skew-t distribution; ${\xi}_L$ and ${\xi}_U$ are shape parameters of the generalized Pareto distribution fitted to the lower and upper tails, respectively.}
\begin{tabular}{ccccccccccc}
\hline
     & L & CAD    & JPY    & GBP    & CHF    & AUD    & DKK    & MXN    & HKD    & ZAR    \\ \hline
$\hat{\alpha}$    & 2.011  & 1.984 & 1.771  & 2.153  & 1.870  & 1.698  & 2.339  & 2.182  & -  & 3.055 \\
$\hat{\beta}$   & 1.806 & 1.922 & 1.762 & 2.087 & 1.939 & 1.560 & 2.266 & 1.880 & - & 2.531 \\
$1/\hat{\xi}_L$ & -  & -  & -  & -  & -  & -  & -  & -  & 2.825 &  - \\
$1/\hat{\xi}_R$ & -  & -  & -  & -  & -  & -  & -  & -  & 4.274 &  - \\ \\\hline
& SEK    & NOK    & TRY    & NZD    & SGD    & THB    & INR    & HUF    & EUR    \\ \hline
$\hat{\alpha}$  & 2.149  & 2.226  & 1.787  & 2.261  & 2.120  & 2.109  & -  & - & 2.333 \\
$\hat{\beta}$   & 2.000 & 1.975 & 1.557 & 2.127 & 2.053 & 2.072 & - & - & 2.263\\
$1/\hat{\xi}_L$ & -  & -  & -  & -  & -  & -  & 5.000  & 71.429 & - \\
$1/\hat{\xi}_R$ & -  & -  & -  & -  & -  & -  & 9.709  & 10.417  & - \\ \hline
\end{tabular}
\label{tab:P_mar}
\end{table}

\section{Background on vine copulas}
\label{sec::vine}

In this section, we will provide the relevant background on vine copulas. For more details, refer to \cite{Joe2014} and \cite{Czado2019}. 


As an effective tool for multivariate data analysis, the copula approach allows to model marginal distributions and the dependence structure separately. Mathematically speaking, the copula is a multivariate distribution with all univariate margins being standard uniform. To be more specific, let $\Xb = (X_1,X_2,...,X_d)^\top$ be a $d$-dimensional random vector with joint cumulative distribution function (cdf) $F_{\Xb}$ and marginals $F_i(x) = P(X_i\leq x)$, $i=1,\ldots,d$. The copula of $\Xb$ is defined as
\begin{equation}
    C(u_1,u_2,...,u_d) = \pbb\big(X_1 \leq F_1^{\leftarrow}(u_1), X_2 \leq F_2^{\leftarrow}(u_2),..., X_d \leq F_d^{\leftarrow}(u_d)\big),\quad \ub\in [0,1]^d,
    \label{eq:copula}
\end{equation}
where $F^{\leftarrow}_i$ is the generalized inverse of $F_i$. According to Sklar's theorem (\cite{Sklar1959}), the joint cdf $F_{\Xb}$ of $\Xb$ can be written as
$$F_{\Xb}(x_1,x_2,...,x_d) = C\big(F_1(x_1), F_2(x_2),...,F_d(x_d)\big),\quad \xb\in\rbb^d,$$
where $C$ is the copula defined in~\eqref{eq:copula}. The copula~$C$ is unique if the marginals $F_1,\ldots,F_d$ are continuous.
If, furthermore, $F_{\Xb}$ is absolutely continuous, then its density function can be expressed as
$$f_{\Xb}(x_1, x_2,...,x_d) = c\big(F_1(x_1),F_2(x_2),...,F_d(x_d)\big)\cdot \prod_{j = 1}^d f_j(x_j),$$
where $f_j$ is the density function of $F_j$ ($j=1,\ldots,d$) and $c$ is the density function of $C$.

There exist many parametric copula families that can capture a variety of dependence structures. Popular examples include elliptical, Archimedean and extreme-value copulas; see Chapter 4 in~\cite{Joe2014} for further classes of copulas.


Although copulas allow to separate the modeling of marginals and dependence, in a high-dimensional setting, it may not be possible to choose a single multivariate copula family that adequately captures the entire multivariate dependence structure. Vine copulas allow the construction of multivariate dependence structures using  only bivariate copula families and vine graphs. With a given vine structure, a multivariate density can be decomposed into the product of pair copulas and marginal densities. A vine is a graphical model consisting of a sequence of trees. \cite{BedfordCooke2001} introduce the definition of a regular vine (R-vine), which is useful to identify all possible decompositions of a $d$-dimensional multivariate density function.



\begin{definition}
We say $\VC$ is a regular vine on $d$ elements, with $\EC(\VC) = \EC(\TC_1)\cup...\cup \EC(\TC_{d-1})$ denoting the set of edges of $\VC$, if
\begin{enumerate}
    \item $\VC$ consists of $d-1$ trees: $\{\TC_1,...,\TC_{d-1}\}$;
    \item $\TC_1$ is a connected tree with nodes $\NC(\TC_1) = \{1,...,d\}$ and edges $\EC(\TC_1)$; for $k = 2,...,d-1$, $\TC_k$ is a connected tree with nodes $\NC(\TC_{k}) = \EC(\TC_{k-1})$ and edges $\EC(\TC_k)$;
    \item for $k = 2,...,d-1$ and $\{n_1, n_2\}\in \EC(\TC_k)$, $|n_1\cup n_2| = 1$, where $|\cdot|$ denotes cardinality (i.e., nodes connected by an edge in $\TC_k, k\geq 2$, share a common node in $\TC_{k-1}$).
\end{enumerate}
\label{def:rvine}
\end{definition}

The last property is called the proximity condition. There are two special cases of regular vines: a canonical vine (C-vine) and a drawable vine (D-vine). A regular vine is called a C-vine if there is only one node of degree\footnote{The degree of a node is the number of edges linked to that node.} greater than 1 in each tree. A regular vine is called a D-vine if all nodes have degree smaller than 3. Figures~\ref{fig:cvine} and~\ref{fig:dvine} give examples of a C-vine and a D-vine with 4 variables, respectively.
\begin{figure}[ht]
\centering
\includegraphics[scale=0.5]{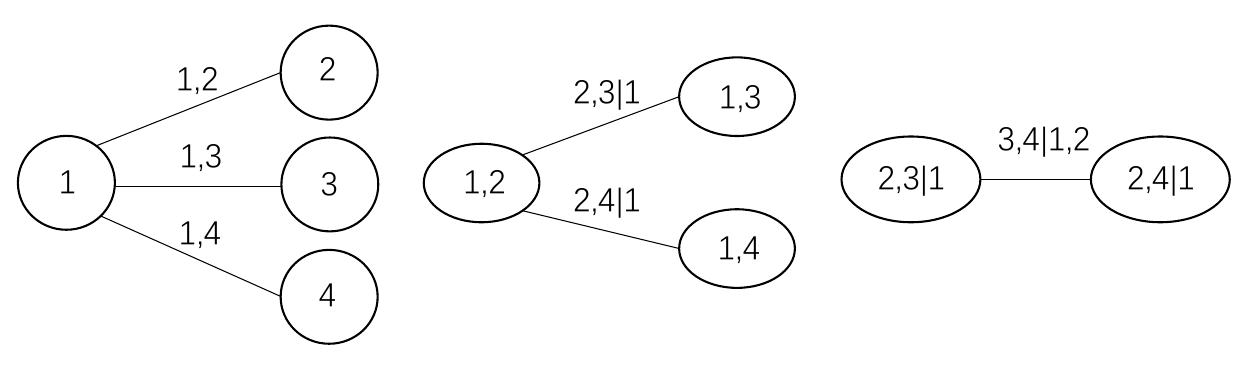}
\caption{A C-vine with 4 variables and 3 trees (from left to right: $\TC_1, \TC_2$ and $\TC_3$).}
\label{fig:cvine}
\end{figure}

\begin{figure}[ht]
\centering
\includegraphics[scale=0.5]{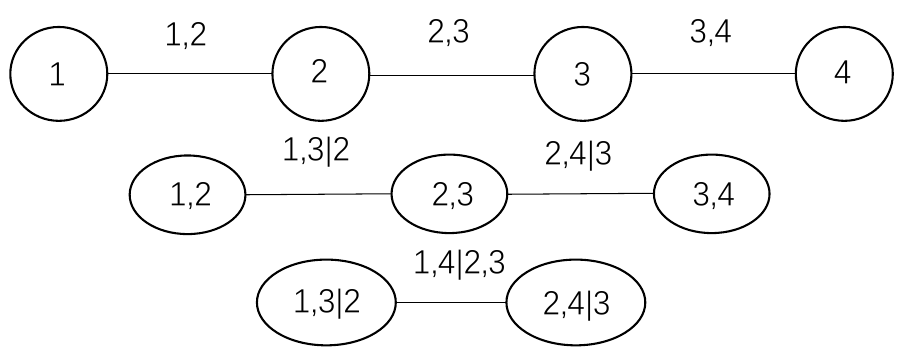}
\caption{A D-vine with 4 variables and 3 trees (from top to bottom: $\TC_1, \TC_2$ and $\TC_3$).}
\label{fig:dvine}
\end{figure}

For a given R-vine tree structure $\VC$ with $\NC(\TC_1) = \{X_1,...,X_d\}$, a vine copula for $\Xb\in \rbb^d$ can be constructed by assigning a bivariate copula to each edge. Let $i,j\in \{1,...,d\}$, $S(i,j)\subset \{1,..,d\}\setminus \{i,j\}$ and $\Xb_{S(i,j)}$ be the subvector of $\Xb$ made up of the variables that are indexed by the elements in set $S(i,j)$. For each edge $[i,j|S(i,j)]\in \EC(\VC)$, suppose $C_{i,j;S(i,j)}$ is the copula associated with the edge, that is the copula of $(X_i, X_j)$ given $\Xb_{S(i,j)}$, and suppose $c_{i,j;S(i,j)}$ is the corresponding density function. By imposing the simplifying assumption\footnote{Let $\tilde{d} = |S(i,j)|$. We say $C_{i,j;S(i,j)}$ satisfies the simplifying assumption if for any $\xb_{S(i,j)}\in \rbb^{\tilde{d}}$, $$C_{i,j;S(i,j)}(u,v|\xb_{S(i,j)}) = C_{i,j;S(i,j)}(u,v),\quad u,v\in [0,1].$$} on $C_{i,j;S(i,j)}$, the joint copula density of $\Xb$ can be written as 
$$c(u_1,u_2,...,u_d) = \prod_{\EC(\VC)}c_{i,j;S(i,j)}\Big(C_{i|S(i,j)}\big(u_i|\ub_{S(i,j)}\big), C_{j|S(i,j)}\big(u_j|\ub_{S(i,j)}\big)\Big),$$
where $C_{j|S(i,j)}\big(u_j|\ub_{S(i,j)}\big) = \pbb\big(F_j(X_j) \leq u_j|F_m(X_m) = u_m, m\in S(i,j)\big)$ for $j = 1,...,d$. Consequently, the joint density function of $\Xb$ can be expressed as
\begin{equation}
    f_{\Xb}(x_1,x_2,...,x_d) = \prod_{j = 1,..,d}f_j(x_j)\cdot\prod_{\EC(\VC)}c_{i,j;S(i,j)}\Big(F_{i|S(i,j)}\big(x_i|\xb_{S(i,j)}\big), F_{j|S(i,j)}\big(x_j|\xb_{S(i,j)}\big)\Big),\quad \xb\in\rbb^d,
    \label{eq:vcop}
\end{equation}
where $F_{j|S(i,j)}\big(x_j|\xb_{S(i,j)}\big)$ is the conditional cdf of $X_j$ given $\Xb_{S(i,j)} = \xb_{S(i,j)}$.


\cite{Morales2011} showed that for $d$ variables there are $\dfrac{d!}{2}\cdot 2^{(d-3)(d-2)/2}$ R-vine tree structures. When $d$ is large, the number of R-vine tree structures is enormous. Thus, constructing a vine structure via enumerated optimization is impossible. An alternative is to use a sequential construction of $\TC_1,...,\TC_{d-1}$. One of the most widely-used construction algorithms is the minimum/maximum spanning tree algorithm (for more details, see \cite{Joe2014}), which aims to choose a tree such that pairs of variables have the strongest dependence. The pair copula families are selected independently via maximizing the Akaike information criterion (AIC) or Bayesian information criterion (BIC). The parameters of pair copulas are estimated sequentially via maximum likelihood. R packages \texttt{VineCopula} (\cite{VineCopula}) and \texttt{rvinecopulib} (\cite{rvinecopulib}) provide functions to select vine structures and fit vine copula models.

\section{Estimation}
\label{sec::est}



Let $(\Xb, L)$ be a $(d+1)$-dimensional random vector containing $d$ risk factors $\Xb$ and the associated portfolio loss~$L$. From the definition of the stress scenario at threshold~$\ell$ in~\eqref{eq:RST_L}, a direct estimator of $\mb(\ell)$ is given by
$$\widehat{\mb}(\ell) = \argmax_{\xb\in \rbb^d} \hat{f}(\xb|L\geq \ell),$$
with the estimator of the conditional density computed via
\begin{equation}
\hat{f}(\xb|L\geq \ell) =  \frac{\int_{\ell}^{\infty} \hat{f}_{\Xb, L}(\xb, z)\rmd z}{1-\hat{F}_L(\ell)},
\label{eq:integral}
\end{equation}
where
$\hat{f}_{\Xb,L}$ is an estimate of the joint density of $(\Xb, L)$ and $\hat{F}_L$ is an estimate of the cdf of $L$. To estimate $f_{\Xb, L}$ we consider modelling $(\Xb, L)$ using vine copulas.

The integration in the numerator of~\eqref{eq:integral} can be avoided. The idea is similar to that used in the vine copula regression context; see, e.g., \cite{KrausCzado2017}, \cite{Schallhorn_etal2017} and \cite{ChangJoe2019}. Note that the conditional density of $\Xb$ given $L\geq \ell$ can also be expressed as 
\begin{equation}
    f(\xb|L\geq \ell) = \frac{f_{\Xb}(\xb)}{1-F_L(\ell)}\cdot \Big[1-F_{L|\Xb}\big(\ell|\xb\big)\Big],\quad \xb\in \rbb^d,
\label{eq:vreg}
\end{equation}
where $F_{L|\Xb}$ denotes the condition cdf of $L$ given $\Xb$. In the literature on vine copula regression, to obtain $F_{L|\Xb}(\cdot|\xb)$ without integration, the node that contains random variable~$L$ as the conditioned variable is constrained to be a leaf node\footnote{A node is a leaf node in a tree if its degree is one.} in all trees (for more details, see~\ref{appen::vine_reg}). By constructing the vine structure in this way, Algorithm~3.1 in \cite{ChangJoe2019} provides a recursive procedure to calculate the conditional cdf $F_{L|\Xb}(\ell|\xb)$.

Initial simulation studies showed that the estimator in~\eqref{eq:integral} based on the conditional density calculation via~\eqref{eq:vreg} tends to overestimate true values of stress scenarios. This can easily be explained by comparing the naive estimator of the stress scenario described above, which is equivalent to 
$$\widehat{\mb}(\ell) = \argmax_{\xb\in\rbb^d} \hat{f}_\Xb(\xb)\cdot\Big[1 - F_{L|\Xb}\big(\ell|\xb\big)\Big],$$
with the functional we actually aim to estimate:
$$\mb(\ell) = \argmax_{\xb\in\rbb^d} f(\xb|g(\Xb)\geq \ell) = \argmax_{\xb\in\rbb^d} f_{\Xb}(\xb)\cdot\oneb_{\{\xb: g(\xb)\geq \ell\}}.$$
Instead of multiplication by the indicator function, the estimator involves a factor that is between zero and one, ignoring the fact that portfolio loss $L$ is fully determined by the risk factors. This suggests that estimation can likely be improved by explicitly modelling function $g$ linking risk factors to the portfolio loss.

Possible approaches to estimate function $g$ include simple linear regression, vine copula regression and machine learning techniques. In our simulation studies and applications, we use the vine copula regression, which can model nonlinear functions as well as heteroscedasticity.

We should note that even when the estimate $\hat{g}$ can approximate function $g$ accurately, the effect of risk factors on portfolio losses may not be properly captured, especially when there is a large number of risk factors influencing the portfolio and some risk factors are only weakly dependent with the portfolio loss. When dependence between a risk factor $X_i$ and $L$ is not correctly modelled, the corresponding estimate of the stress scenario component $\widehat{m}_i(\ell)$ will deviate from the true value even though the prediction $\hat{g}(\xb)$ from the regression is close to the true value $g(\xb)$.


Another possibility to improve the behaviour of the initial estimator in~\eqref{eq:integral} is to explore the connection 
between densities of risk factors conditional on stress event $\{L\ge \ell\}$ versus $\{L=\ell\}$. Under certain conditions on the functions $f_\Xb(\xb)$ and $g(\xb)$, we obtain the following equality:
$$\argmax_{\xb\in\rbb^d} f(\xb|L\geq \ell)=\argmax_{\xb\in\rbb^d} f_{\Xb|L}(\xb|\ell),$$ 
where $f_{\Xb|L}(\cdot|\ell)$ is the conditional density of $\Xb$ given that $L = \ell$. The following proposition gives the conditions.

\begin{proposition}
Consider a random vector $\Xb$ on $\rbb^d$ with a density function $f_\Xb$. Let $g$ be a real-valued function on $\rbb^d$, and $\ell\in\rbb$. Assume that $f_\Xb$ and $g$ have continuous first derivatives, and for all feasible points $\xb$ (i.e., $g(\xb)-\ell\geq 0$), $\nabla g(\xb)\neq \zerob$ and $\nabla f_\Xb(\xb)\neq \zerob$. Then we have
$$\argmax_{\xb\in\rbb^d} f\big(\xb|g(\Xb)\geq \ell\big) = \argmax_{\xb\in\rbb^d} f_{\Xb|g(\Xb)}\big(\xb|\ell\big),$$
where $f(\cdot\mid g(\Xb)\geq \ell)$ is the conditional density of $\Xb$ given $g(\Xb)\geq \ell$ and $f_{\Xb|g(\Xb)}(\cdot|\ell)$ is the conditional density of $\Xb$ given $g(\Xb)= \ell$.
\label{prop:gt_to_eq}
\end{proposition}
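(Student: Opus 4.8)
The plan is to reduce both sides to a single deterministic optimization of $f_\Xb$ over two nested sets, and then to show these optimizations share the same solution set. First I would rewrite the left-hand side. Since conditioning on $\{g(\Xb)\geq\ell\}$ only rescales and restricts the density,
$$f\big(\xb\mid g(\Xb)\geq\ell\big) = \frac{f_\Xb(\xb)\,\oneb_{\{g(\xb)\geq\ell\}}}{\pbb\big(g(\Xb)\geq\ell\big)},$$
and the denominator does not depend on $\xb$. Hence $\argmax_{\xb\in\rbb^d} f(\xb\mid g(\Xb)\geq\ell) = \argmax_{\xb:\,g(\xb)\geq\ell} f_\Xb(\xb)$, i.e.\ the inequality-constrained maximization of $f_\Xb$ over the feasible region $\{g\geq\ell\}$. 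I assume throughout that these maximizers exist, which is implicit in the definition of the stress scenario and holds under mild decay of $f_\Xb$.

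The core step is to show that every maximizer of this problem lies on the boundary $\{g(\xb)=\ell\}$, so the inequality constraint is automatically active. Suppose $\xb^\star$ maximizes $f_\Xb$ over $\{g\geq\ell\}$ but lies in the interior, i.e.\ $g(\xb^\star)>\ell$. Then $\xb^\star$ is an unconstrained local maximizer of the $C^1$ function $f_\Xb$ on an open neighbourhood, forcing $\nabla f_\Xb(\xb^\star)=\zerob$, which contradicts the hypothesis $\nabla f_\Xb\neq\zerob$ at every feasible point. I would phrase this formally through the KKT conditions: with $\nabla g\neq\zerob$ serving as the constraint qualification, stationarity and complementary slackness imply that an inactive constraint would again force $\nabla f_\Xb(\xb^\star)=\zerob$, a contradiction. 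Consequently the maximum over $\{g\geq\ell\}$ is attained only on $\{g=\ell\}$, and since $\{g=\ell\}\subseteq\{g\geq\ell\}$ the two maxima coincide, giving $\argmax_{\xb:\,g(\xb)\geq\ell} f_\Xb(\xb) = \argmax_{\xb:\,g(\xb)=\ell} f_\Xb(\xb)$. This is exactly where both gradient assumptions do their work: $\nabla f_\Xb\neq\zerob$ pushes the optimum to the boundary, and $\nabla g\neq\zerob$ guarantees the boundary is a regular level set.

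It remains to identify the right-hand side with the same boundary problem. The density $f_{\Xb\mid g(\Xb)}(\cdot\mid\ell)$ is supported on $\{g=\ell\}$, which the condition $\nabla g\neq\zerob$ renders a smooth $(d-1)$-dimensional manifold, so its mode is well defined. I would realize this conditional density as the limit of $f(\xb\mid \ell\leq g(\Xb)\leq\ell+\varepsilon)$ as $\varepsilon\downarrow0$; on each shell this density is proportional to $f_\Xb(\xb)$ restricted to $\{\ell\leq g(\xb)\leq\ell+\varepsilon\}$, so its maximizer is $\argmax_{\ell\leq g(\xb)\leq\ell+\varepsilon} f_\Xb(\xb)$, and letting $\varepsilon\downarrow0$ yields $\argmax_{\xb:\,g(\xb)=\ell} f_\Xb(\xb)$, matching the previous display. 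Combining the three reductions gives the claimed equality. I expect the main obstacle to be this last step: making the limiting argument rigorous and, in particular, handling the reference measure on $\{g=\ell\}$. Taken with respect to surface measure, the conditional density carries an extra factor $1/|\nabla g|$, which would in general perturb the mode; the point is that the relevant object here is the pointwise mode (equivalently, the limit of modes over shrinking shells), for which the $|\nabla g|$ factor never enters, so the equality is genuine. I would make the continuity of the $\argmax$ as $\varepsilon\downarrow0$ precise using the $C^1$ regularity of $f_\Xb$ and $g$ together with the nonvanishing gradients.
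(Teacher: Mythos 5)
Your proof is correct and follows essentially the same route as the paper's: both arguments show that the inequality constraint must be active at any maximizer because an interior optimum would force $\nabla f_\Xb(\xb^*)=\zerob$, contradicting the hypothesis --- the paper phrases this via a squared slack variable and a Lagrange multiplier, while you use the elementary interior-local-maximum argument and its KKT restatement, which is the same first-order reasoning. Your closing discussion of the reference measure on $\{\xb: g(\xb)=\ell\}$ and the potential $1/\|\nabla g\|$ factor addresses a genuine subtlety in the meaning of $f_{\Xb|g(\Xb)}(\cdot|\ell)$ that the paper's proof passes over silently (and only partially acknowledges in a later remark), and your shrinking-shell interpretation is a sensible way to make the stated identity rigorous.
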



\begin{proof}
Finding the $\argmax$ of conditional density $f\big(\xb|g(\Xb)\geq \ell\big)$ is actually based a constrained optimization problem expressed as:
\begin{equation}
    \min_{\xb\in\rbb^d} -f_{\Xb}(\xb),\quad \text{subject to } g(\xb) - \ell\geq 0.
    \label{eq:NLP}
\end{equation}
By introducing a squared slacking variable $\theta^2 = g(\xb) - \ell$, the problem in~\eqref{eq:NLP} can be reformulated as
\begin{equation}
    \min_{\xb\in \rbb^d,\theta\in \rbb} -f_{\Xb}(\xb),\quad \text{subject to } g(\xb) - \ell - \theta^2 = 0.
\label{eq:NLP_eq}
\end{equation}
Let $\xb^*$ be the optimum in~\eqref{eq:NLP}. If $(\theta^*)^2 = g(\xb^*) - \ell$, $(\xb^*, \theta^*)$ is an optimum in~\eqref{eq:NLP_eq}. Due to the fact that $\xb^*$ is definitely a feasible point, we have $\nabla g(\xb^*)\neq \zerob$, and thus there exists a unique Lagrange multiplier $\lambda^*\in \rbb$ such that
\begin{equation}
    -\nabla f_{\Xb}(\xb^*) + \lambda^* \nabla g(\xb^*) =0;
    \label{eq:sta_con}
\end{equation}
and
\begin{equation}
   \lambda^*\theta^* = 0\Rightarrow \lambda^*{\sqrt{g(\xb^*) - \ell}} = 0. 
   \label{eq:comp_slack}
\end{equation}
Furthermore, the fact that $\nabla f_{\Xb}(\xb^*)\neq \zerob$ indicates $\lambda^*\neq 0$ from~\eqref{eq:sta_con} and thus $g(\xb^*)-\ell = 0$ from~\eqref{eq:comp_slack}. Note that $\{\xb:g(\xb) - \ell = 0\}$ is a subset of $\{\xb:g(\xb) - \ell \geq 0\}$, hence $\xb^* = \argmax f_{\Xb|g(\Xb)}\big(\xb| \ell\big)$.
\end{proof}

In the problem of RST, the threshold $\ell$ is usually large. When there are some components of $\Xb$ that are strongly or moderately dependent with $L$, large threshold $\ell$ will make the constrained region to be extreme and thus the density function $f_{\Xb}(\xb)$ to be monotone for these components. This suggests that the condition $\nabla f_{\Xb}(\xb)\neq \zerob$ for all feasible points $\xb$ can be satisfied.

With this connection, rather than to estimate $\argmax f(\xb|L\geq \ell)$, we can estimate $\argmax f_{\Xb|L}(\xb| \ell)$, which can be easily obtained from the estimated joint density of $(\Xb, L)$. In our simulation studies, this estimator is found to have good performance as well as being more computationally efficient than the other two copula-based estimators discussed above.

\begin{remark}
{\rm Note that under the setting when $L$ is a function of $\Xb$, the joint density of $(\Xb, L)$ should be degenerate and it is not sensible to estimate it. However, when we estimate the joint density with vine copulas, estimation is done pairwisely, so an estimated joint density can be obtained. In our simulation and application studies, we found the estimator based on $\argmax f_{\Xb|L}(\xb|\ell)$ also shows good performance.} 
\end{remark}

\begin{remark}
{\rm Although the copula-based methodologies can deal with independent and dependent cases at the same time,  to reduce computational complexity in medium and large portfolios, we can choose to remove independent risk factors before applying the framework, as the corresponding stress scenarios for the independent risk factors are just their marginal modes. To be more specific, suppose $\Xb_1$ contains risk factors which are strongly dependent with portfolio losses, while $\Xb_2$ contains risk factors that have only very weak dependence with portfolio losses. 
We can split our estimator $\widehat{\mb}(\ell)$ into $\widehat{\mb}_1(\ell)$ and $\widehat{\mb}_2(\ell)$. The second part $\widehat{\mb}_2(\ell)$ can be obtained by estimating the marginal mode of $\Xb_2$, while the first part $\widehat{\mb}_1(\ell)$ is calculated using one of the copula-based methods above using data on $(\Xb_1, L)$. Under this situation, $L$ is not determined by $\Xb_1$ anymore and the joint density of $(\Xb_1, L)$ exists.}
\end{remark}

\section{Simulation studies}
\label{sec::sim}
In this section, we do simulation studies to examine the performance of the proposed estimation procedure in this paper and compare with the estimator proposed in \cite{Glasserman_etal2015} (denoted as GKK). Two models are considered in the simulation studies. The first model is a bivariate t distribution, where the assumptions of the GKK method are fully satisfied. The second model is a 4-dimensional distribution with an R-vine copula and different skew-t distributed marginals based on the fit to Portfolio A data, which mimic the behavior of real financial portfolios. For each simulation setting, we evaluate the following three copula-based estimators of stress scenarios:
\begin{itemize}
    \item Estimator that equates the conditional densities under conditioning on $\{L\ge\ell\}$ and $\{L=\ell\}$:
    $$\widehat{\mb}(\ell) = \argmax_{\xb\in\rbb^d} \frac{\hat{f}_{\Xb,L}(\xb, \ell)}{\hat{f}_L(\ell)}\hspace{3cm} (\CM_1);$$
    \item Estimator that uses an estimate of function $g$ obtained via vine regression:
    $$\widehat{\mb}(\ell) = \argmax_{\xb\in\rbb^d} \hat{f}_{\Xb}(\xb)\cdot\onebb\{\hat{g}(\xb)\geq \ell\}\hspace{2cm} (\CM_2);$$
    \item The naive estimator:
    $$\widehat{\mb}(\ell) = \argmax_{\rbb^d} \hat{f}_{\Xb}(\xb)\cdot \big[1 - \hat{F}_{L|\Xb}(\ell|\xb)\big]\hspace{2cm} (\CM_3).$$
\end{itemize}

To make comparison among different estimators, we use the following three summary measures: mean percentage error (MPE) and root mean squared percentage error (RMSPE) for each univariate component of the $d$-dimensional stress scenario, and the average $L_2$-norm of differences (ML2) as a combined measure. Specifically, for a scalar functional~$\theta$, if $\hat{\theta}_1,...,\hat{\theta}_r$ denote its estimates in each of $r$ simulated samples, the MPE and RMSPE are given by
$$\text{MPE} = \frac{1}{r}\sum_{j=1}^r \frac{\hat{\theta}_j - \theta}{\theta}\cdot 100\%\quad\text{and}\quad\text{RMSPE} = \sqrt{\frac{1}{r}\sum_{j=1}^r \frac{(\hat{\theta}_j - \theta)^2}{\theta^2}}\cdot 100\%.$$
For a vector-valued functional~$\tb$ with estimates denoted by $\hat{\tb}_1,...,\hat{\tb}_r$ from the $r$ simulated samples, the ML2 measure is given by $$\text{ML2} = \frac{1}{r}\sum_{j=1}^r||\hat{\tb}_j - \tb||_2.$$

As in any estimation problem, even if an estimator is unbiased, sampling variability will lead to its value being above or below the true value of the parameter or functional that it is designed to estimate. In the context of stress scenario estimation, this means that in some cases the portfolio loss corresponding to an estimate of the stress scenario at threshold~$\ell$ may actually not exceed the specified threshold and this may be of practical concern. To assess this aspect of stress scenario estimation, we also report the percentage of the total number of replications~$r$ that resulted in portfolio losses greater or equal to threshold~$\ell$, denoted as $E_r$.  


\subsection{Bivariate t distribution}\label{ss1}

As a first data generating process for risk factors $\Xb$, we consider a bivariate t distribution with mean vector $\mub = \zerob$, the degrees of freedom parameter $\nu = 4$ and covariance matrix $\Sigma = \left(\begin{matrix}
   1 & 0.5   \\
   0.5 & 1   \\
   \end{matrix} \right)
$. The portfolio loss $L$ is set to a linear combination of components of $\Xb$: $L = 0.7X_1 + 0.3X_2$. The threshold $\ell$ is chosen as the $0.99$-quantile of the distribution of $L$. We generate $r = 500$ random samples of size $n = 3000$, resulting in 500 stress scenario estimates for each method. Based on these estimates, the sampling density plots are displayed in Figure~\ref{fig:t_d2_den} and the corresponding MPE, RMSPE and ML2 summary statistics are provided in Table~\ref{sim:t_d2}.

\begin{figure}[ht]
\centering
\subfigure{\includegraphics[scale=0.55]{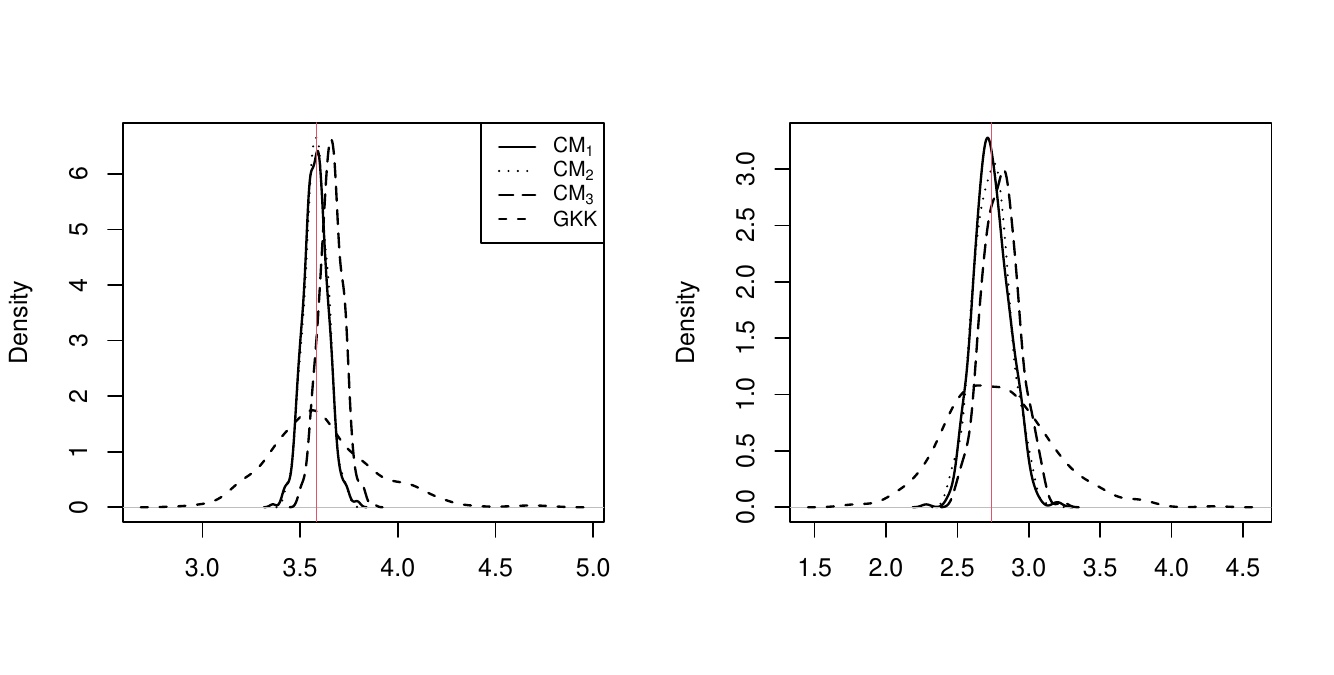}}
\caption{Sampling densities of the two marginal components of stress scenario estimates in the setting of Section~\ref{ss1}, where risk factors follow a bivariate t distribution. The vertical lines indicate the true values of stress scenario marginal components.}
\label{fig:t_d2_den}
\end{figure}

\begin{table}[H]
\centering
\small
\caption{Summary statistics for stress scenario estimates in the setting of Section~\ref{ss1}, where risk factors follow a bivariate t distribution.}
\begin{tabular}{c|cc|cc|cc}
\hline
              & \multicolumn{2}{c|}{MPE(\%)}       & \multicolumn{2}{c|}{RMSPE(\%)}   & ML2  & $E_r$ \\
& $X_1$    & $X_2$       & $X_1$    & $X_2$   & ($\times 100$)  & (\%)\\ \hline\hline
 $\CM_1$    & -0.098  & {\bf 0.005}     & 1.743   & {\bf 4.534}  & {\bf 11.852} & 47.0\\
  $\CM_2$    &{\bf -0.044}    & 0.089  & {\bf 1.656} & 4.601 & 12.012 & 47.6\\
$\CM_3$ & 2.124  & 2.368    & 2.733  & 5.270  & 15.639 & 95.4\\\hline\hline
GKK          & 0.828 & 1.687 & 7.618   & 13.379 & 38.776  & 50.4\\ \hline
\end{tabular}
\label{sim:t_d2}
\end{table}

From the sampling density plots, we observe that all three copula-based methods have similar variability. The naive estimator $\CM_3$ shows a visible positive bias for both marginal components, while $\CM_1$ and $\CM_2$ estimators perform best in terms of marginal MPE and RMSPE measures as well as the combined measure ML2. The GKK estimator demonstrates good performance in terms of bias. This is to be expected since the simulation setting satisfies all of the assumptions under  which the GKK estimator was designed. However, due to a relatively large threshold~$\ell$, the GKK estimator also exhibits a much larger variance compared to the copula-based estimators.


While a bias is usually an undesired characteristic of a statistical estimator, a positive bias of the $\CM_3$ estimator may actually make this estimator a conservative option from the risk management perspective. In 95\% of the simulated samples, estimated stress scenarios led to a portfolio loss above the specified threshold~$\ell$; see values of $E_r$ in Table~\ref{sim:t_d2}. In contrast, the other three estimators resulted in stress scenarios for which in only about $50\%$ of simulated samples, portfolio losses exceeded the threshold. It should be noted that in this context, it is especially important to have an estimator with a small variance so as to avoid significant deviations from the desired level of how extreme the portfolio loss is to be at the selected stress scenario.


\subsection{Meta-vine distribution with skew-t marginals}\label{ss2}

We also consider a data generating process that comes from a model fitted to a real dataset. As a dataset, we took portfolio~A data  described in Section~\ref{spA} and fitted univariate skew-t marginals (see~\ref{appen::margin} for details) and an R-vine copula for the dependence structure. We refer to the resulting distribution as a meta-vine distribution with skew-t marginals. The contour plots of the fitted pair copulas on normal scores are given in Figure~\ref{fig:sim_cont}. For simulation of risk factors, we first generate random samples on the uniform scale from the fitted R-vine copula, and then transform them to the original scale via the quantile transform of the fitted skew-t marginals. As a final step, to evaluate portfolio losses based on the simulated values of risk factors, portfolio function~$g$ is set to the fitted value from a linear regression of the portfolio loss on risk factors, leading to the map of the form:
$$g(\xb) = -1.944x_{1} - 0.018 x_{2} + 0.011 x_{3} + 0.011 x_{4},\qquad \xb\in\rbb^4.$$
While other regression techniques could be used, for this particular portfolio we found the linear structure to be realistic with R-squared coefficient close to 0.99.

\begin{figure}[ht]
\centering
\includegraphics[scale=0.3]{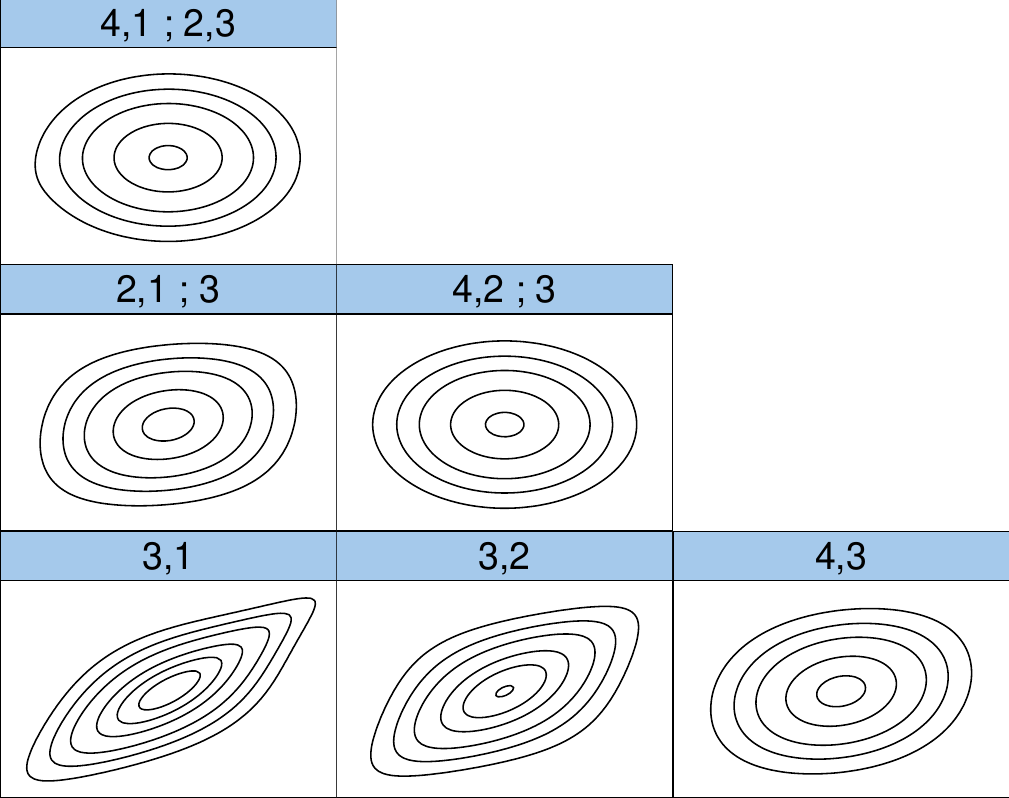}
\caption{Contour plots of fitted pair copulas on normal scores for portfolio A data (from top to bottom: $\TC_3, \TC_2$ and $\TC_1$). Variables $X_1, X_2, X_3$ and $X_4$ are denoted by 1,2,3 and 4, respectively.}
\label{fig:sim_cont}
\end{figure}

\subsubsection{Results}
For this model, we generate 100 random samples of size $n = 3000$. The threshold $\ell$ is taken as the $0.99$-quantile of actual losses for portfolio A, giving $\ell = 0.03$. The true value of the stress scenario at threshold $\ell$ is computed numerically and is equal to $\mb(\ell)=(-1.54\times 10^{-2}, -5.41\times 10^{-3}, -1.05\times 10^{-2}, -1.96\times 10^{-5})$. 

For this simulation study, we also include an estimator that is obtained by assuming function $g$ is known:
$$\widehat{\mb}(\ell) = \argmax_{\xb\in\rbb^4} \hat{f}_{\Xb}(\xb)\cdot\onebb\{-1.944x_{1} - 0.018x_{2} + 0.011x_{3} + 0.011x_{4}\geq \ell\},$$
and denote it by $\CM_*$. This ``oracle" estimator, which uses the knowledge of the true form of function $g$, can give us information about the magnitude of the error introduced by having to estimate function~$g$. 

The results of the simulation study are summarized in Table~\ref{sim:PA}. First of all, we can see that estimator~$\CM_*$ outperforms all the other estimators for three of the four risk factors, thus confirming that the knowledge of function~$g$ is important for accurate estimation of stress scenarios. Of course, in practice, function~$g$ is not always known and its values can only be obtained for a sample of risk factor values based on internal (black-box) portfolio valuation models. 

The three copula-based estimators show very similar performance on the basis of the root mean squared percentage error and combined measure ML2. The best performing estimator in this setting, especially in terms of bias, is $\CM_2$. This is the estimator that incorporates explicit modelling of function~$g$ into the estimation procedure. Perhaps not surprisingly, the GKK estimator does not do well in this setting due to the fact that the assumption of an elliptical distribution for risk factors does not hold here.

When comparing the performance of estimators across the four marginal components, the most accurate and precise results are achieved for the first component $X_1$, which contributes most to the portfolio loss, and least accurate results are obtained for the last component $X_4$. By reviewing scatter plots and correlation coefficients between $X_i$'s ($i = 1,2,3,4$) and $L$, it is found that $X_1$ exhibits strongest dependence with $L$, while $X_4$ displays weakest dependence (with an absolute value of the Kendall's tau coefficient of approximately 0.09). So the performance of stress scenario estimators tends to deteriorate with weakening of dependence between risk factors and portfolio losses. This pattern is particularly apparent for the GKK estimator; an implicit model assumption for the GKK estimator is that $X_i$'s and $L$ are (tail) dependent when the underlying distribution is heavy-tailed.

\begin{table}[H]
\centering
\small
\caption{Summary statistics of stress scenario estimates in the setting of Section~\ref{ss2}, where risk factors follow a 4-dimensional meta-vine distribution with skew-t marginals. The values for estimator~$\CM_*$ are provided in grey to highlight the fact that this estimator would not generally be applicable in practice unless the form of function $g$ is available.}
\begin{tabular}{c|cccc|cccc|cc}
\hline
  & \multicolumn{4}{c|}{MPE(\%)}       & \multicolumn{4}{c|}{RMSPE(\%)}   & ML2   & $E_r$\\
& $X_1$    & $X_2$   & $X_3$    & $X_4$      & $X_1$    & $X_2$   & $X_3$    & $X_4$   & ($\times 100$)   & (\%)  \\ \hline\hline
{\color{gray}$\CM_*$}      & {\color{gray}0.007}  & {\color{gray} 1.111}  & {\color{gray}2.672}  & {\color{gray}-8.051}   & {\color{gray} 0.030}   & {\color{gray} 9.328}  & {\color{gray} 10.584} & {\color{gray} 28.962}  & {\color{gray} 0.097} & {\color{gray}100.0} \\\hline\hline
$\CM_1$    & 0.044  & 3.987  & 5.314  & {\bf -4.952}   & 0.109  & 9.992  & 11.287 & 29.473  & 0.110  & 70.0\\
$\CM_2$    & {\bf 0.041}  & {\bf 1.540}  & {\bf 2.611}  & -8.288   & {\bf 0.109}  & {\bf 9.369}  & {\bf 10.625} & 29.019 & {\bf 0.098}  & 69.0\\
$\CM_3$ & 0.672  & 2.756  & 3.179  & -8.122   & 0.680   & 9.728  & 10.858 & {\bf 29.008}  & 0.101 & 99.0\\\hline\hline
GKK          & -0.099 & 30.948 & 50.497 & -2697.350 & 7.938  & 38.539 & 55.226 & 45091.702 & 0.668  & 43.0 \\ \hline
\end{tabular}
\label{sim:PA}
\end{table}

\section{Application}
\label{sec::app}
In this section, we apply the proposed copula-based methods on the three real-life portfolios of currencies introduced in Section~\ref{sec::data}. 

The copula-based methods rely on a joint distribution fitted to the data by first modelling the univariate marginal components and then the dependence structure, in our case through a vine copula. The modelling and fit of the marginal components were reported and discussed in Section~\ref{sec::data}. In this section we will only comment on the fitted copula models. 

While it is not possible to assess accuracy of estimated stress scenarios, in our presentation of the results, we will focus on two aspects of the stress scenario estimation problem which include sampling variability and calibration in terms of balancing the plausibility and severity requirements. 

Due to the lack of large-sample theory for estimators of vine copula parameters and consequently for the copula-based stress scenario estimators, estimation uncertainty for stress scenarios cannot be quantified using asymptotic approximations. We therefore adopt a 
bootstrapping scheme. Specifically, we resample observations with replacement from the original dataset using the block bootstrap algorithm proposed in~\cite{PolitisRomano1994}, where the block sizes are randomly generated from a geometric distribution with mean $n^{1/3}$; here $n = 3077$, the original sample size.  The process is repeated 500 times to generate bootstrap samples of length~$n$. Using the resulting stress scenario estimates from each sample, we are able to construct confidence intervals for univariate components.

To assess what we refer to as calibration, we evaluate the fitted portfolio map function $\hat g$ at the estimated stress scenario $\widehat\mb(\ell)$. In the ideal situation, we would like the stress scenario estimate at threshold~$\ell$ to be perfectly calibrated giving portfolio loss at $\widehat\mb(\ell)$ equal to the desired threshold $\ell$. This ensures that the selected stress scenario meets the severity requirement ($L=g(\Xb)\ge \ell$), yet remains the most plausible data point within the risk region $\{\xb:g(\xb)\ge\ell\}$. In the application, $g$ has to be substituted with its estimate $\hat g$, which we compute using vine regression. Approximate 95\% confidence intervals are provided to account for estimation uncertainty associated with $\hat g$.


\subsection{Portfolio A}


Figure~\ref{fig:A_T1} illustrates the first tree of the R-vine fitted to Portfolio~A data. From this plot we can see that, in the first tree of the selected R-vine, the portfolio loss $L$ is only connected with the risk factor based on currency CAD. This is consistent with the results in Table~\ref{tab:A_dep}, where this risk factor is shown to have strongest dependence with portfolio losses. The other lower-order trees as well as the selected pair copulas with their estimated parameter values are provided in~\ref{appen::tree_detail}. For this portfolio, we estimate stress scenarios at threshold $\ell$ equal to the $0.99$ and $0.999$-empirical quantiles of the portfolio losses. 

\begin{figure}[ht]
\centering
\includegraphics[scale = 0.5]{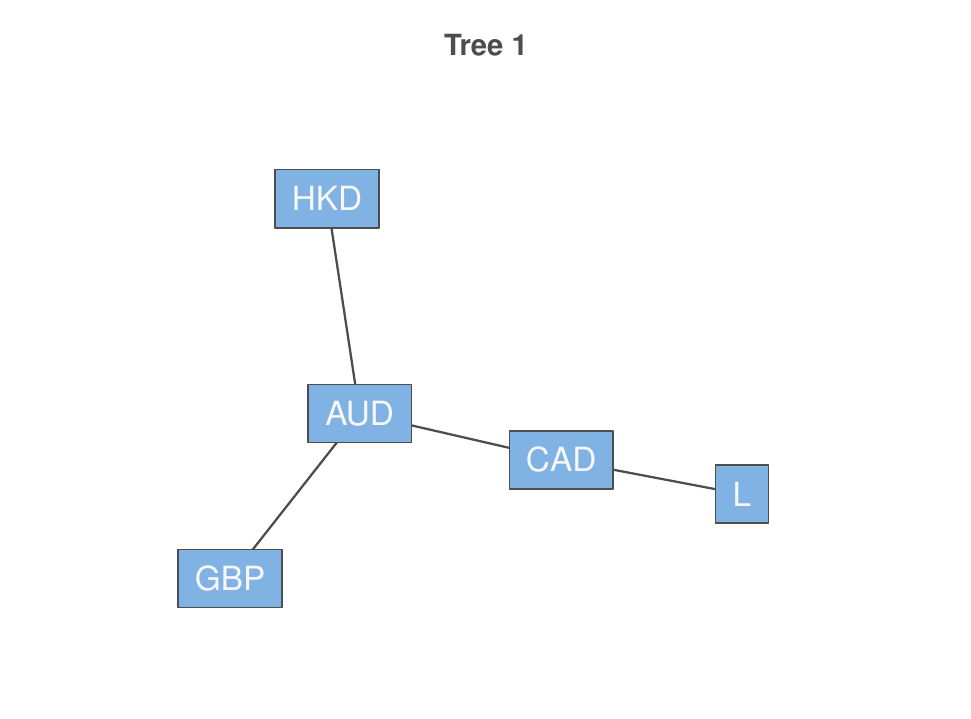}
\caption{The first tree in the R-vine selected for the portfolio A data.}
\label{fig:A_T1}
\end{figure}

Figure~\ref{fig:A_CI} displays estimated stress scenarios for each risk factor in portfolio~A against the fitted value of the portfolio loss, $\hat g(\widehat \mb(\ell))$, for two values of threshold~$\ell$  under the three copula-based methods as well as the GKK method. The horizontal segments indicate 95\% confidence intervals for each stress scenario estimate. The confidence intervals for copula-based methods are obtained using block bootstrap described earlier.
For the GKK estimator of stress scenarios, \cite{Glasserman_etal2015} propose to compute approximate confidence regions by shifting the confidence regions for the conditional mean obtained from the asymptotics of the empirical likelihood. The main observations we draw from this figure are in line with what we have seen earlier in the simulation studies. The copula-based estimators have much lower sampling variability compared to the GKK estimator as indicated by the width of confidence intervals. The estimators $\CM_1$ and $\CM_2$ also tend to be well calibrated, in particular, the $\CM_2$ estimator for which the fitted portfolio map function $\hat g$ returns exactly the specified threshold~$\ell$ at the stress scenario estimate. The naive estimator $\CM_3$ leads to the (fitted) portfolio loss at the stress scenario estimate $\widehat \mb(\ell)$ that is considerably above threshold~$\ell$. On the other hand, for this portfolio, the GKK estimator results in the (fitted) portfolio loss at the stress scenario estimate that is below threshold~$\ell$. This may be due to underestimation (in absolute value) of the stress scenario component corresponding to the risk factor based on the Canadian dollar (CAD), which is the main driver of portfolio losses here. Not surprisingly, under copula-based methods, the stress scenario estimates for risk factor linked to currency HKD remain nearly the same for the two threshold values, owing to the fact that this risk factor is only weakly associated with portfolio losses as indicated by its Kendall's tau correlation coefficient of just $-0.095$ (see Table~\ref{tab:A_dep}).

\begin{figure}[ht]
\centering
\subfigure{\includegraphics[scale=0.4]{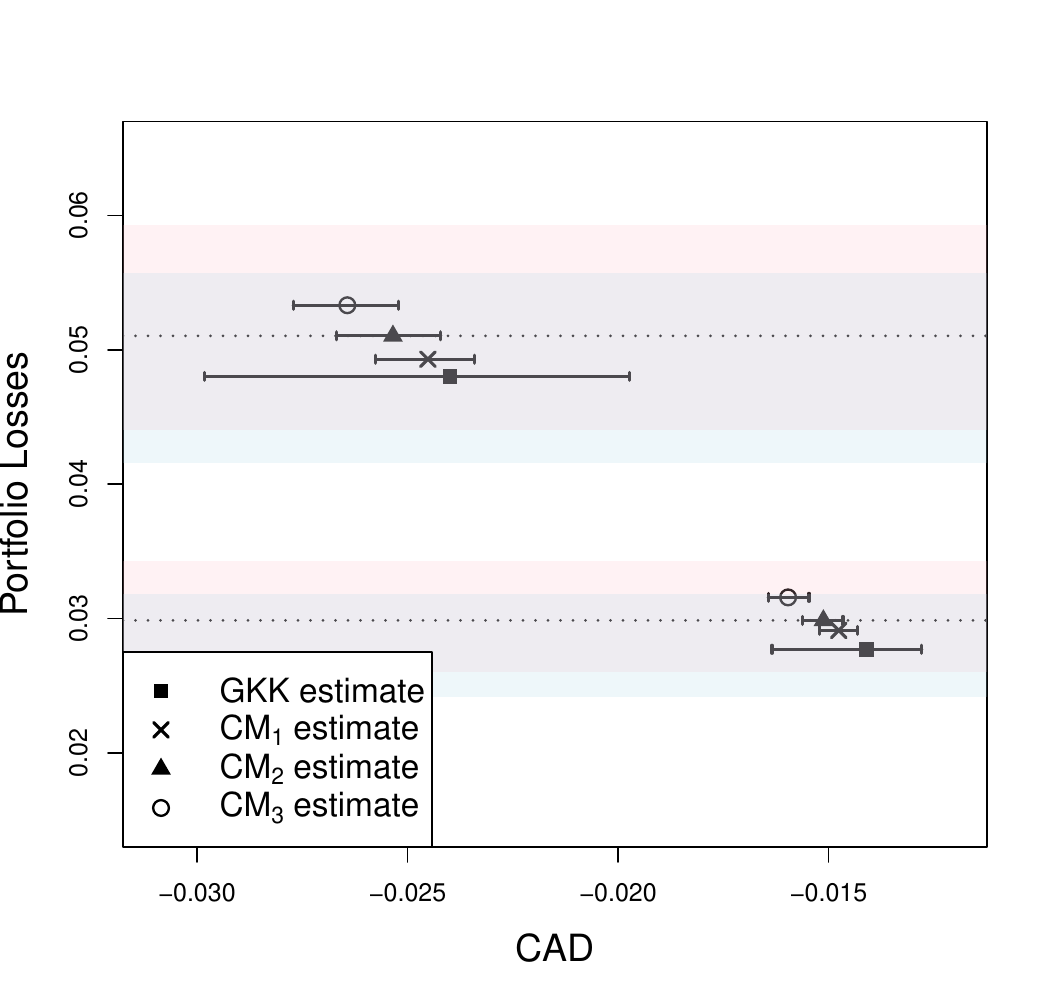}}
\subfigure{\includegraphics[scale=0.4]{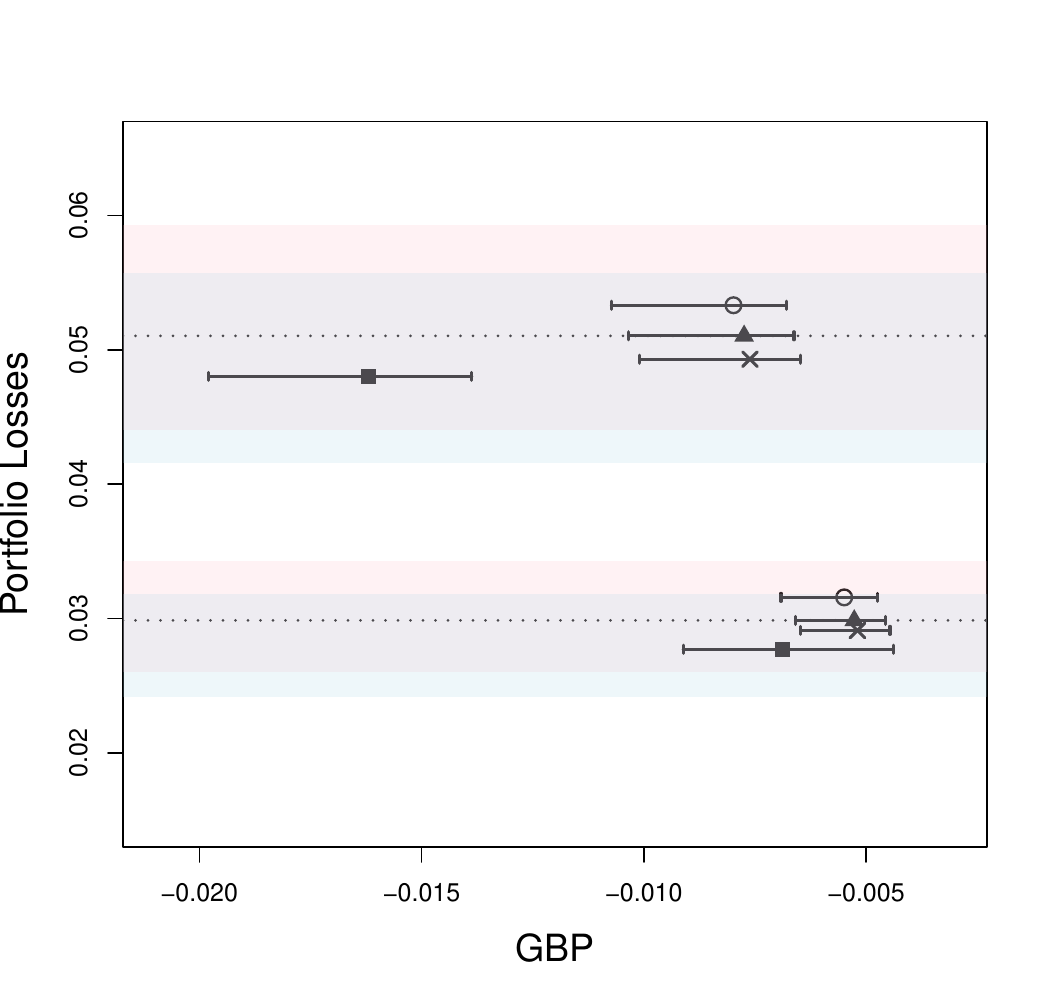}}
\subfigure{\includegraphics[scale=0.4]{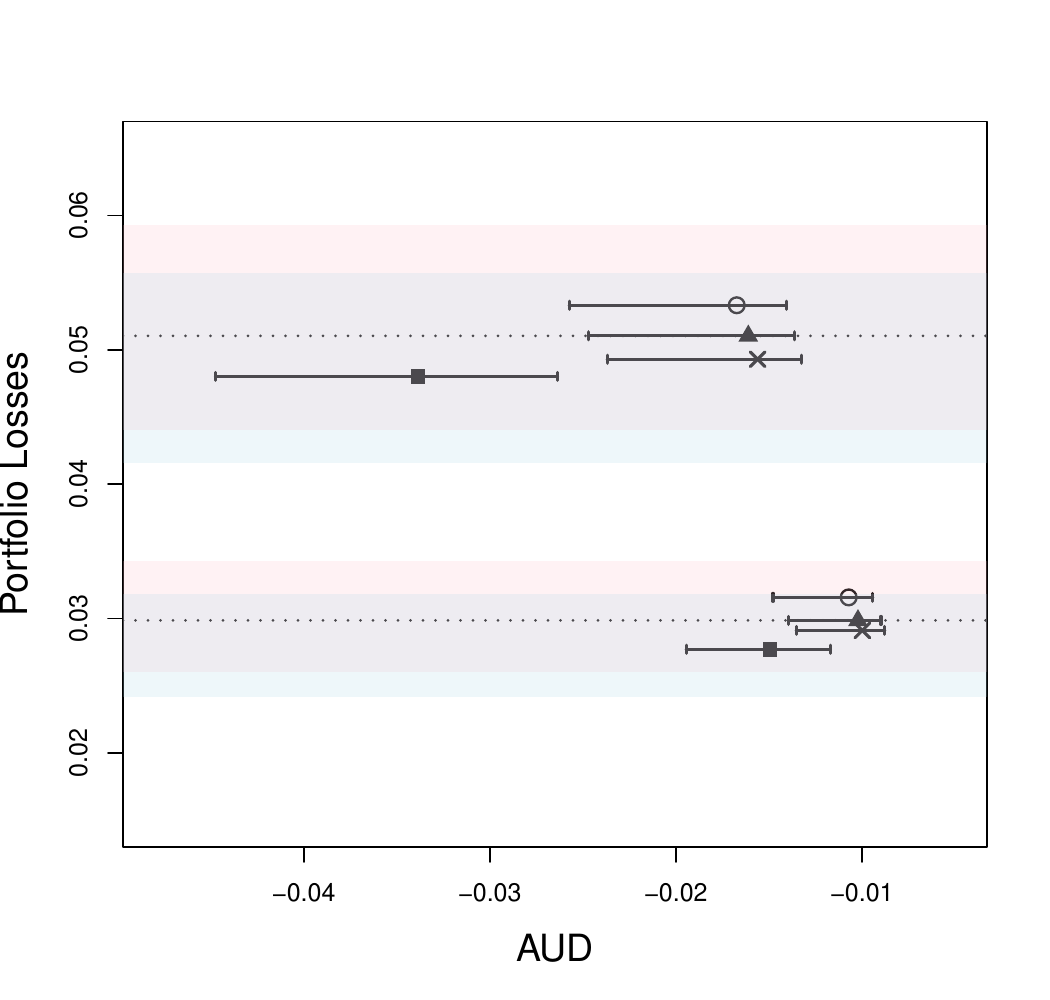}}
\subfigure{\includegraphics[scale=0.4]{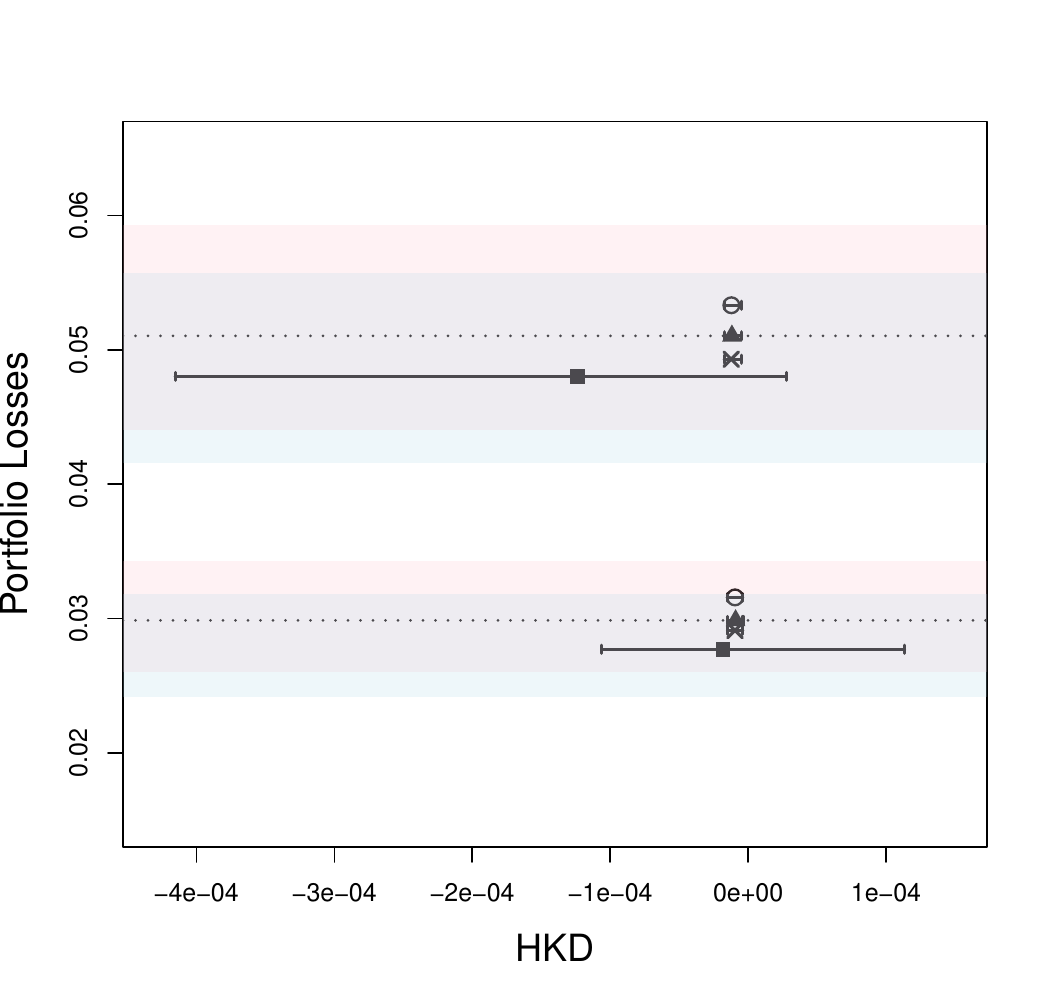}}
\caption{Marginal stress scenario estimates $\widehat m_i(\ell)$ along with 95\% confidence intervals (shown by horizontal segments) for Portfolio A. The horizontal dotted lines indicated thresholds $\ell = 0.0299$ and $\ell = 0.0510$ (corresponding to risk levels $p=0.01$ and $p=0.001$). The pink and blue shaded bands show the $95\%$ confidence intervals for predicted portfolio losses under the $\CM_2$ and GKK methods.}
\label{fig:A_CI}
\end{figure}

In addition to plotting individual risk factors against portfolio losses, we also provide in Figure~\ref{fig:A_3d} three-dimensional data clouds of risk factors together with estimates of stress scenarios. As the copula-based estimates are fairly close to each other, we only include the GKK and $\CM_2$ estimates at two threshold levels. For the lower threshold, both stress scenario estimates (indicated by the red symbols on the plots) are well within the data clouds but they are also close to each other. However, the differences in the two estimates (purple symbols) increase at the higher threshold as the stress scenarios are pushed toward the edges of the data clouds. 


\begin{figure}[ht]
\centering
\subfigure{\includegraphics[scale=0.52]{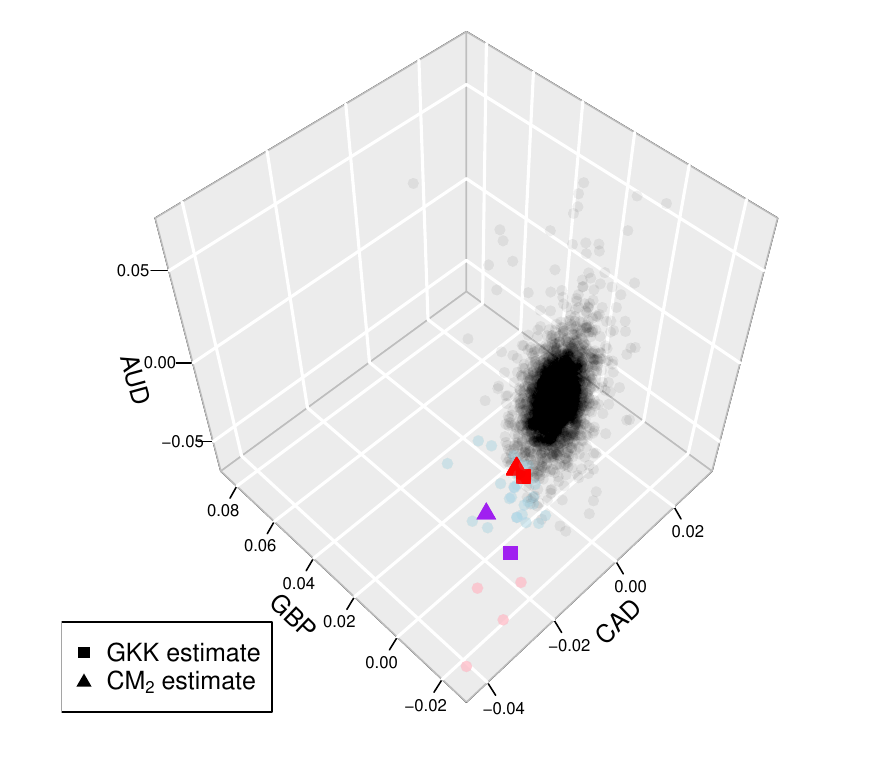}}
\subfigure{\includegraphics[scale=0.52]{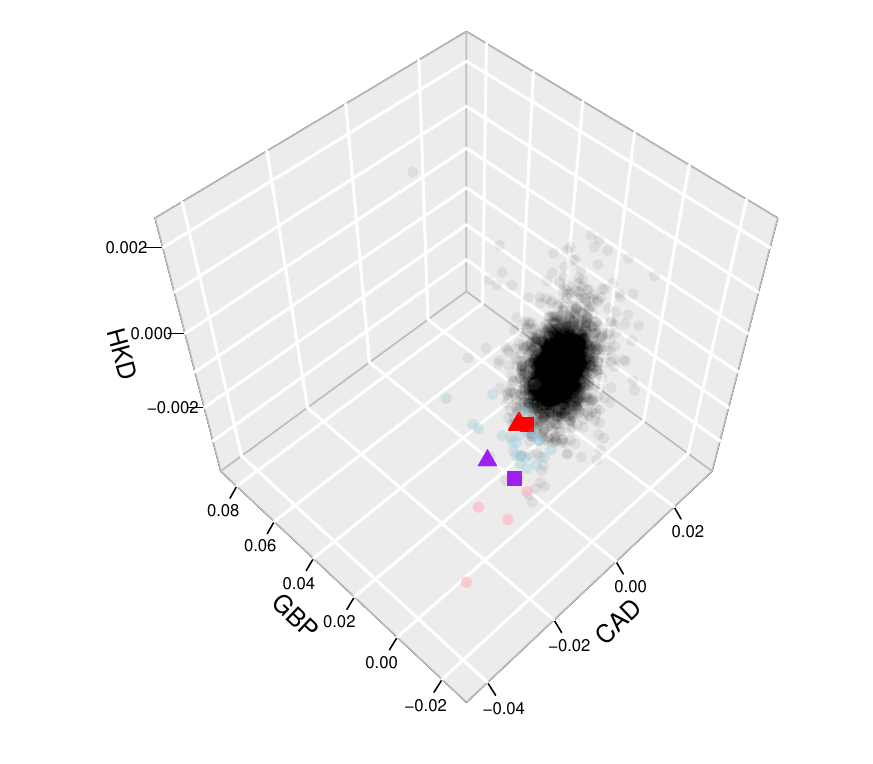}}
\subfigure{\includegraphics[scale=0.52]{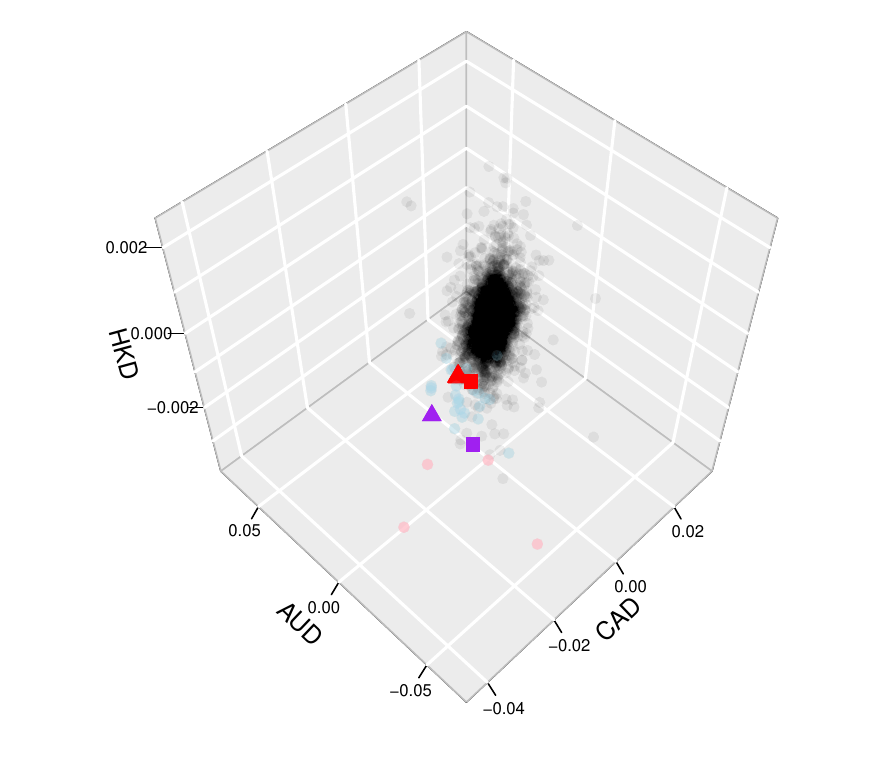}}
\subfigure{\includegraphics[scale=0.52]{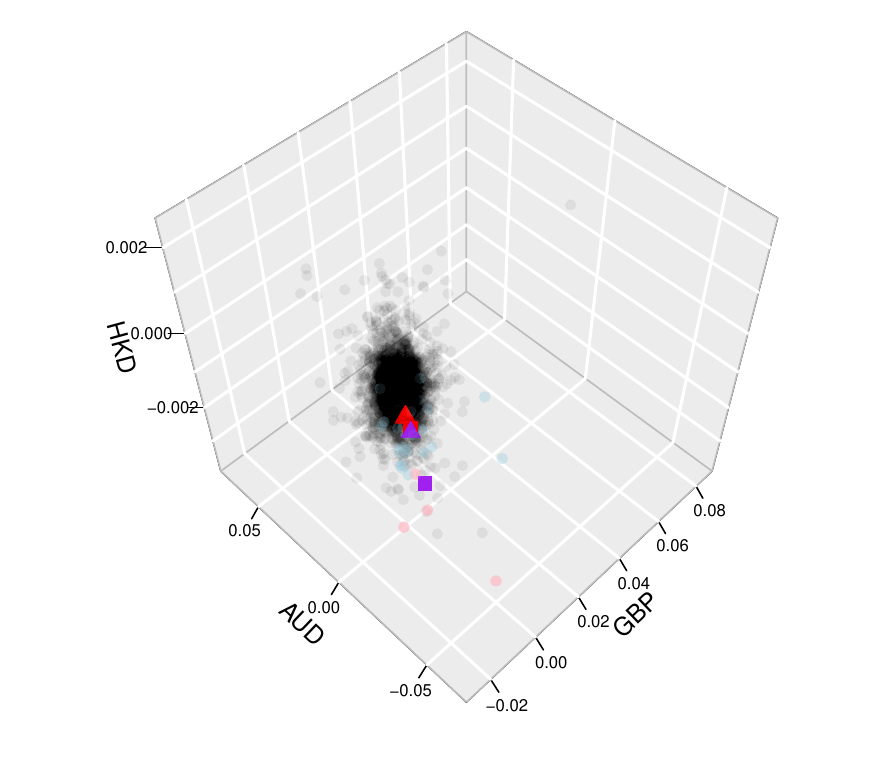}}
\caption{Three-dimensional clouds of risk factors for Portfolio A. The pink points indicate observations whose portfolio losses are greater than $\ell = 0.0510$; the lightblue points indicate observations whose portfolio losses are greater than $\ell = 0.0299$. The red triangles and squares correspond to stress scenario estimates at $\ell = 0.0299$ and the purple triangles and squares correspond to stress scenario estimates at $\ell = 0.0510$.}
\label{fig:A_3d}
\end{figure}

\subsection{Portfolio B}
\label{sec::T_portfolio}

As argued in Section~\ref{siB}, the data in portfolio~B exhibits features of a two-regime process, for which it is difficult to find a parametric model that can well capture the data structure. This prompted us to consider clustering the data, leading to one cluster with the data points for which risk factors tend to have positive association with portfolio losses and the other cluster with the points for which risk factors show negative association with portfolio losses. In our analysis, in addition to presenting stress scenario estimates for each of the two clusters, we also include estimates that ignore the two-regime nature of these data and treat them as coming from a single stationary distribution over the observation period. We report results of the stress scenario estimation for two threshold values, corresponding to the $0.99$ and $0.999$-empirical quantiles of the portfolio losses. Figure~\ref{fig:B_CI} shows stress scenario estimates at the lower threshold under the copula-based $\CM_2$ method and the GKK method. The results at the higher threshold are provided in~\ref{appen::B_CI_s}. The confidence bands for the fitted portfolio loss evaluated at the stress scenario estimates are computed using all observations. 

\begin{figure}[H]
\centering
\subfigure{\includegraphics[scale=0.4]{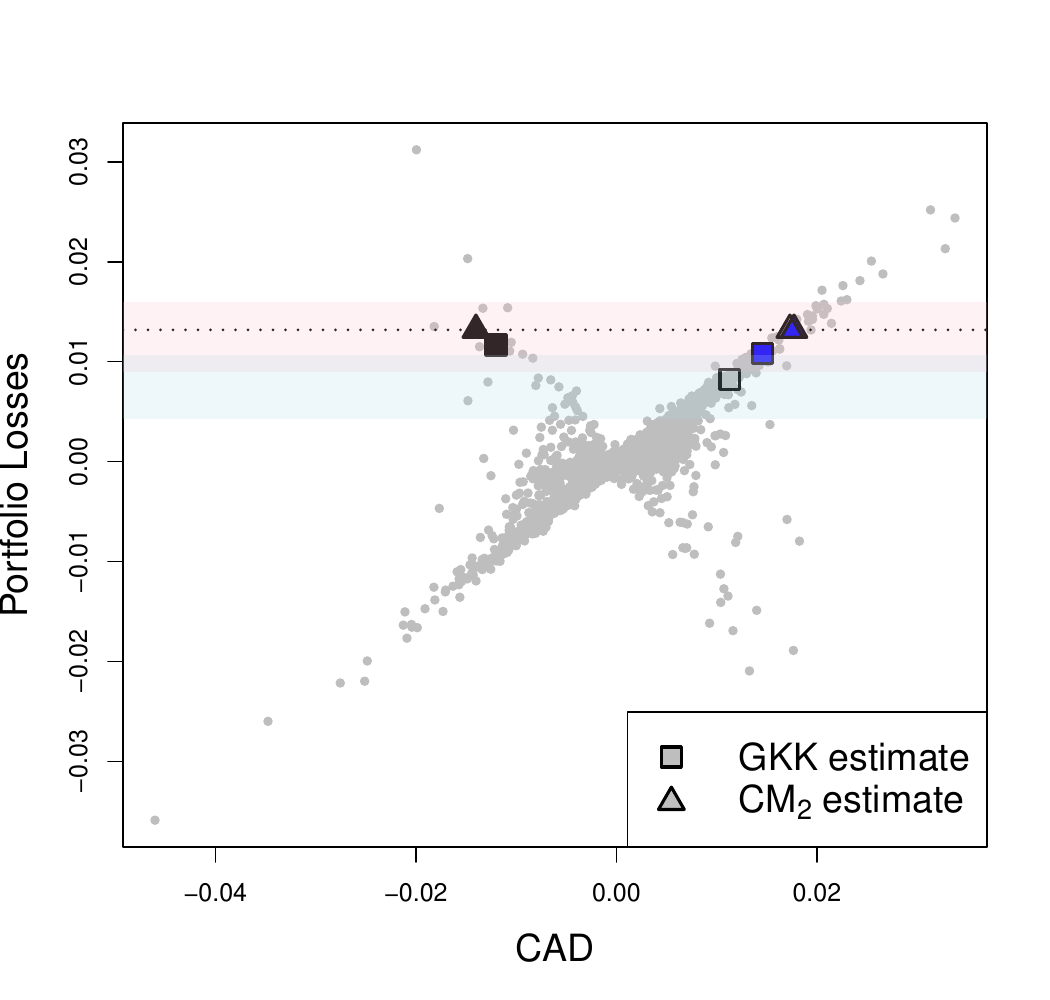}}
\subfigure{\includegraphics[scale=0.4]{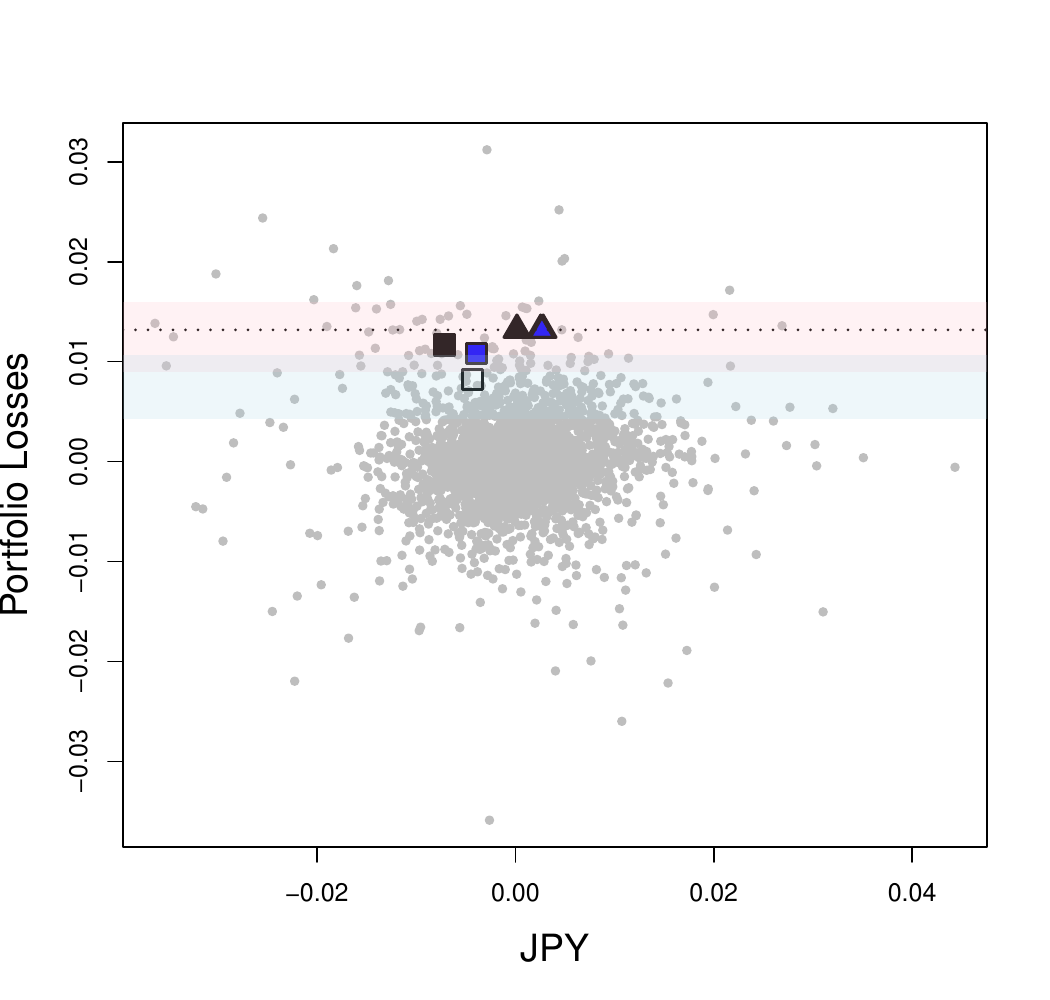}}
\subfigure{\includegraphics[scale=0.4]{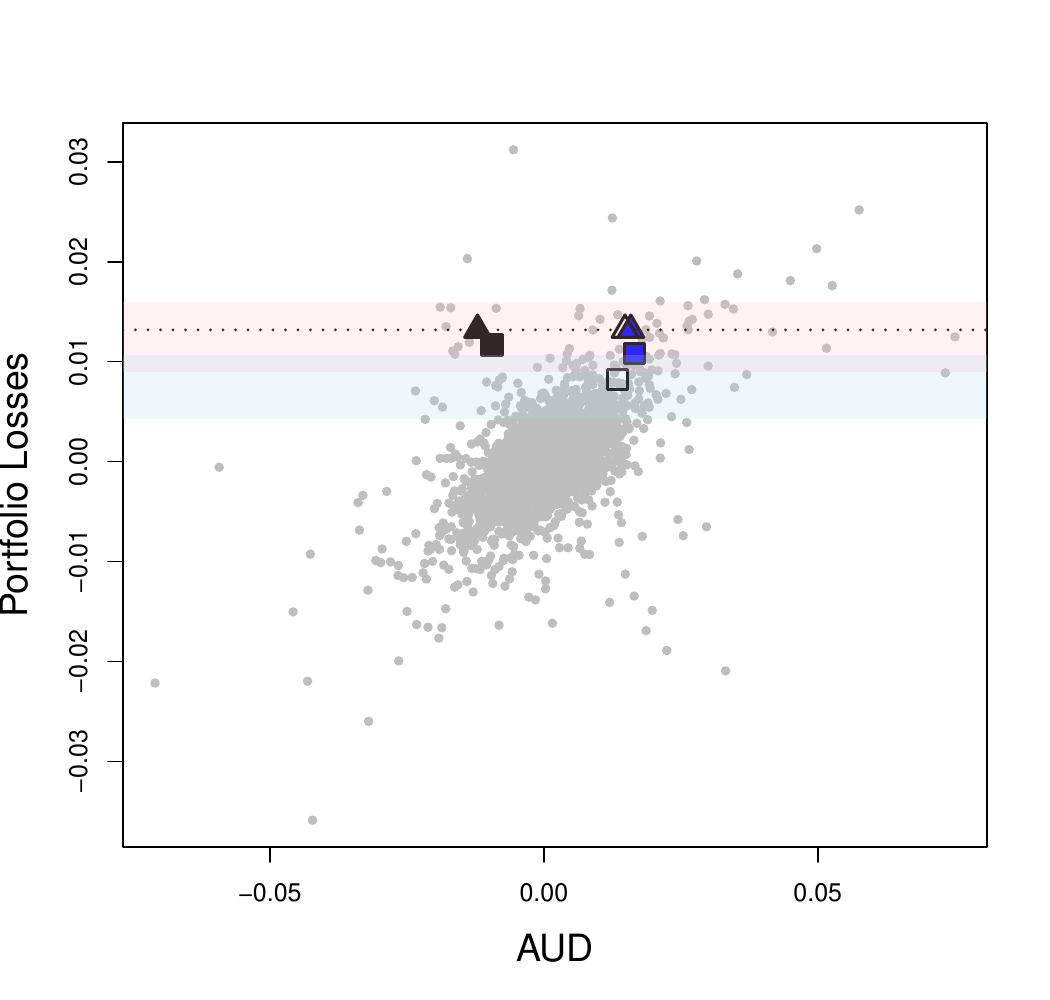}}
\subfigure{\includegraphics[scale=0.4]{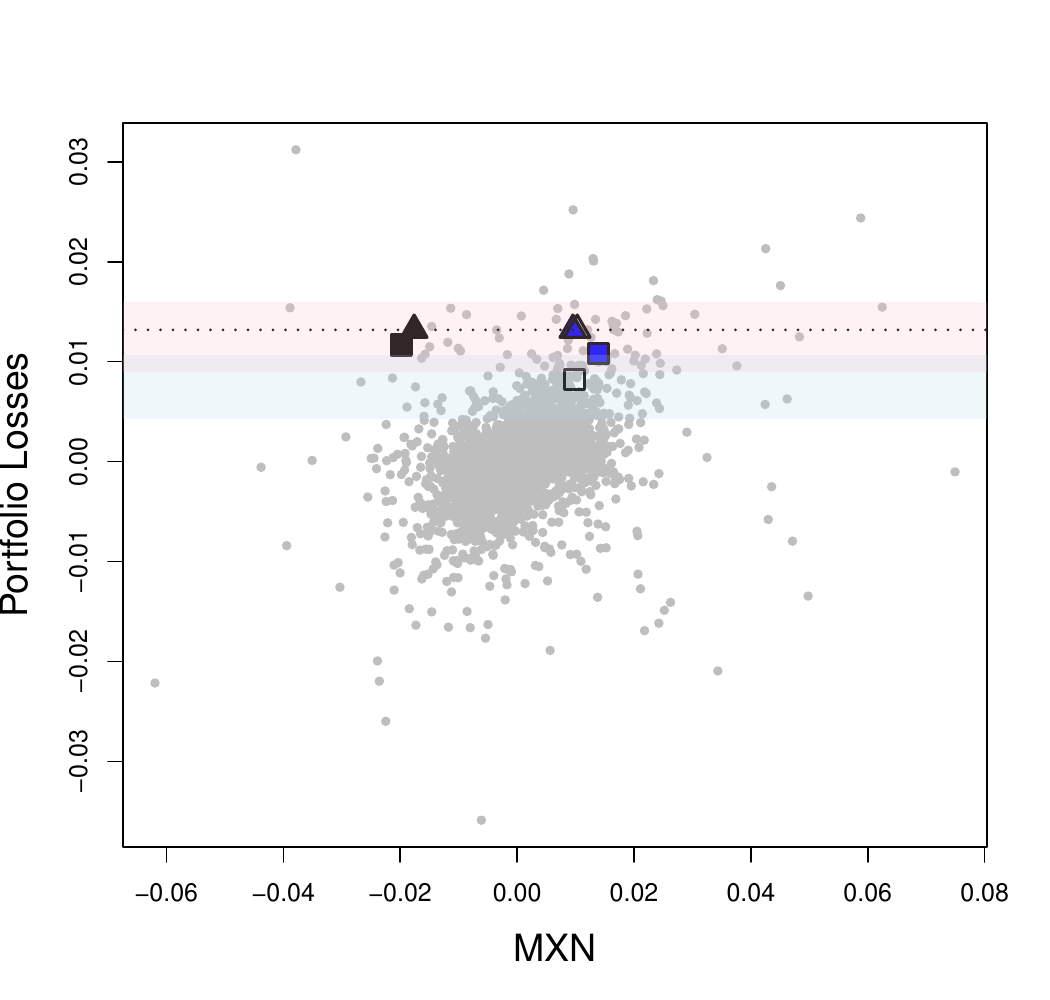}}
\subfigure{\includegraphics[scale=0.4]{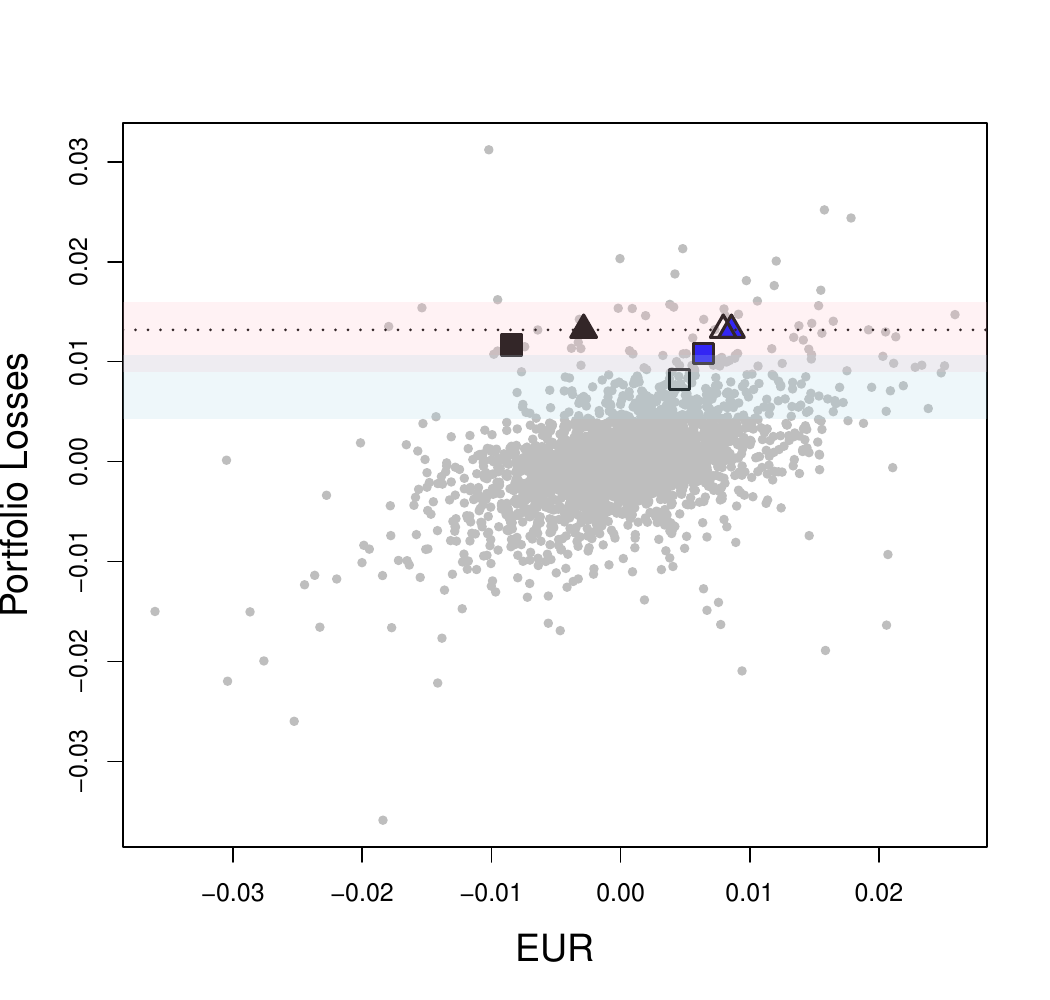}}
\caption{Plots of stress scenario estimates together with the data points for Portfolio B. The horizontal dotted lines indicate the threshold, $\ell = 0.0132$. The blue, black and unfilled symbols show, respectively, stress scenario estimates for cluster 1, cluster 2 and all of the data combined. The pink and blue shaded bands show the $95\%$ confidence intervals for predicted portfolio losses for the $\CM_2$ and GKK stress scenario estimates with all observations included.}
\label{fig:B_CI}
\end{figure}

From these plots, we see that the estimated stress scenarios for cluster~1 tend to fall along the main diagonal for the risk factors having strong positive association with portfolio losses, in particular, for CAD and AUD and to some extent for EUR. For cluster~2, in contrast, the stress scenario estimates lie close to the counter-diagonal, signalling negative association with portfolio losses.  

When we look at the stress scenarios estimated using all observations (shown with non-filled symbols), we can observe that for the copula-based methods, the estimates will be close to those of cluster~1. This is not surprising as cluster~1 data dominates the sample and hence has the most influence on pair-copula choices and their maximum likelihood parameter estimates. This can also be confirmed by comparing the contour plots of fitted pair copulas in Figure~\ref{fig:B_contour} for cluster~1 alone (left panel) and for all observations combined (right panel). We can see that using all observations leads to a similar copula model as that using observations in cluster 1. However, using all observations as opposed to treating the two clusters of data separately has a different effect on 
the GKK stress scenario estimates. The combined stress scenario estimates fall in between the two cluster-based estimates (although much closer to cluster~1 values) due to averaging of risk factor values of different signs from different clusters. This in turn leads to the associated (fitted) portfolio loss to fall even further below the specified threshold~$\ell$. In fact, the 95\% confidence band for this fitted portfolio value at the estimated combined scenario does not contain the threshold value.

\begin{figure}[ht]
\centering
\subfigure[Cluster 1]{\includegraphics[scale=0.45]{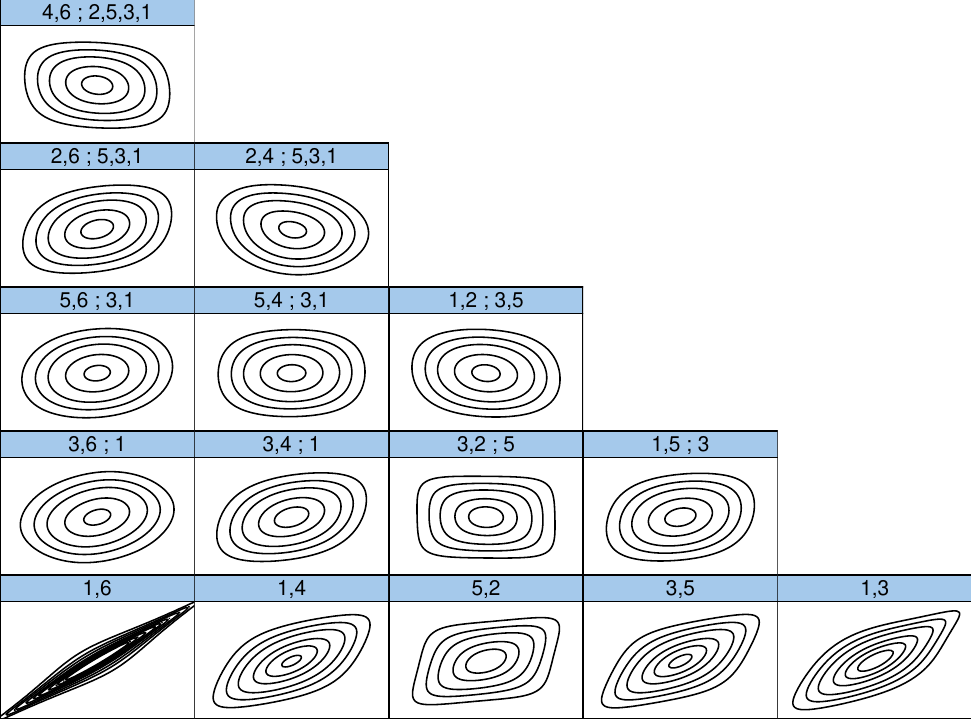}}
\subfigure[All observations]{\includegraphics[scale=0.45]{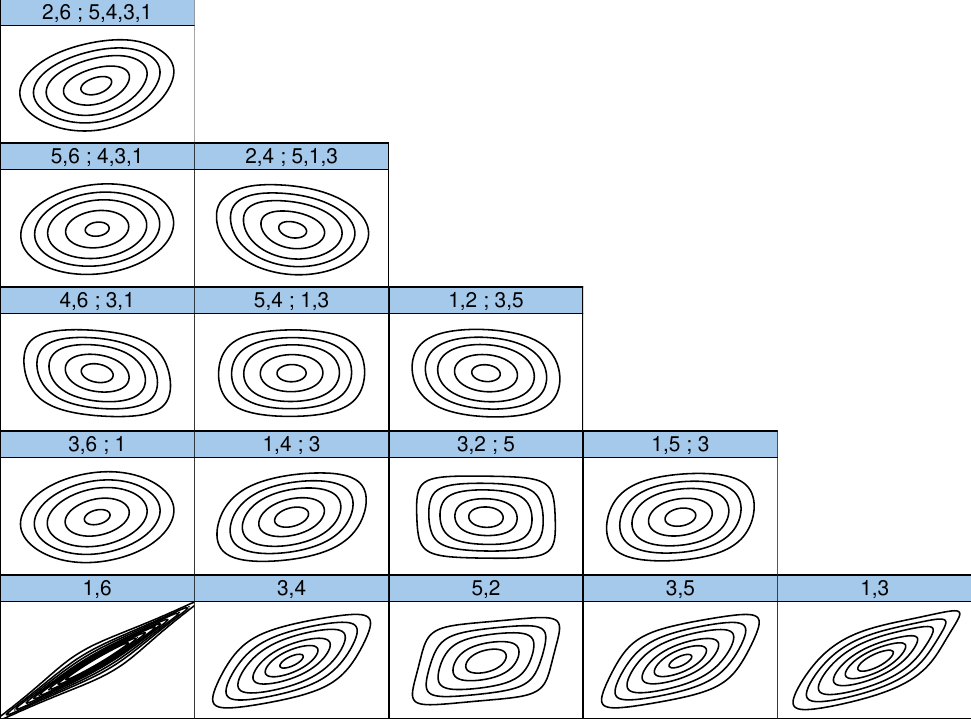}}
\caption{Contour plots of fitted pair copulas for Portfolio B (from top to bottom: $\TC_5, \TC_4, \TC_3, \TC_2, \TC_1$). Risk factor variables based on currencies CAD, JPY, AUD, MXN, EUR and portfolio loss $L$ are abbreviated as 1,2,3,4,5 and 6, respectively.}
\label{fig:B_contour}
\end{figure}

For further illustration, in Figure~\ref{fig:T_3d} and~\ref{appen::B_3d}, we show three-dimensional plots of data clouds and stress scenario estimates corresponding to pairs of risk factors and the associated portfolio losses. From this perspective, we can see that there is a better agreement between the two considered methods for cluster~1 that contains more data points. However, the difference in stress scenario estimates under the two methods is more pronounced for cluster~2 with fewer data points and also for the higher threshold. 

This portfolio serves as a nice illustration of situations where data are far away from the assumption of elliptical symmetry. The idea of clustering the data prior to scenario estimation to identify multiple regimes serves at least two purposes. First of all, it is meaningful from the applied perspective to consider multiple stress scenarios under each regime rather than a single scenario. In fact, one may also consider specifying different stress thresholds in each regime to match severity of losses under each regime. Secondly, from the statistical modelling and estimation perspective, clustering would allow to make use of methods developed under more restrictive assumptions (e.g., assuming an elliptical distribution for risk factors).  

\begin{figure}[ht]
\centering
\subfigure{\includegraphics[scale=0.5]{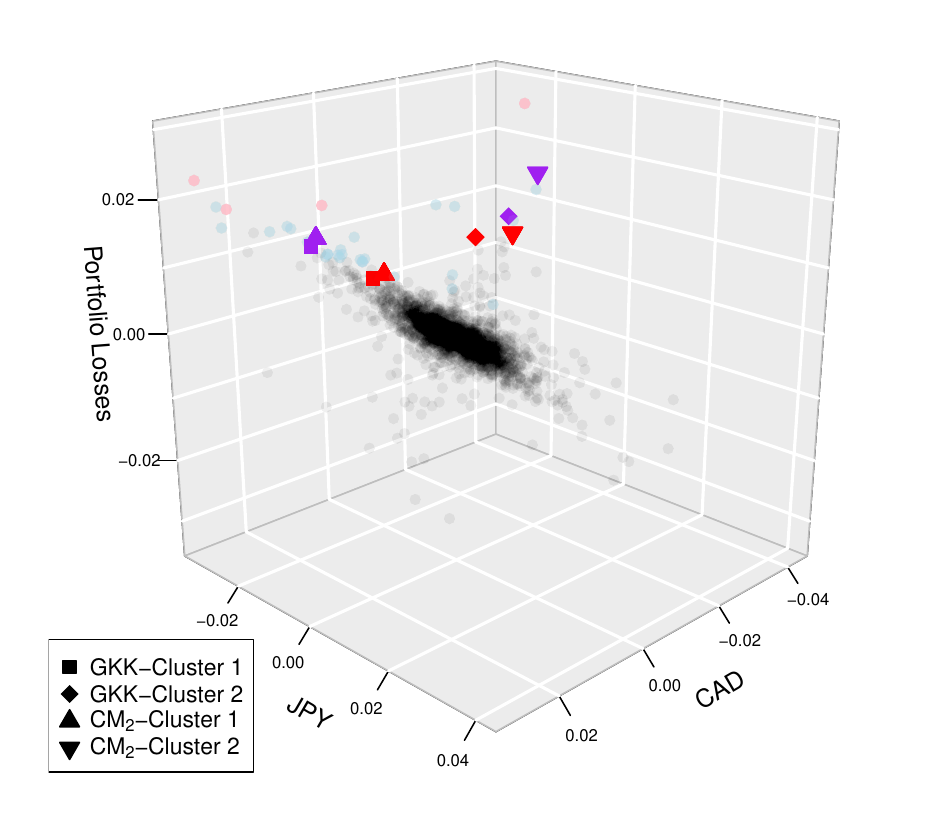}}
\subfigure{\includegraphics[scale=0.5]{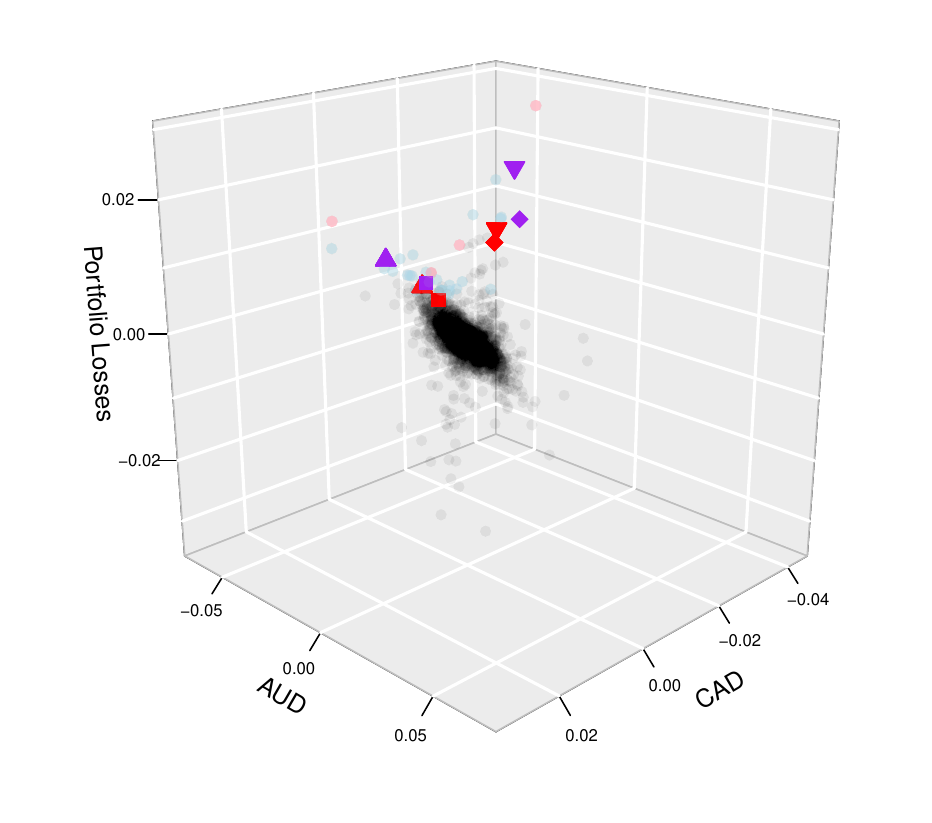}}
\caption{Plots of pairs of risk factors and associated portfolio losses for Portfolio B. The light blue and pink points indicate observations for which portfolio losses are greater than $\ell = 0.0132$ and $\ell = 0.0212$, respectively. The red and purple triangles/squares show stress scenario estimates at thresholds $\ell = 0.0132$ and $\ell = 0.0212$, respectively.}
\label{fig:T_3d}
\end{figure}

\subsection{Portfolio C}

Figure~\ref{fig:P_T1} illustrates the first tree of the R-vine selected for the data in portfolio~C. The portfolio loss variable~$L$ is only connected with the risk factor based on currency MXN; this is consistent with this risk factor having the strongest dependence with portfolio losses as supported by the value of Kendall's tau in Table~\ref{tab:P_dep}. The vine graph in this tree also tends to connect risk factors associated with currencies that are connected geographically or economically (e.g., EUR, GBP, CHF and HUF for Europe; NOK, SEK and DKK for Scandinavian countries; THB, HKD and SGD for Asia).

For this portfolio we only evaluate the $\CM_2$ estimator among the copula-based methods due to its better calibration properties; its values are also close to those based on the $\CM_1$ estimator.


\begin{figure}[ht]
\centering
\includegraphics[scale = 0.5]{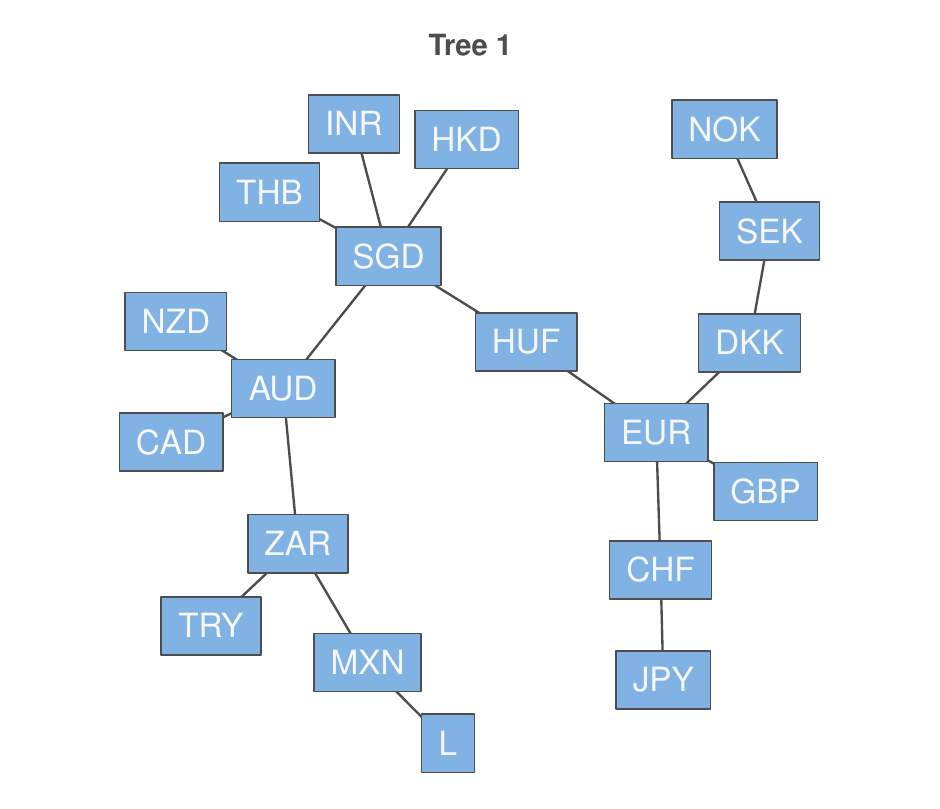}
\caption{The first tree in the R-vine selected for the portfolio C data.}
\label{fig:P_T1}
\end{figure}

We should note that, in this portfolio, we have 18 risk factors, which will result in an 18-dimensional and potentially non-convex objective function for the optimization procedure and increase the computational complexity. This problem is more severe for $\CM_2$, whose conditional density involves $\hat{g}$ from vine regression. However, exploring optimization algorithms is not the focus of this work. In our implementation, we use R package \texttt{DEoptim} (\cite{DEoptim}) to do the optimization. In order to obtain the global maximum, we consider 10 sets of initial populations and 4000 iterations. The final stress scenario estimate is obtained by selecting the one that maximizes the corresponding conditional density function among the 10 estimates resulting from different starting values. Here we only estimate stress scenarios at only one threshold value equal to the $0.99$-empirical quantiles of the portfolio losses. Figure~\ref{fig:C_est} displays the $\CM_2$ stress scenario estimates against the fitted value of the portfolio loss. The GKK stress scenario estimates are also included for comparison.


For this portfolio, the marginal components of the GKK stress scenario estimates are larger in absolute value than the copula-based estimates. This difference tends to be more pronounced for risk factors that show moderate dependence and tail dependent  with portfolio losses (refer to Table~\ref{tab:P_dep}) and for which there are a few data points with portfolio losses just above the threshold yet risk factors take on extreme extreme values (e.g., GBP and NOK). Due to inherent non-robustness of the sample mean, the GKK stress scenario estimates are pulled towards these risk factor extremes. This then results in a fitted portfolio loss at the stress scenario to be above the threshold value. For risk factors showing either very strong or very weak dependence with portfolio losses, the copula-based and GKK stress scenario estimates are fairly close to each other. This is not surprising for these data as overall they do not appear to deviate substantially from the assumption of elliptical symmetry.   

\begin{figure}[H]
\centering
\subfigure{\includegraphics[scale=0.26]{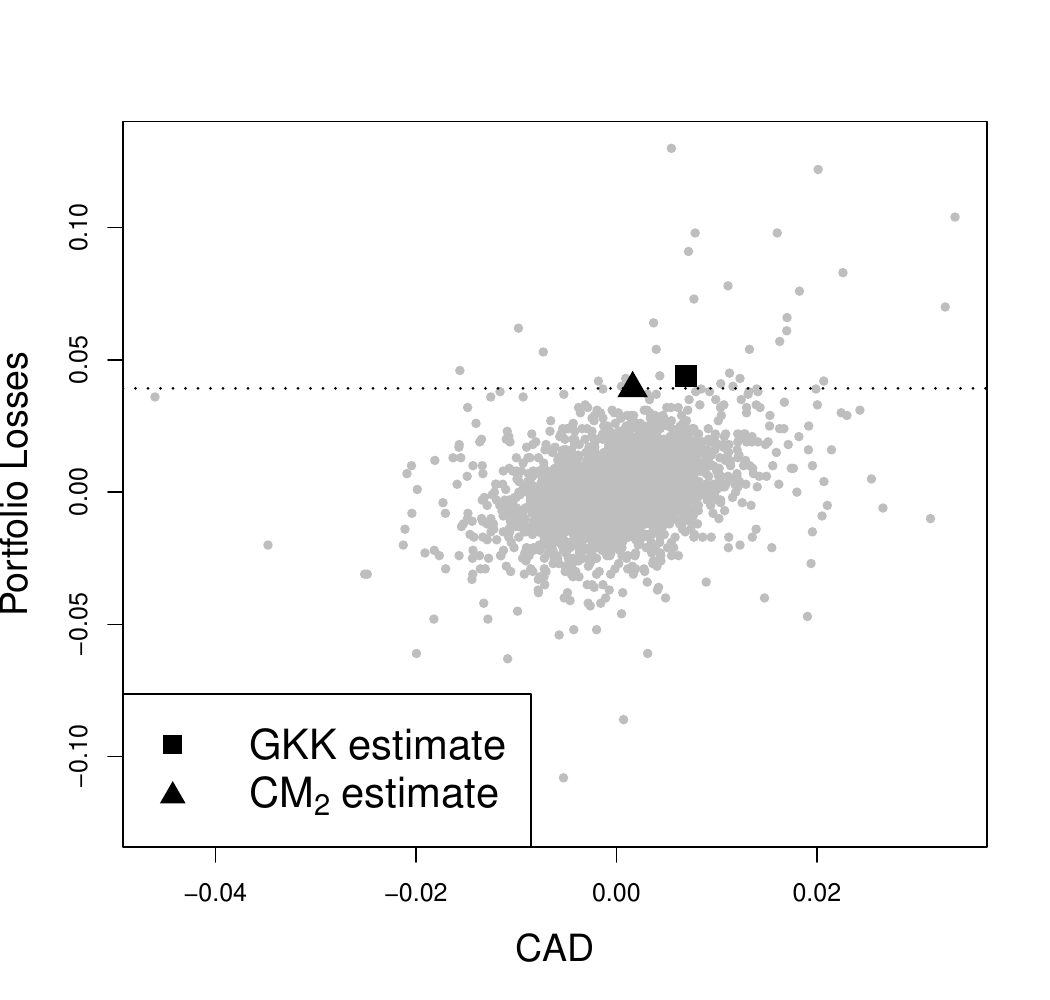}}
\subfigure{\includegraphics[scale=0.26]{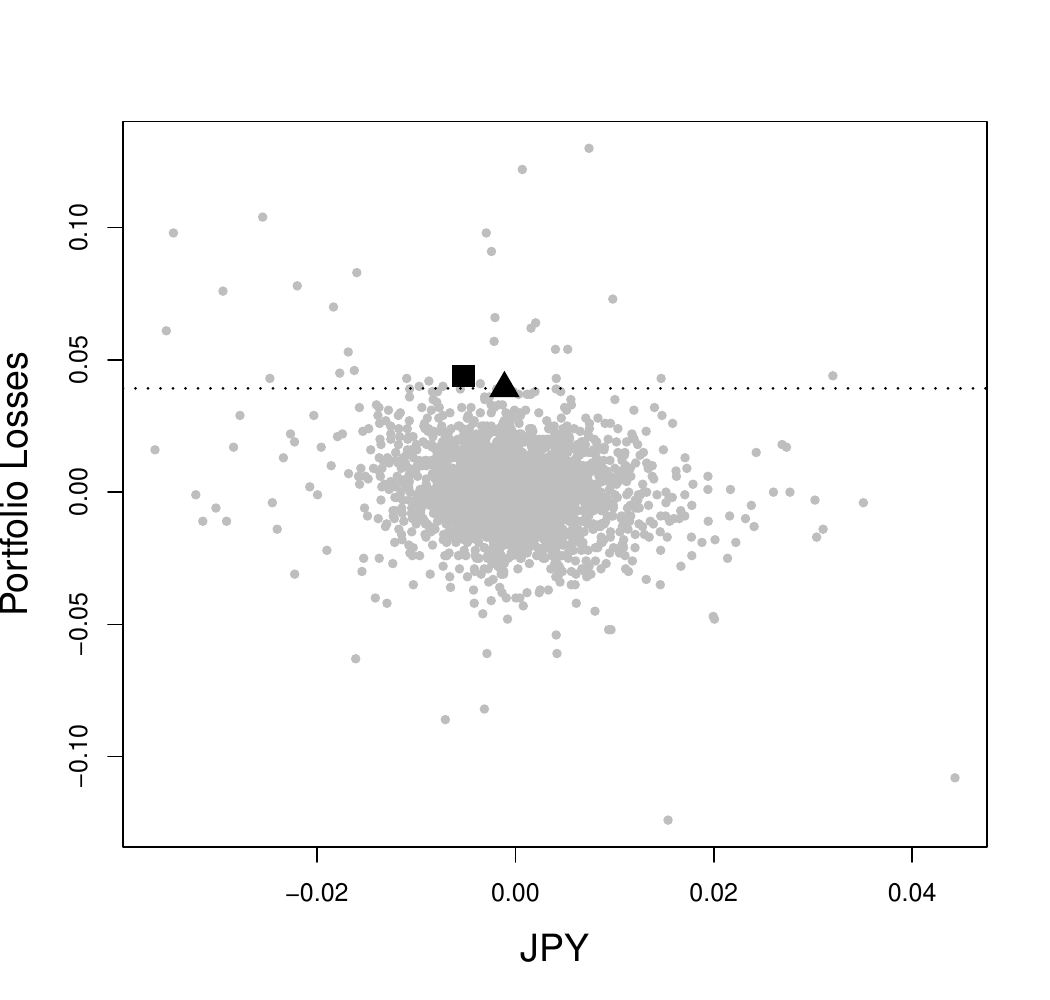}}
\subfigure{\includegraphics[scale=0.26]{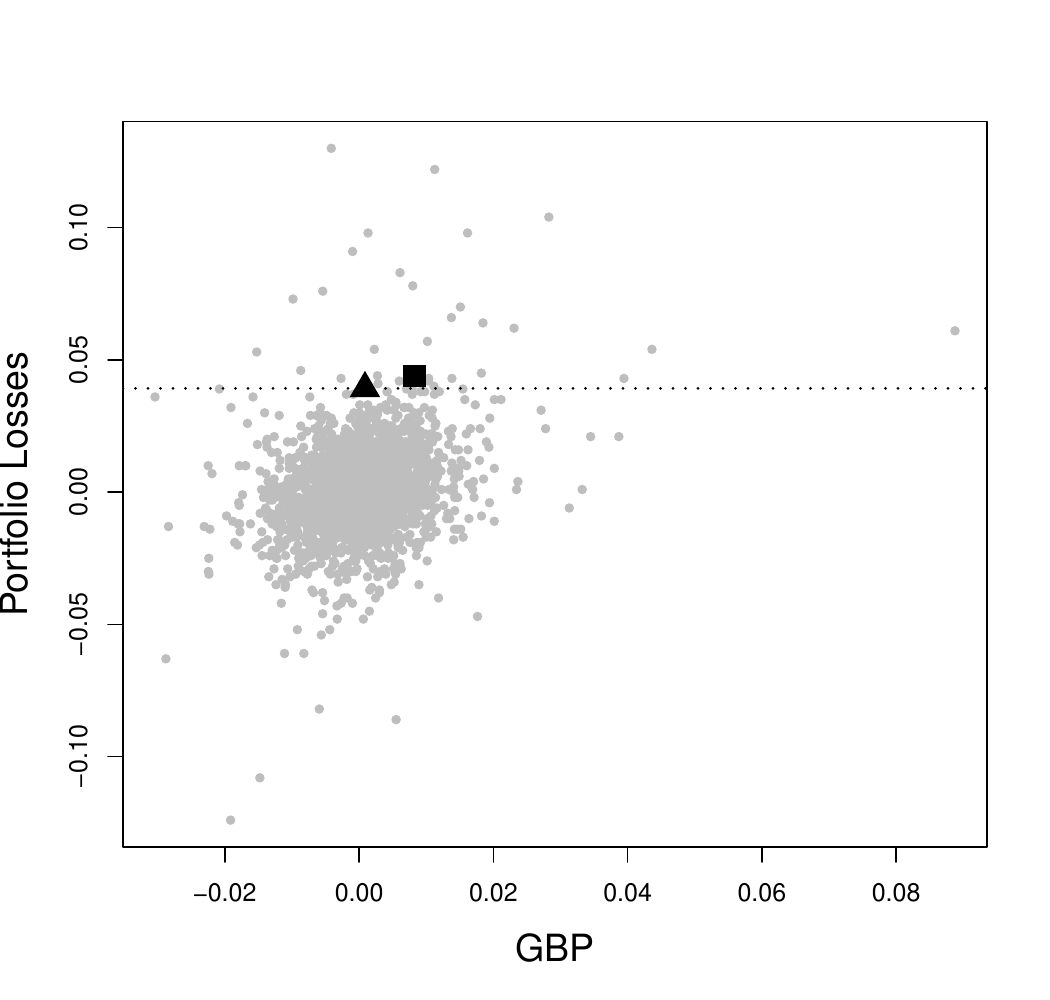}}
\subfigure{\includegraphics[scale=0.26]{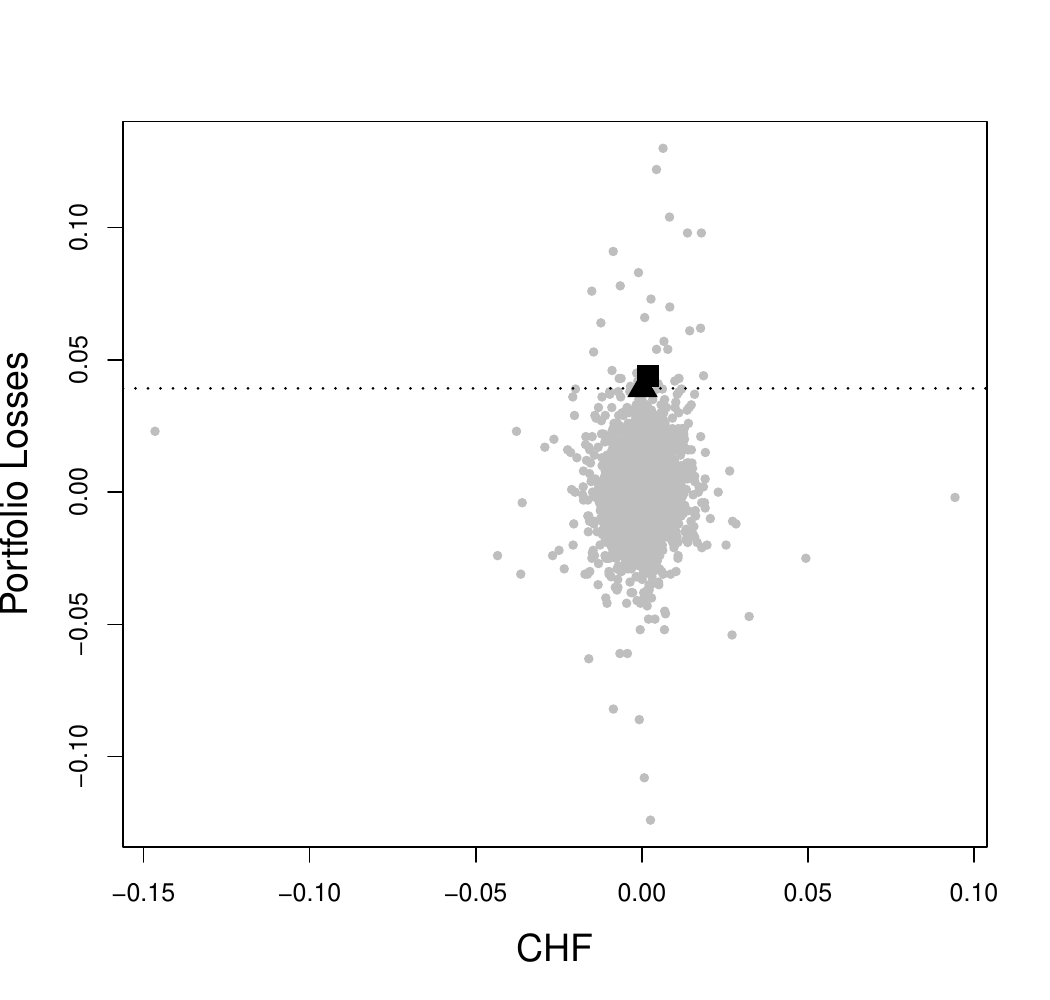}}
\subfigure{\includegraphics[scale=0.26]{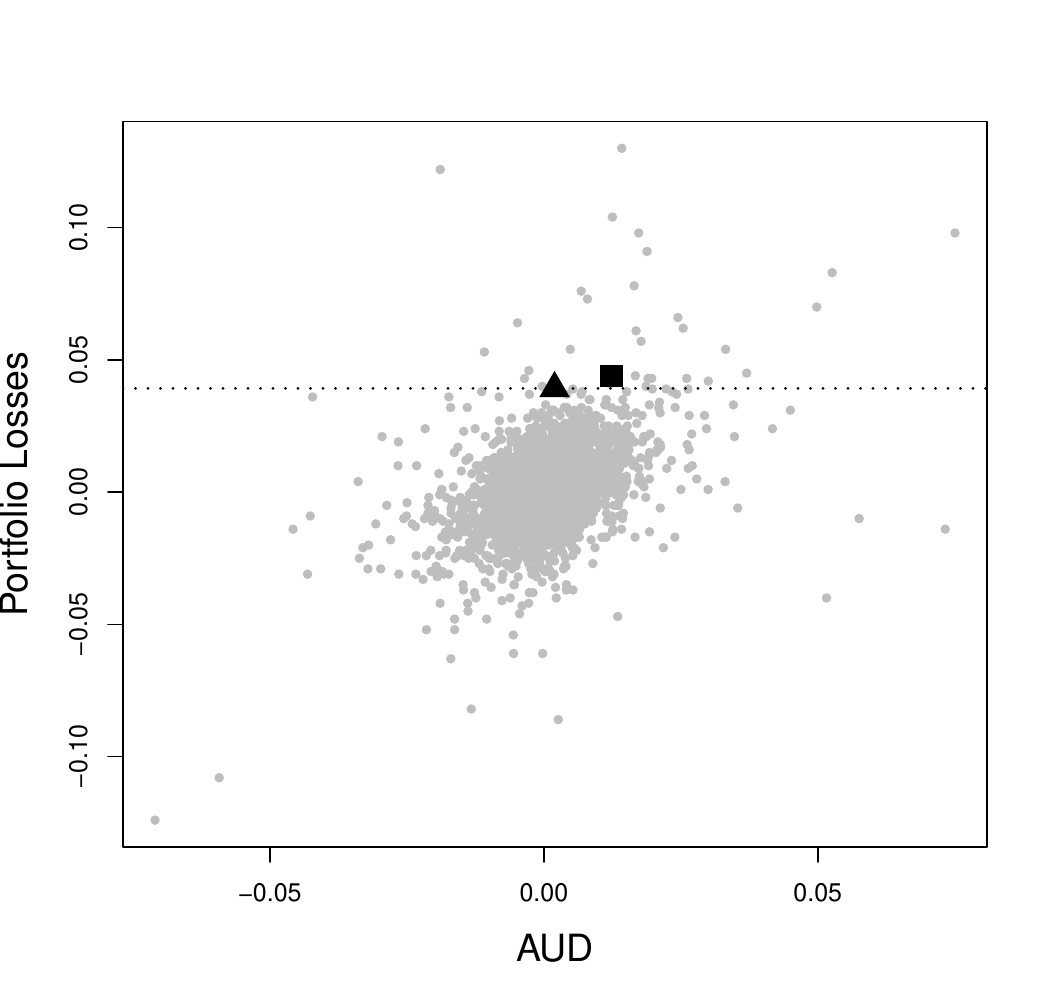}}
\subfigure{\includegraphics[scale=0.26]{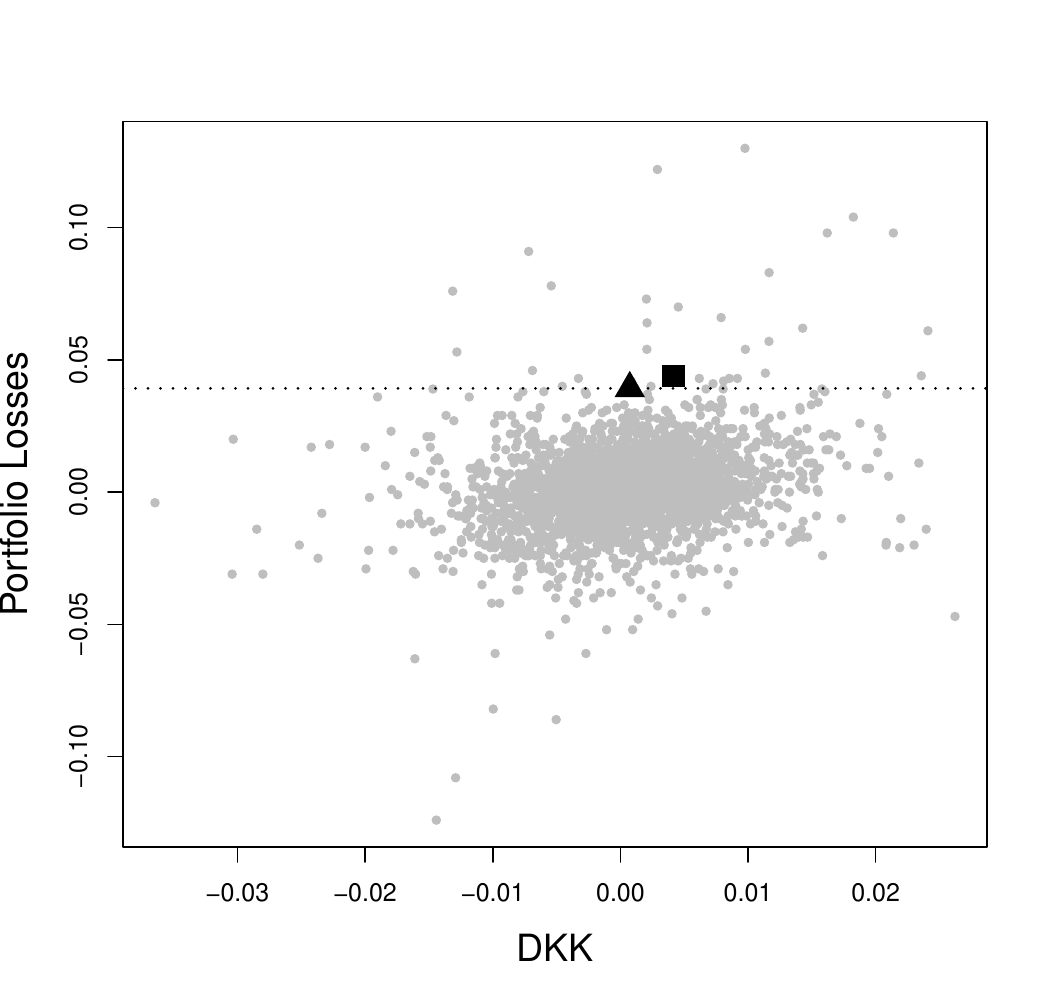}}
\subfigure{\includegraphics[scale=0.26]{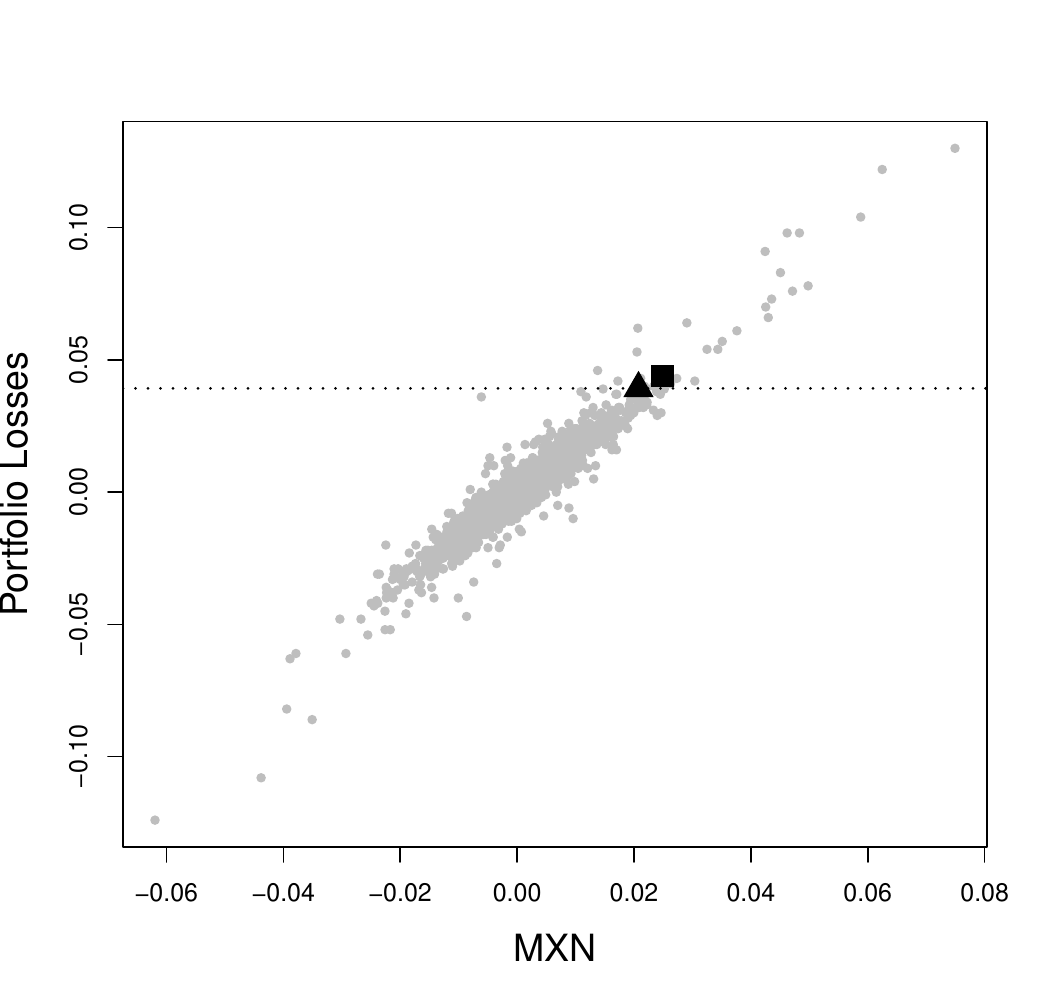}}
\subfigure{\includegraphics[scale=0.26]{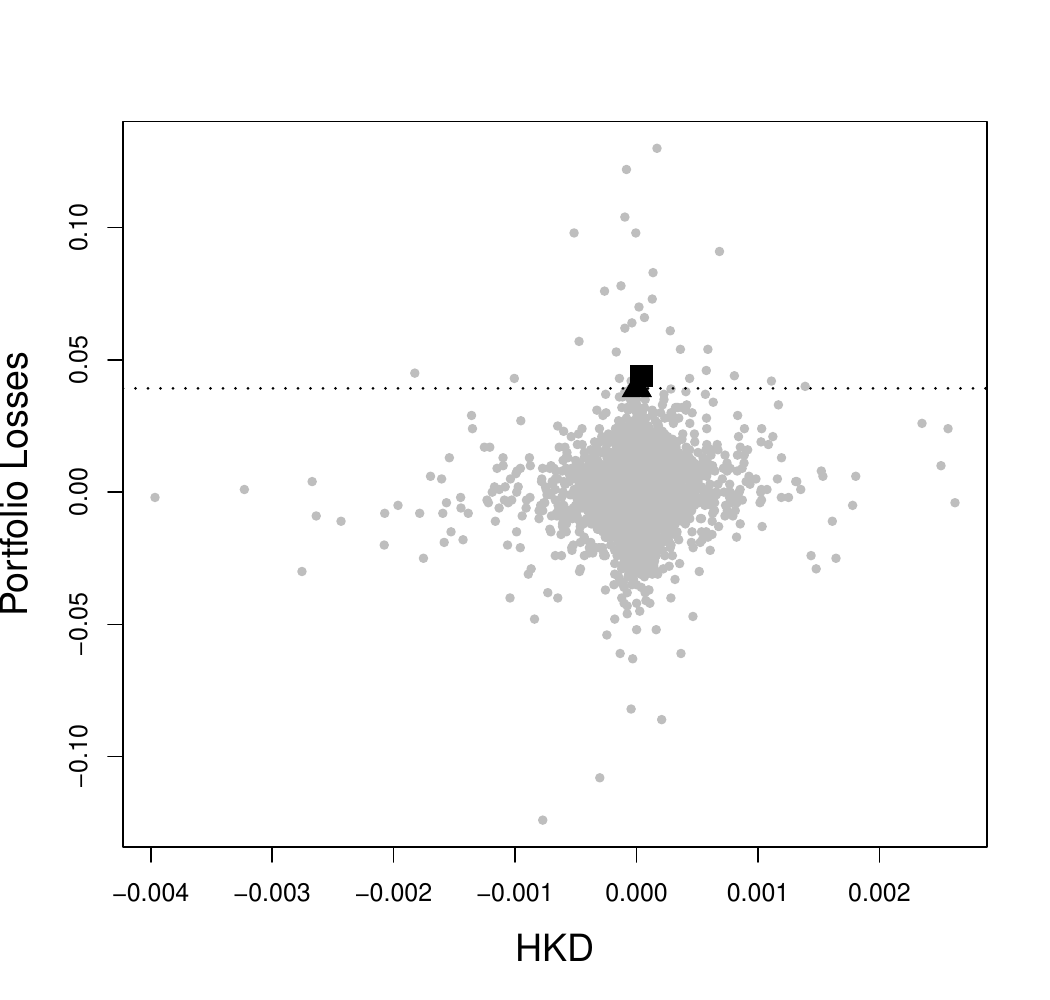}}
\subfigure{\includegraphics[scale=0.26]{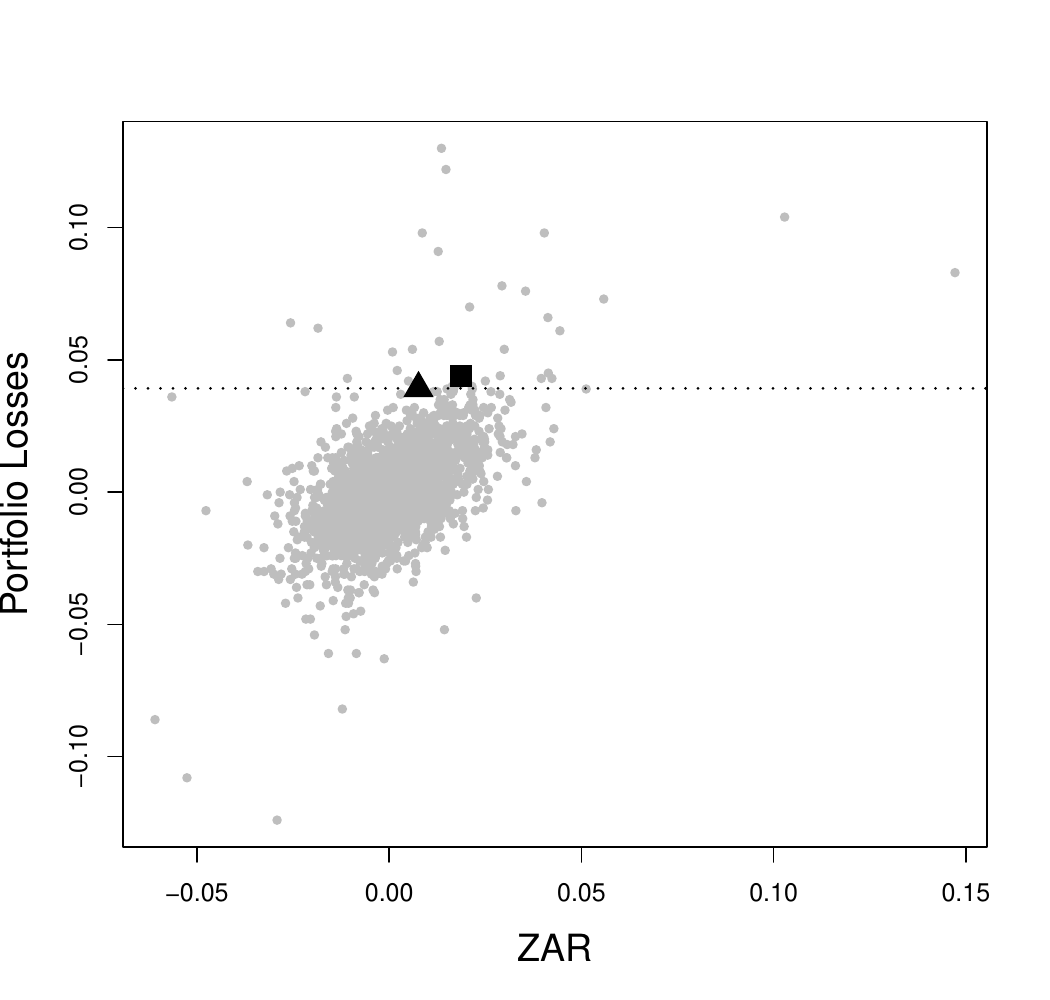}}
\end{figure}
\begin{figure}[H]
\centering
\subfigure{\includegraphics[scale=0.26]{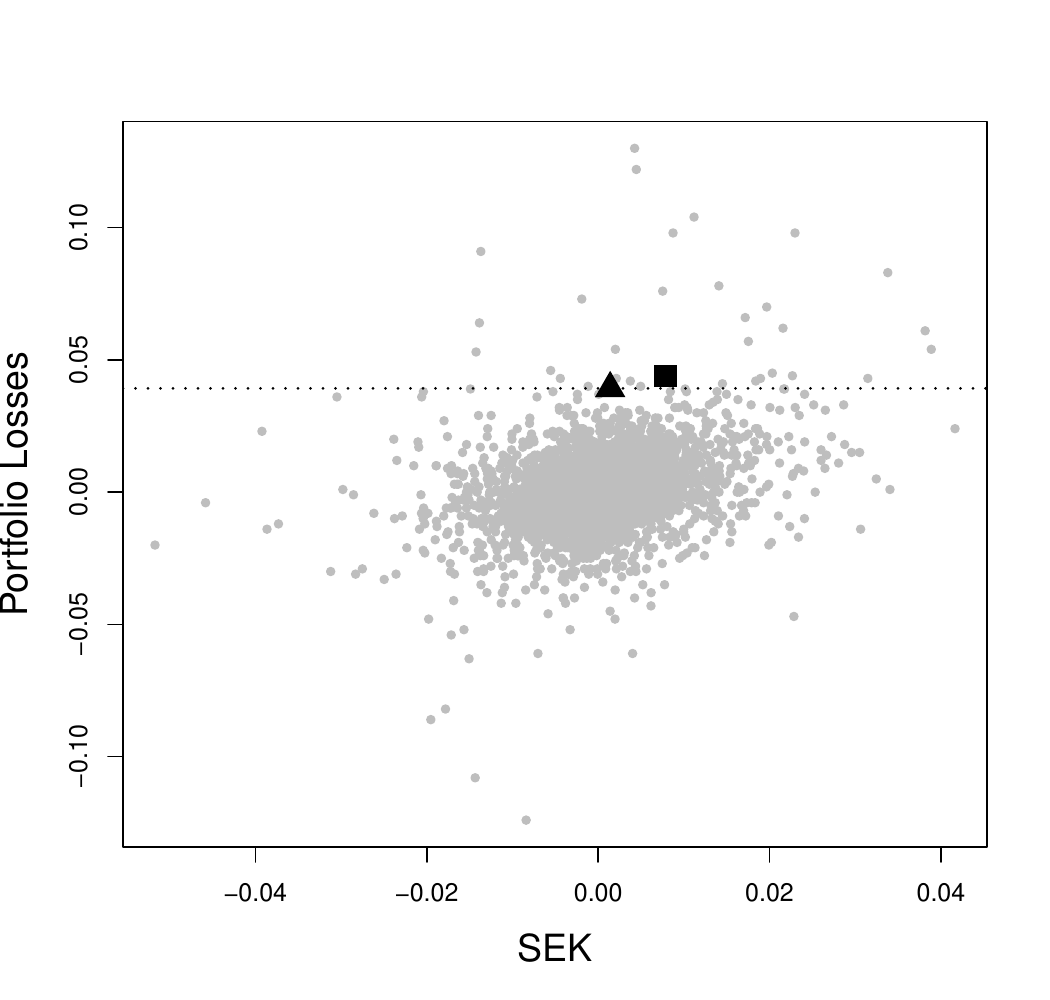}}
\subfigure{\includegraphics[scale=0.26]{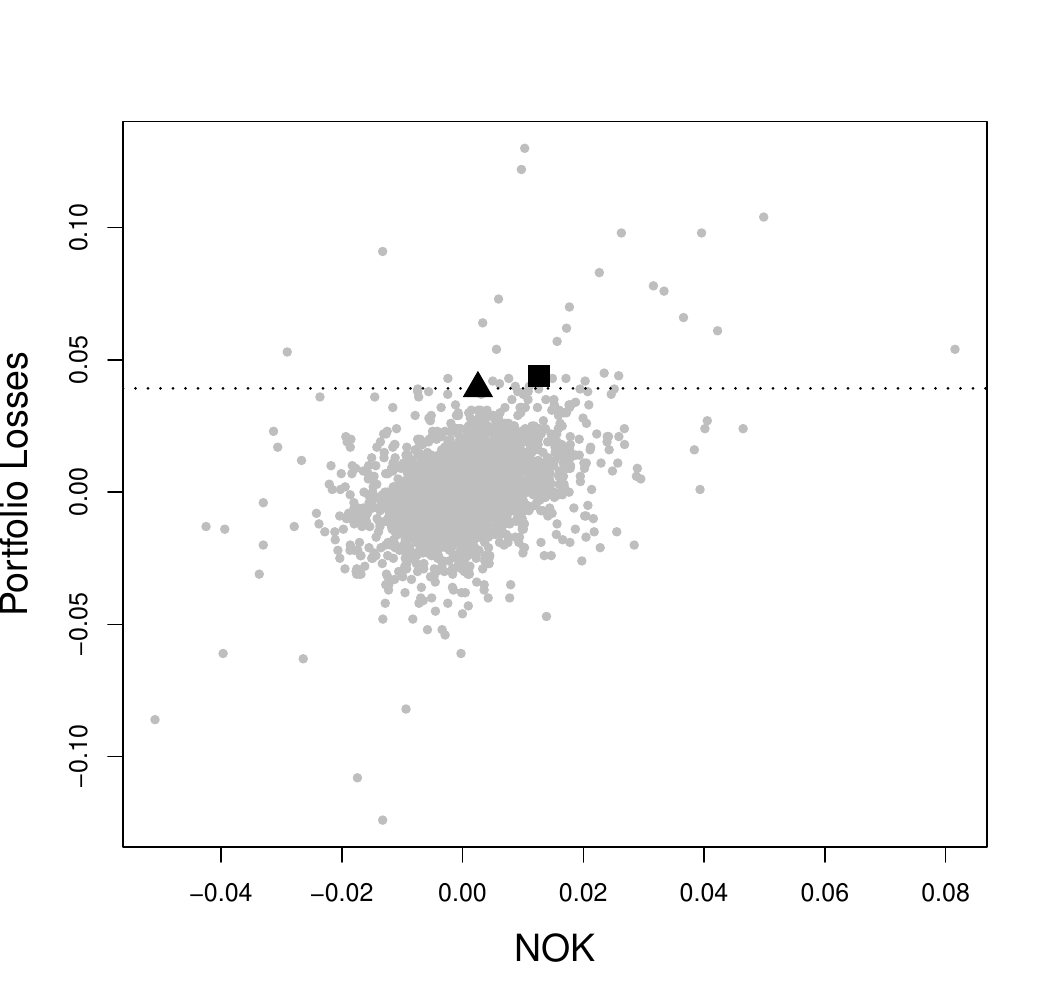}}
\subfigure{\includegraphics[scale=0.26]{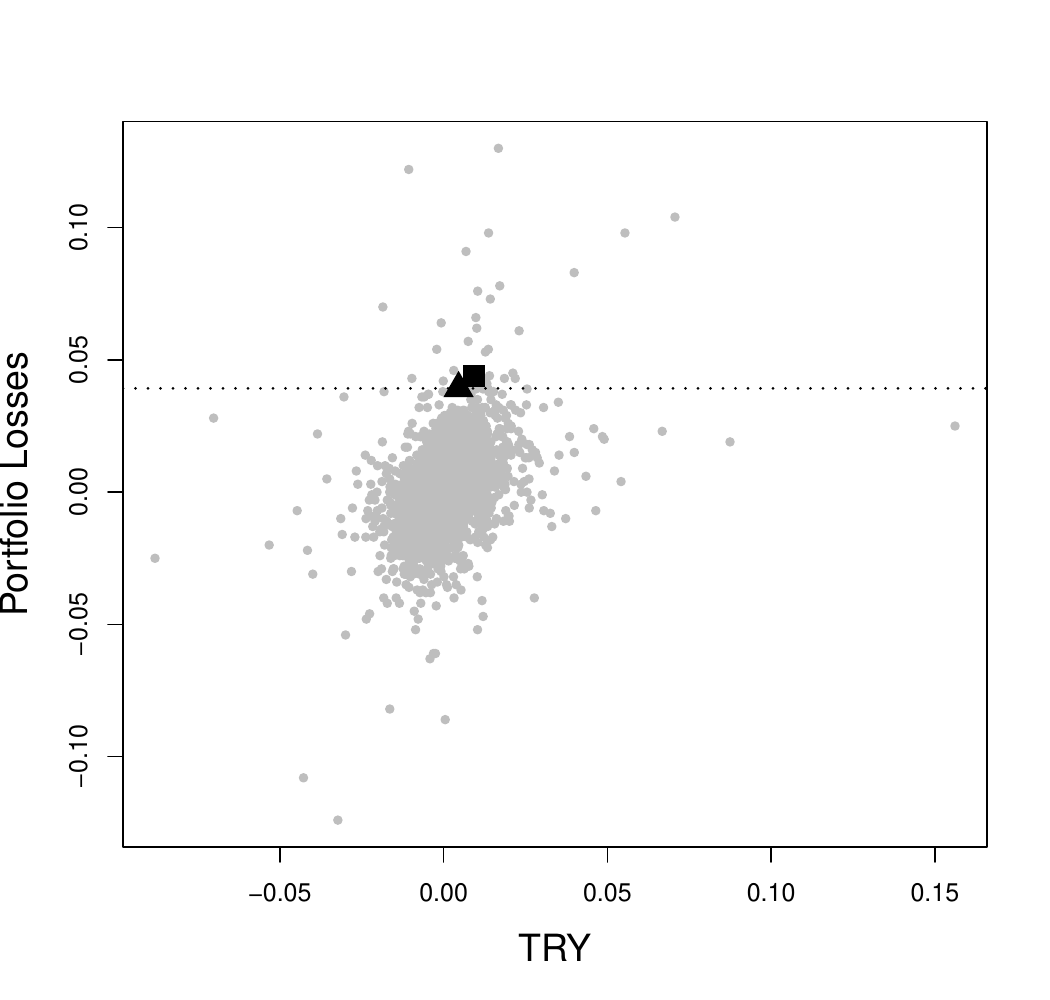}}
\subfigure{\includegraphics[scale=0.26]{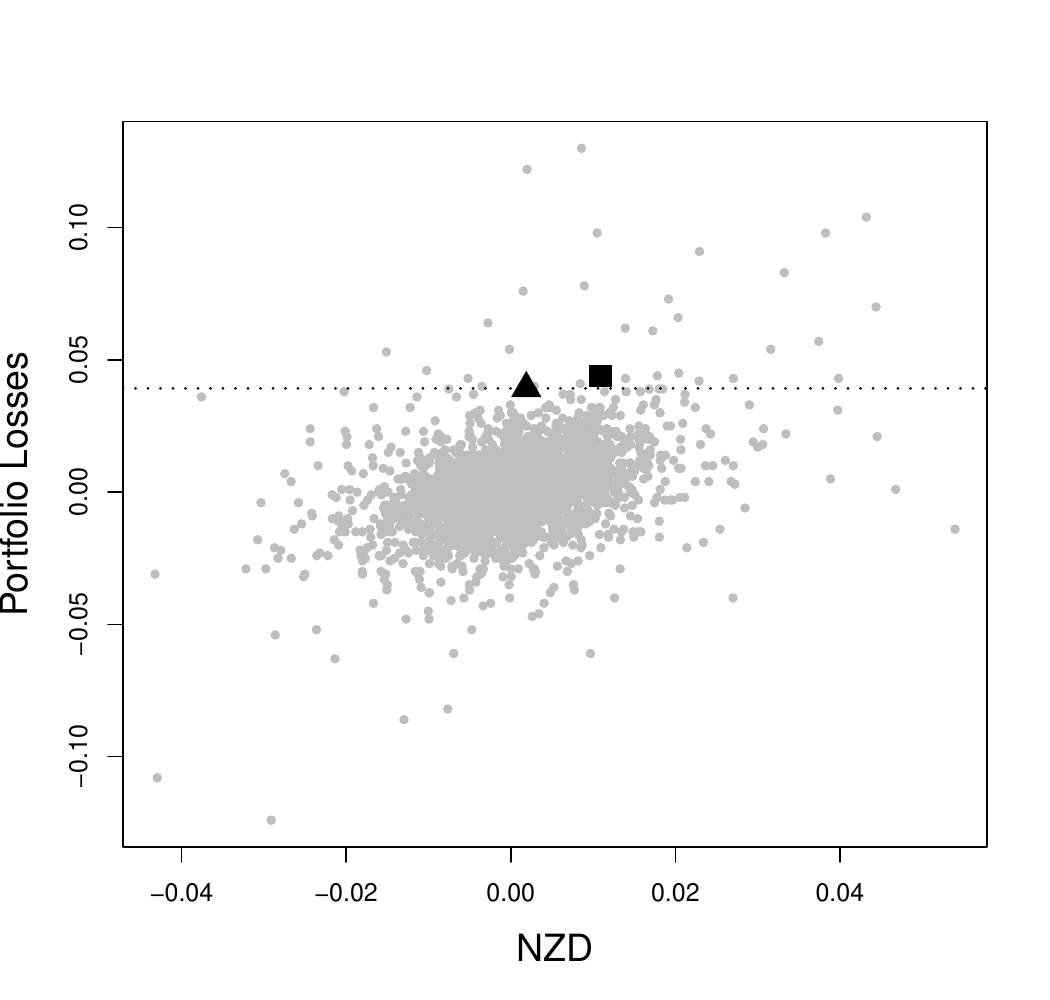}}
\subfigure{\includegraphics[scale=0.26]{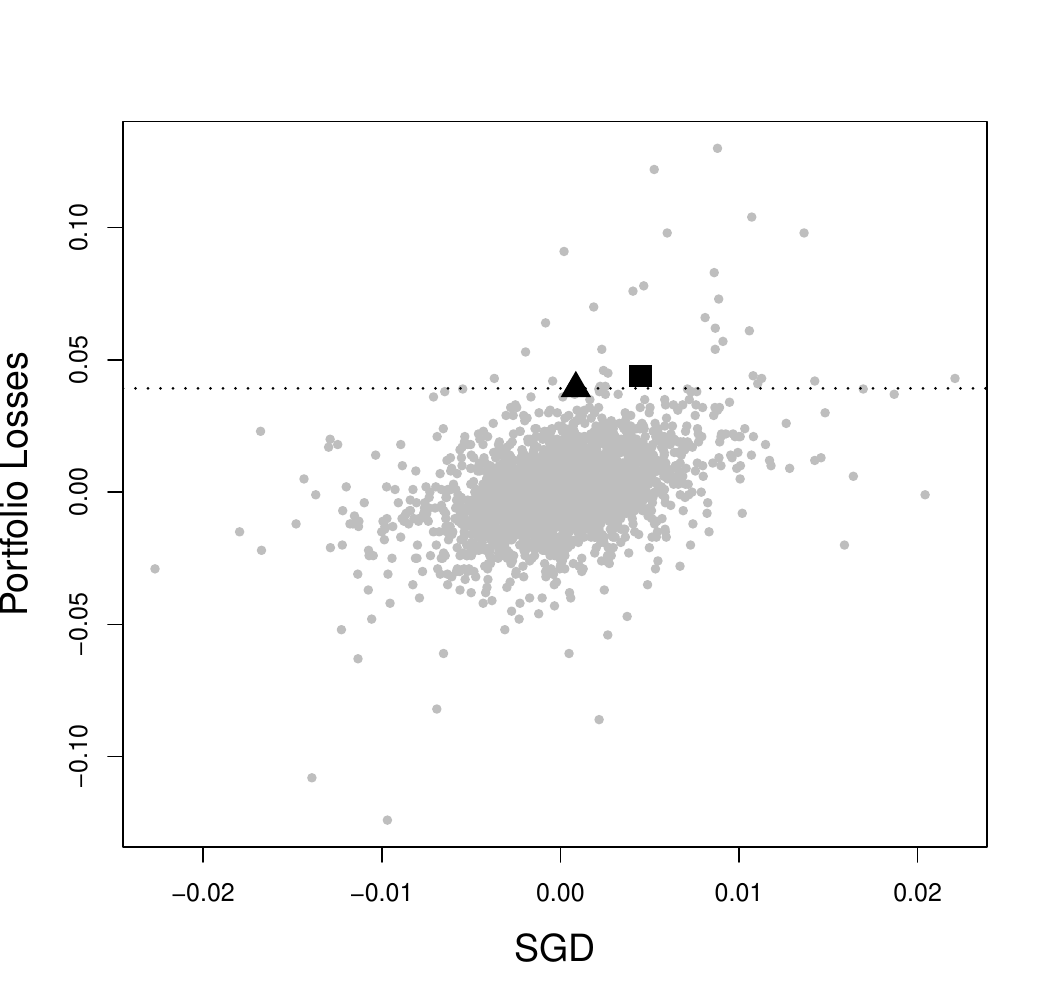}}
\subfigure{\includegraphics[scale=0.26]{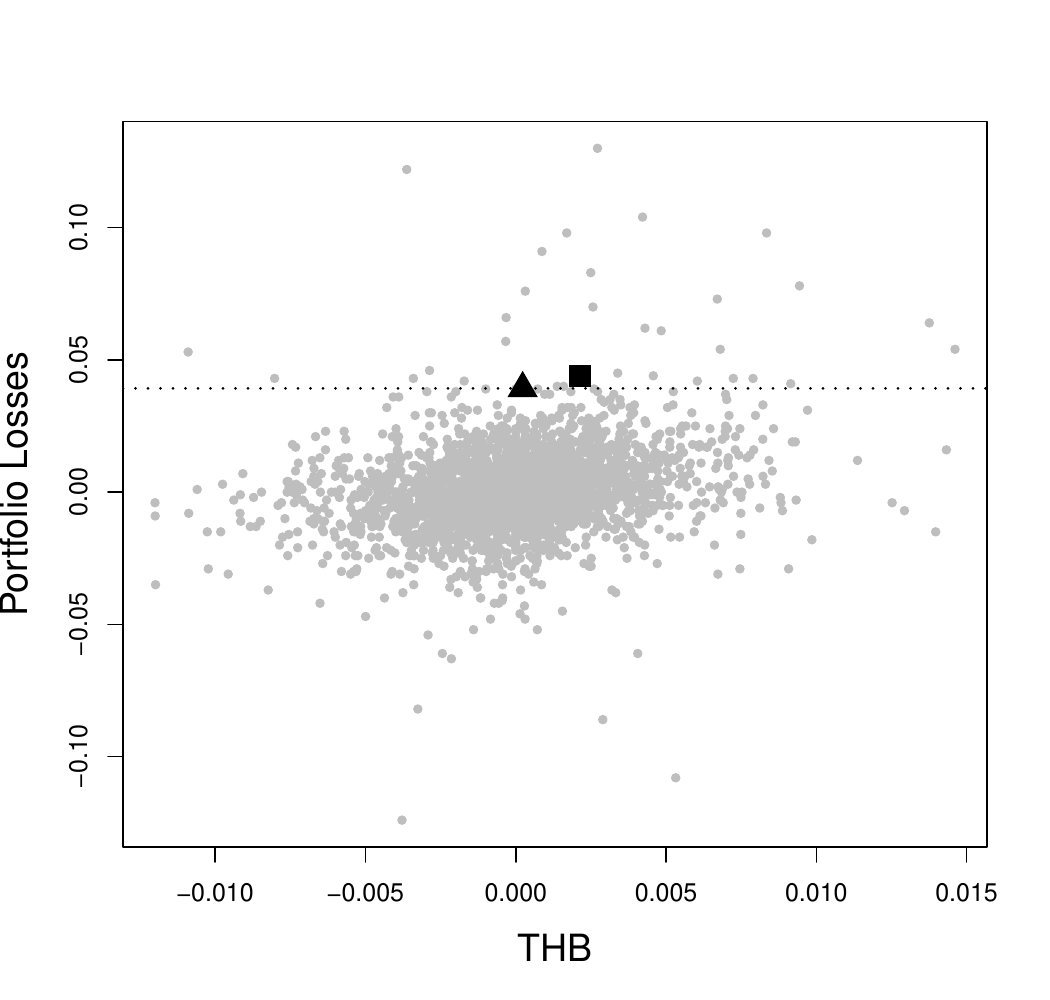}}
\subfigure{\includegraphics[scale=0.26]{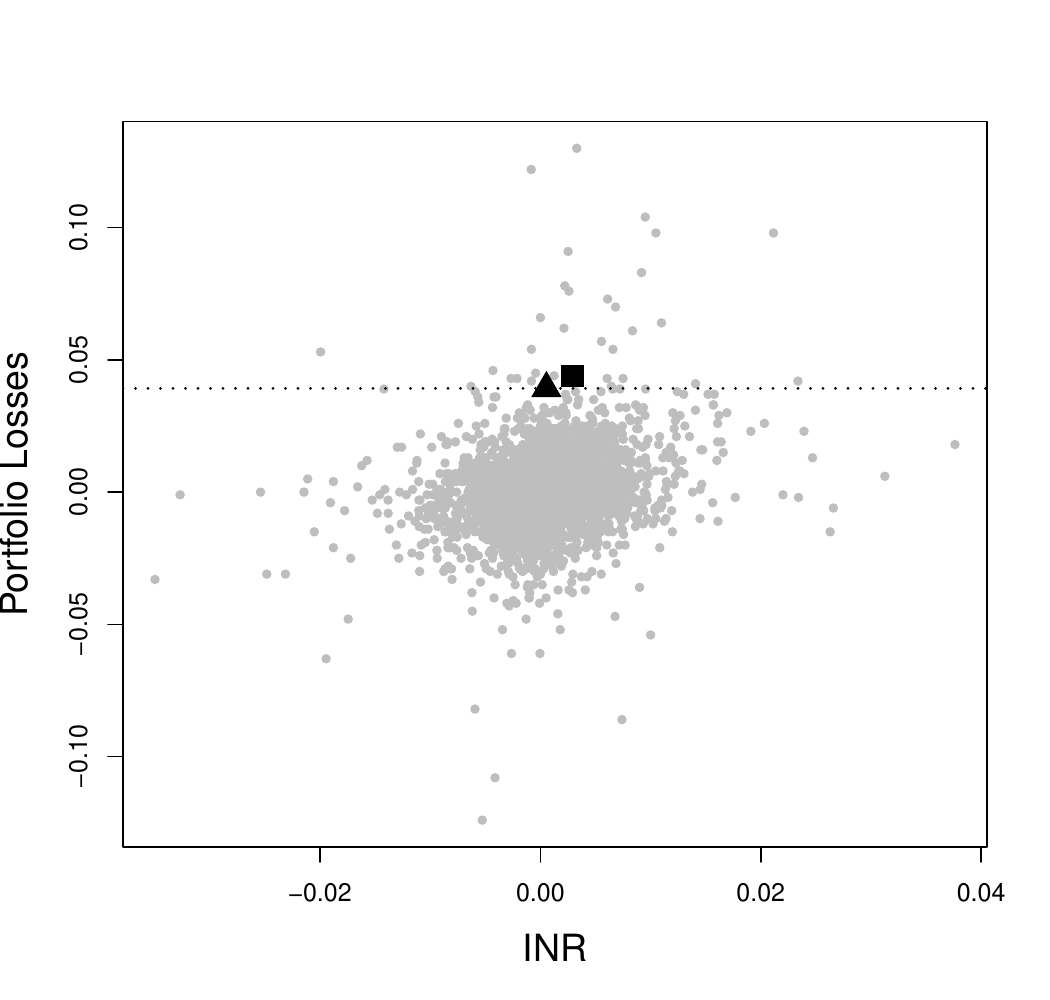}}
\subfigure{\includegraphics[scale=0.26]{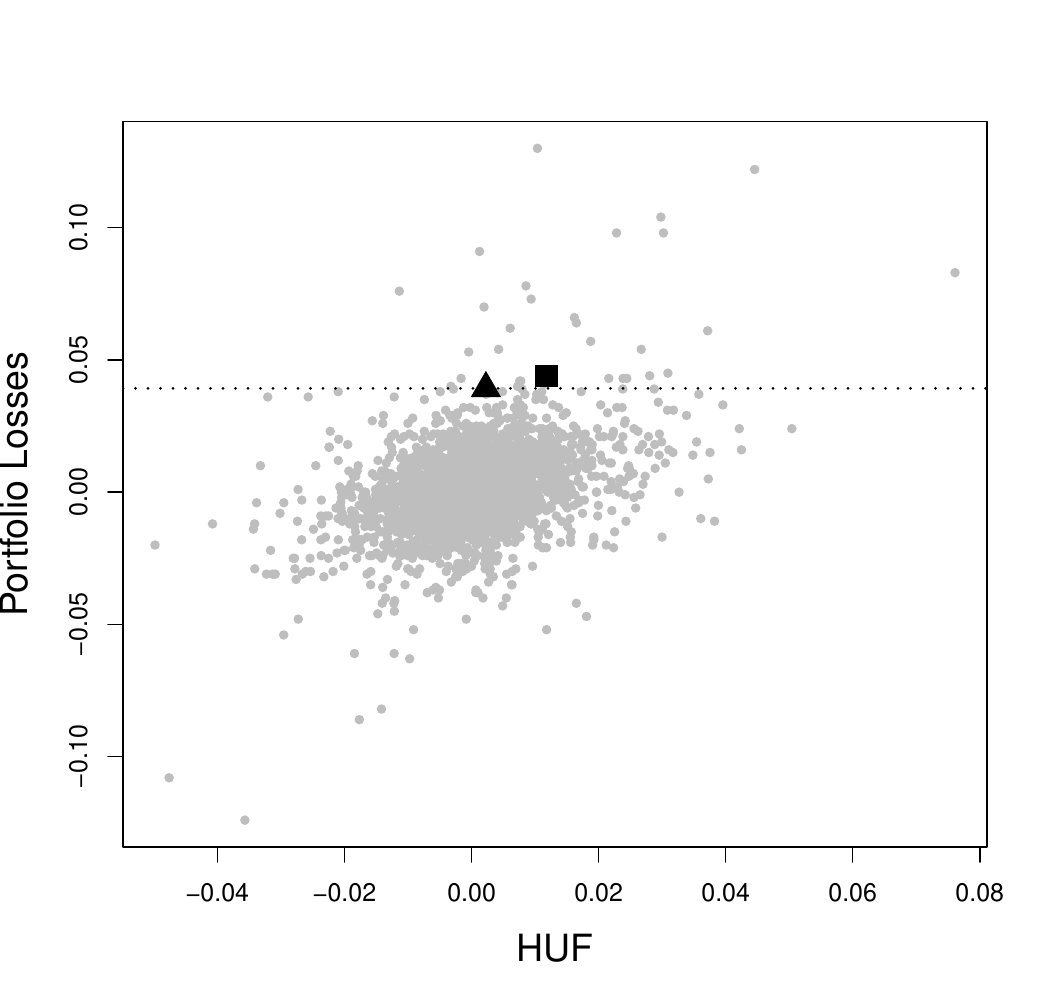}}
\subfigure{\includegraphics[scale=0.26]{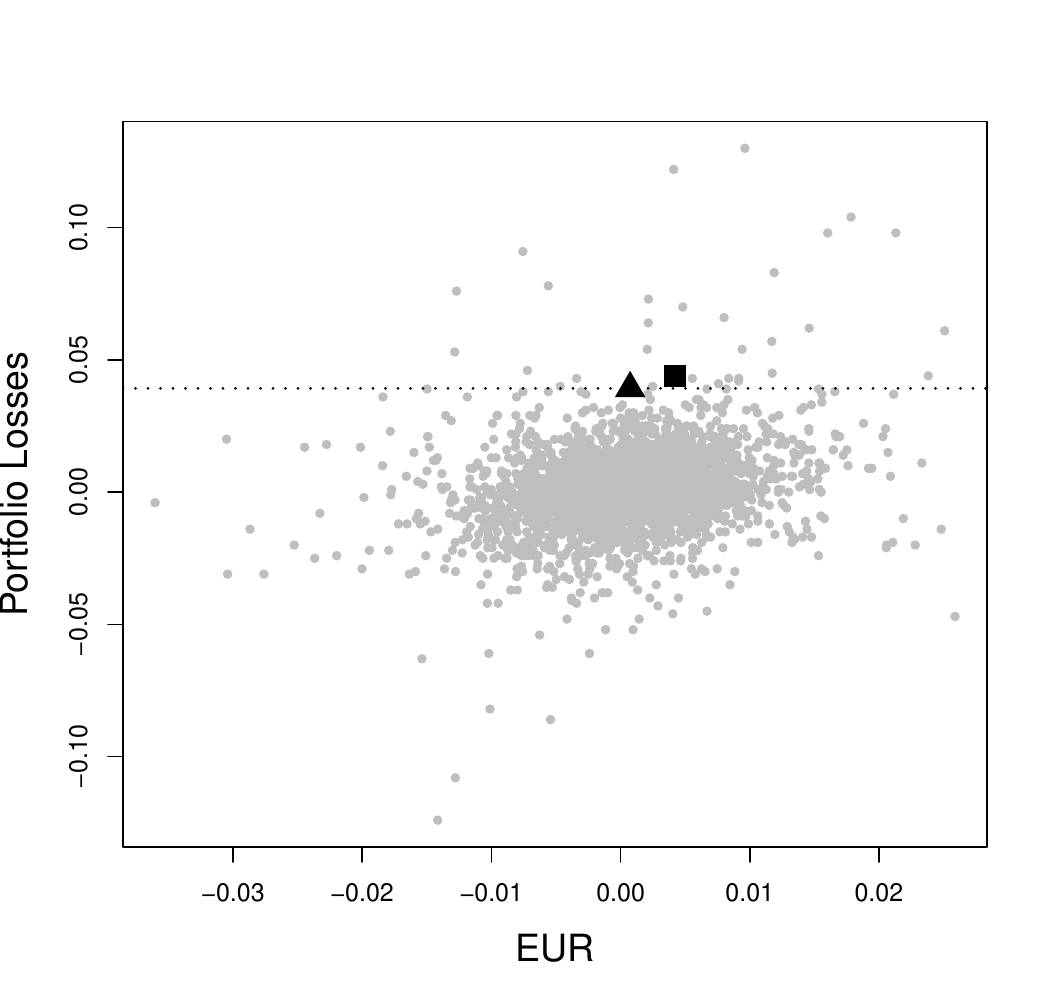}}
\caption{Plot of stress scenario estimates together of observations for Portfolio C. The horizontal lines stand for the threshold $\ell$.}
\label{fig:C_est}
\end{figure}

%
%


In addition to point prediction, we also provide confidence intervals for the fitted portfolio losses at stress scenario estimates. We construct the intervals with both vine regression and linear regression. For this portfolio, the value of R-squared coefficient of the linear model is about 0.98, suggesting that a linear portfolio map function provides a good approximation here for the true function~$g$. For clarity of illustration, Figure~\ref{fig:C_CI} zooms in on the stress scenario estimates for risk factor based on CAD currency. Taking sampling variability of the fitted function $g$ into account, this figure suggests that the GKK stress scenario estimate leads to a higher portfolio loss than specified by the threshold for the stress scenario and hence is likely not to reach the plausibility criterion.
%

\begin{figure}[ht]
\centering
\subfigure[Vine regression]{\includegraphics[scale=0.4]{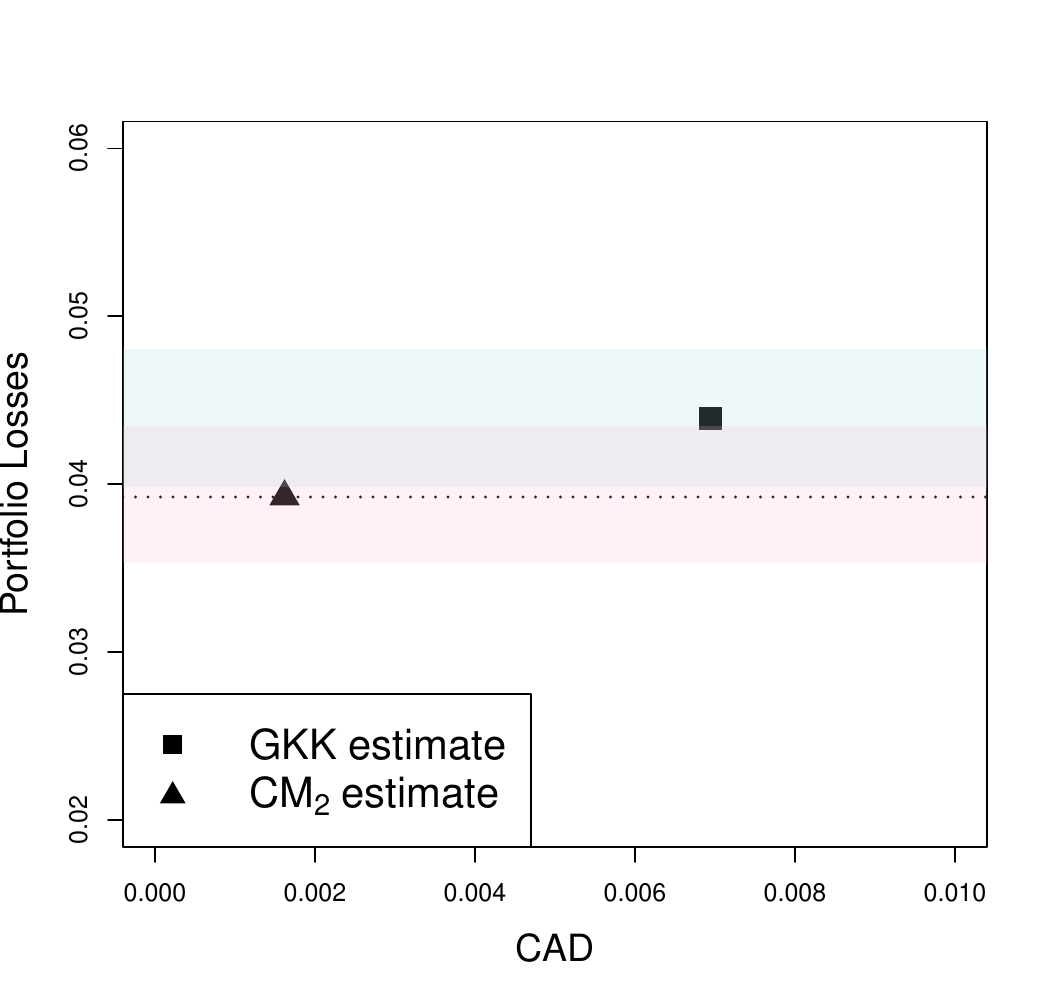}}
\subfigure[Linear regression]{\includegraphics[scale=0.4]{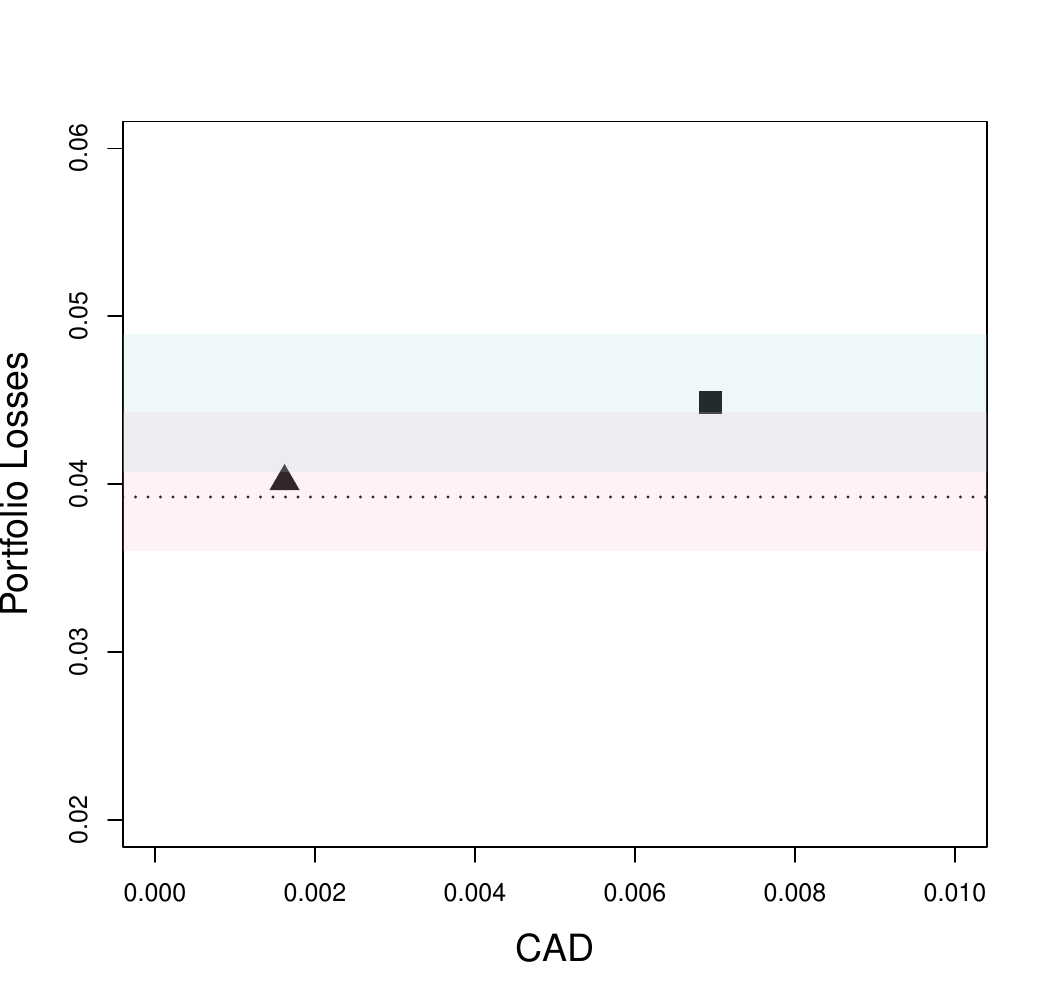}}
\caption{Plot of estimates for Portfolio C. The pink and blue shaded bands are the $95\%$ confidence intervals of predicted portfolio losses for $\CM_2$ and GKK stress scenario estimates. The dashed horizontal lines stand for the threshold $\ell$.}
\label{fig:C_CI}
\end{figure}

\section{Conclusion}
\label{sec::con}

Reverse stress testing is used by banks and insurance companies as part of their risk management toolkit to ensure financial stability of their operation. In this paper we develop a pragmatic solution for reverse stress testing through the use of a fully parametric yet flexible framework for multivariate modelling using vine copulas. The vine copula-based methodology allows to capture a wide range of stochastic properties of the data for both the marginal behaviour and the multivariate dependence structure and can be applied to large portfolios of several dozens of risk factors.  


With a constructed vine copula, three stress scenario estimators (denoted as $\CM_1$, $\CM_2$ and $\CM_3$) are investigated. Through the simulation studies, the initial naive estimator $\CM_3$ is found to be poorly calibrated leading to (fitted) portfolio loss values at estimated stress scenarios to be above the threshold specified for the stress scenario. Addressing limitations of $\CM_3$, estimator $\CM_2$ incorporates an estimate of function $g$ obtained via vine regression; this estimator is found to have the best performance in terms of bias and variance. However, $\CM_2$ also has the highest computationally complexity. Although the estimator $\CM_1$ that equates conditional densities under conditioning on $\{L\geq \ell\}$ and $\{L = \ell\}$ shows slightly worse performance than $\CM_2$, it is the most computationally efficient among the copula-based estimators. All the three copula-based estimators are superior in their performance to the estimator in~\cite{Glasserman_etal2015}. The methods are applied to three real life financial portfolios of currencies that exhibit a range of stochastic properties and include a small portfolio of just four risk factors and a large portfolio with 18 risk factors. Prediction of the corresponding portfolio losses demonstrates that the proposed methods are capable of generating stress scenarios that meet the criteria of severity and plausibility.

As a final remark, we acknowledge that the model fitting procedure for the proposed stress scenario estimators uses the entire dataset thus potentially compromising the fit in the tail in favour of that in the central part of the data.  Given this consideration, a potential modification to the methodology could involve the use of data specifically from the tail region to construct vine copulas (see, e.g., \cite{Kiriliouk_etal2023}). Moreover, the optimization step involved in determining stress scenarios based on joint density poses a computational challenge for the proposed methods, especially when the number of risk factors is larges. A possible solution is to decrease the dimensionality of the problem by clustering risk factors into several subsets so that risk factors in different subsets are conditionally independent; graphical models (\cite{EngelkeHitz2020}) could be used as a tool.

%

\section*{Acknowledgements}
The authors acknowledge financial support of the UBC-Scotiabank Risk Analytics Initiative and Natural Sciences and Engineering Research Council of Canada.

\newpage
\bibliographystyle{plainnat} 
\bibliography{biblio}{}


\begin{appendices}
\renewcommand{\thesection}{Appendix \Alph{section}}

\section{Vine Copula Regression}
\label{appen::vine_reg}

Copula regression has received a considerable attention in recent years;  see, e.g.,~\cite{ParsaKlugman2011}, \cite{Noh_etal2013}, \cite{Cooke_etal2015}, \cite{KrausCzado2017}, \cite{Schallhorn_etal2017} and \cite{ChangJoe2019}. It is shown that copula regression can model not only nonlinear relationships between the response and explanatory variables but also heteroscedasticity. In this section, we review a special case of copula regression where vine copulas are used to model the joint distribution between the response and explanatory variables.

%

Let $\Xb = (X_1,X_2,...,X_d)^\top $ denote the $d$-dimensional vector of explanatory variables and $Y$ be the response variable. From~\eqref{eq:vcop}, we know that the joint density function of $(\Xb, Y)$ can be specified using a vine copula. Using the joint density function, the conditional distribution of $Y$ given $\Xb$ can be calculated through numerical integration. However, to avoid this integration, \cite{ChangJoe2019} propose to constrain the node that contains the response variable $Y$ as the conditioned variable to be a leaf node in all trees.

To be more specific, a locally optimal vine tree structure is first constructed for $\Xb\in \rbb^d$. Then from level 1 to $d-1$, the node containing random variable $Y$ is sequentially linked to the node that satisfies the proximity condition and has the largest absolute correlation with the node containing $Y$.  As an example, consider the first two trees $\TC_1, \TC_2$ of an R-vine for $\Xb\in \rbb^5$ given in Figure~\ref{fig:rvine}. To add $Y$ into $\TC_1$, correlation coefficients $\rho_{i,Y}, i \in \{1,2,3,4,5\}$ between $X_i$'s and $Y$ are calculated. If say $|\rho_{2,Y}|$ is the largest, then an edge to connect $Y$ and $X_2$ is added; see Figure~\ref{fig:rvine_add_Y} for illustration. In the second tree, the node $X_2,Y$ can either be connected with node $X_1,X_2$ or with node $X_2,X_3$. Next calculate $\rho_{1,Y;2}$ and $\rho_{3,Y;2}$, which are the conditional correlations between $X_1$ and $Y$ given $X_2$ and between $X_3$ and $Y$ given $X_2$. If $|\rho_{1,Y;2}| > |\rho_{3,Y;2}|$, then nodes $X_2,Y$ and $X_1,X_2$ are connected in $\TC_2$.

\begin{figure}[ht]
\centering
\includegraphics[scale=0.5]{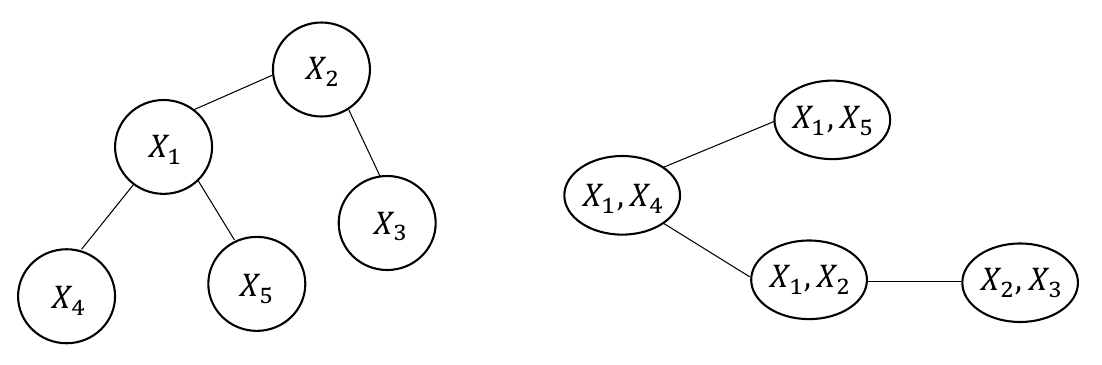}
\caption{The first two trees $\TC_1, \TC_2$ of an R-vine with 5 explanatory variables.}
\label{fig:rvine}
\end{figure}

\begin{figure}[ht]
\centering
\includegraphics[scale=0.5]{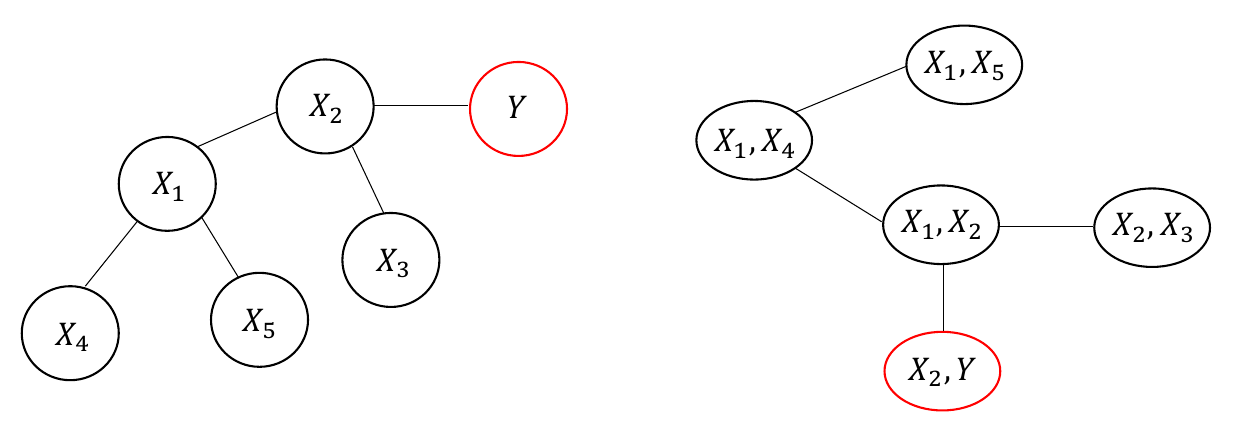}
\caption{The first two trees $\TC_1, \TC_2$ of an R-vine with 5 explanatory variables and response variable $Y$.}
\label{fig:rvine_add_Y}
\end{figure}

Under this construction, a pair copula $C_{X_j,Y; \Xb_{-j}}$ for $X_j$ and $Y$ given $\Xb_{-j}$ can be obtained in the last tree, where $\Xb_{-j}\in \rbb^{d-1}$ contains all components of $\Xb$ except $X_j$. Then the conditional cdf of $Y$ given $\Xb = \xb$ can be expressed as

$$F_{Y|\Xb}\big(y|\xb\big) = h_{Y|X_j,\Xb_{-j}}\Big(F_{X_j|\Xb_{-j}}\big(x_j|\xb_{-j}\big)\big| F_{Y|\Xb_{-j}}\big(y|\xb_{-j}\big)\Big),$$
where 
$$h_{Y|X_j,\Xb_{-j}}(v|u) = \frac{\partial C_{X_j,Y; \Xb_{-j}}(u,v)}{\partial u}.$$
Conditional cdf's $F_{X_j|\Xb_{-j}}(x_j|\xb_{-j})$ and $F_{Y|\Xb_{-j}}(y|\xb_{-j})$ can be obtained recursively; see Algorithm 3.1 in \cite{ChangJoe2019}.

To illustrate, consider $(\Xb, Y)^T = (X_1, X_2, X_3, Y)^T$. Suppose we have a D-vine tree with order $Y-X_1-X_2-X_3$. From $\TC_1$, we obtain $F_{Y|X_1}(y|x_1)$, $F_{X_2|X_1}(x_2|x_1)$ and $F_{X_3|X_2}(x_3|x_2)$. From $\TC_2$, we have $F_{Y|X_1, X_2}(y|x_1, x_2)$ and $F_{X_3|X_1, X_2}(x_3|x_1, x_2)$. In the last tree $\TC_3$, we finally calculate $F_{Y|X_1, X_2, X_3}(y|x_1, x_2, x_3)$.

\section{Marginal estimation}
\label{appen::margin}

In this section, we summarize the procedure we used to fit univariate marginal components in the application. 

As a first step, we fit the skew-t distribution proposed in~\cite{JonesFaddy2003}. It is a flexible four-parameter family of distributions that allows for asymmetry through different decay in the lower and upper tails. The density function of the skew-t distribution with location $\mu\in\rbb$, scale $\sigma>0$ and shape parameters $\alpha,\beta>0$ is given by
$$f(x;\mu, \sigma, \alpha, \beta) = \dfrac1{\sigma C_{\alpha,\beta}}\left\{1+\frac{1}{(\alpha+\beta+z^2)^{1/2}}\right\}^{\alpha+1/2}\left\{1-\frac{1}{(\alpha+\beta+z^2)^{1/2}}\right\}^{\beta+1/2},\quad x\in\rbb,$$
where $z = (x-\mu)/\sigma$, $C_{\alpha,\beta} = 2^{\alpha+\beta-1}B(\alpha, \beta)(\alpha+\beta)^{1/2}$ and $B(\alpha,\beta) = \int_0^1 t^{\alpha - 1}(1-t)^{\beta-1}dt$.
The Student t distribution with $2\alpha$ degrees freedom is a special case of this skew-t distribution when $\alpha = \beta$. 

After fitting the skew-t distribution to each marginal component, we assess its goodness-of-fit by examining the quantile-quantile (Q-Q) plots. We found that for most of the risk factors the skew-t distribution provides an acceptable fit. In a few cases, however, this parametric family could not capture both the central part and tails of the data. In these situations, we considered a semi-parametric approach that combines the kernel density estimator for the central observations and the generalized Pareto distribution for the upper and lower tails of the marginal components. The generalized Pareto distribution, whose cdf has the form:
$$F_{GP}(x; \sigma,\xi) = 1- \left[1 + \xi x/\sigma\right]_+^{-1/\xi} $$ 
for scale parameter $\sigma>0$ and shape parameter $\xi\in\rbb$, is a standard model for excesses above a high threshold, motivated by extreme value theory (see, e.g., \cite{Embrechts_etal1997}).

%


Let $u_L$ and $u_R$ denote the lower and upper threshold values, respectively. Let $H(\cdot)$ and $h(\cdot)$ denote the cdf and probability density function of a random variable $X$. By approximating the tail behaviour of $X$ below $u_L$ and above $u_R$ by generalized Pareto distributions with parameters $(\sigma_L,\xi_L)$ in the lower tail and $(\sigma_R,\xi_R)$ in the upper tail, the cdf of $X$ can be written as
$$F(x) = \left\{\begin{aligned}
&H(u_L)\cdot \big[1 + \xi_L(-x + u_L)/\sigma_L\big]_+^{-1/\xi_L}, & x < u_L;\\
&H(x), & u_L\leq x\leq u_R;\\
&1-[1-H(u_R)]\cdot \big[1+\xi_R(x - u_R)/\sigma_R\big]_+^{-1/\xi_R}, & x > u_R,
\end{aligned}
\right.
$$
with the density function given by
$$f(x) = \left\{\begin{aligned}
&\frac{H(u_L)}{\sigma_L}\cdot \big[1 + \xi_L(-x + u_L)/\sigma_L\big]_+^{-1/\xi_L-1}, & x < u_L;\\
&h(x), & u_L\leq x\leq u_R;\\
&\frac{1-H(u_R)}{\sigma_R}\cdot \big[1+\xi_R(x - u_R)/\sigma_R\big]_+^{-1/\xi_R-1}, & x > u_R.
\end{aligned}
\right.
$$

We should note that, in general, there are two jump points in the density function at the two thresholds. 
We solve this problem by adding two constraints on the parameters $\xi_L, \xi_R, \sigma_L, \sigma_R$:
$$\frac{H(u_L)}{\sigma_L}\cdot \left[1 + \xi_L(-u_L + u_L)/\sigma_L\right]_+^{-1/\xi_L-1} = h(u_L)$$
and
$$\frac{1-H(u_R)}{\sigma_R}\cdot \left[1+\xi_R(u_R - u_R)/\sigma_R\right]_+^{-1/\xi_R-1} = h(u_R).$$
After simplifying, the constraints involve only scale parameters $\sigma_L$ and $\sigma_R$ via:
$$\frac{H(u_L)}{\sigma_L} = h(u_L),\quad \frac{1-H(u_R)}{\sigma_R} = h(u_R).$$
The semi-parametric model is fitted by first estimating $H(x)$ and $h(x)$ using all observations with the kernel density estimator as in~\cite{Macdonald_etal2011}. Next, we select the lower and upper thresholds. In our procedure, we use the $0.15$ and $0.85$-empirical quantiles of the data. Then, we estimate scale parameters as $\hat{\sigma}_L = \hat{H}(u_L)/\hat{h}(u_L)$ and $\hat{\sigma}_R = [1-\hat{H}(u_R)]/\hat{h}(u_R)$. Finally, we estimate $\xi_L$ and $\xi_R$ using observations below/above the lower/upper threshold using maximum likelihood estimation. The Q-Q plots are used to validate the semi-parametric model for the marginal components.

\section{Figures}

\subsection{Plot of Trees in the R-vine Selected for Portfolio A}
\label{appen::tree_detail}
Figure~\ref{fig:trees_A} provides the four trees in the R-vine selected for Portfolio A data. The chosen copula families and their estimated parameters are given beside each edge, where

\begin{itemize}
    \item ``I($0$)" stands for independent copula;
    \item ``N($\rho$)" stands for Gaussian copula with correlation coefficient $\rho\in [-1,1]$;
    \item  ``t($\rho, \nu$)" stands for t copula with correlation coefficient $\rho\in [-1,1]$ and degrees of freedom $\nu > 0$;
    \item ``C($\delta$)" represents Clayton copula with dependence parameter $\delta > 0$;
    \item ``F($\delta$)" represents Frank copula with dependence parameter $\delta\in [-\infty,\infty]\backslash \{0\}$;
    \item ``BB1($\theta, \delta$)" represents BB1 copula with dependence parameters $\theta > 0$ and $\delta \geq 1$.
\end{itemize}

\begin{figure}[H]
\centering
\subfigure{\includegraphics[scale=0.4]{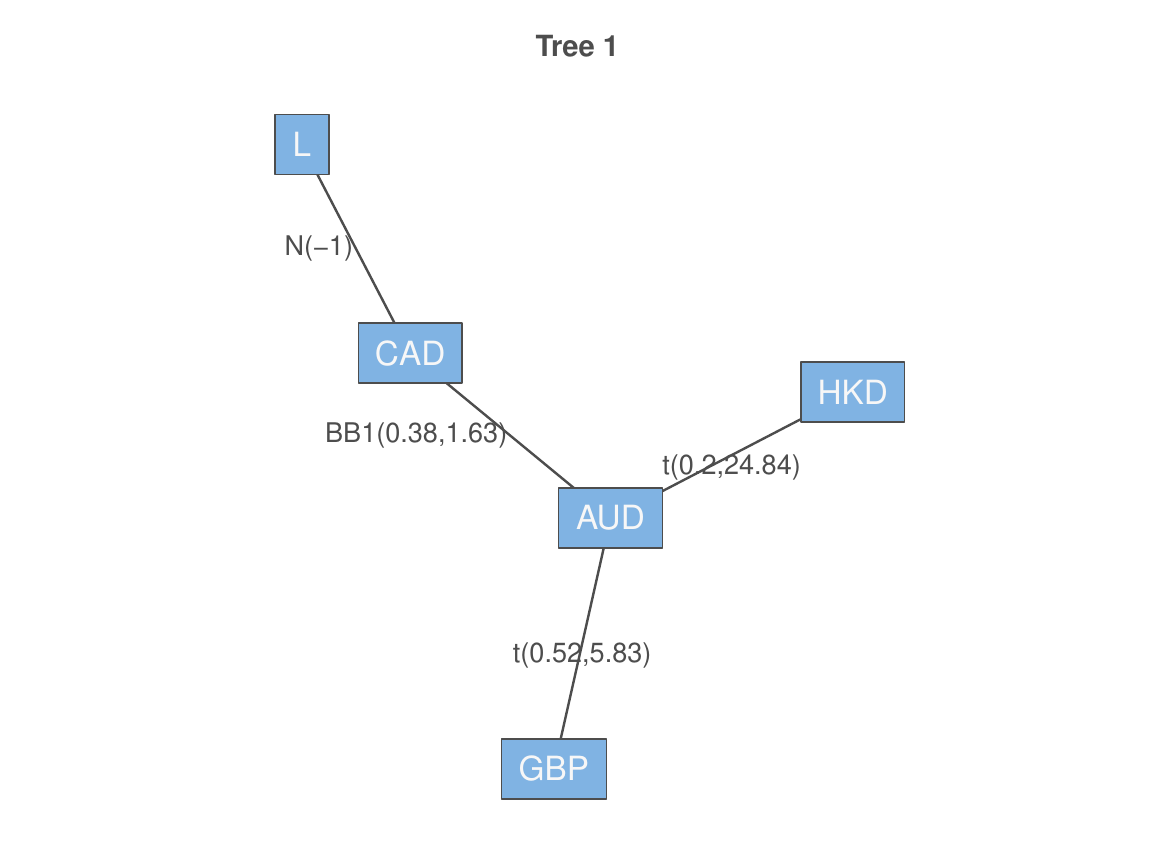}}
\subfigure{\includegraphics[scale=0.4]{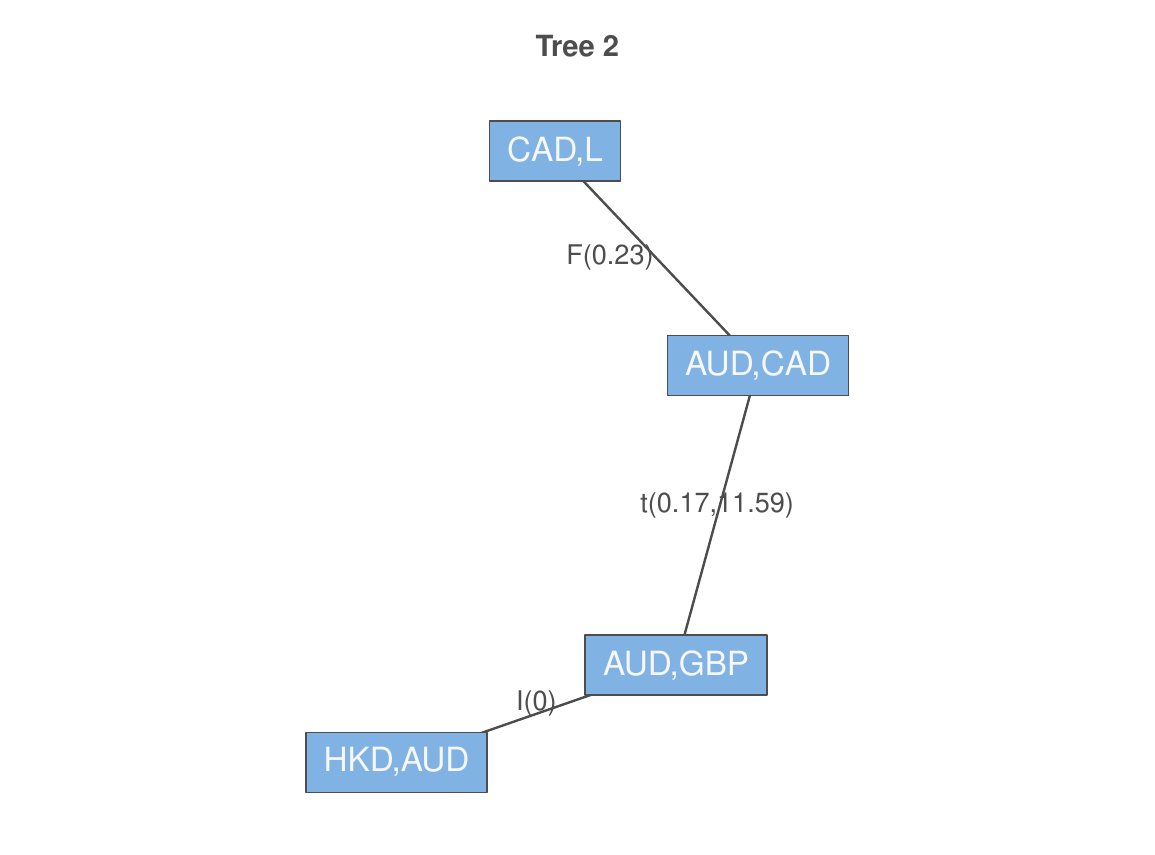}}
\subfigure{\includegraphics[scale=0.4]{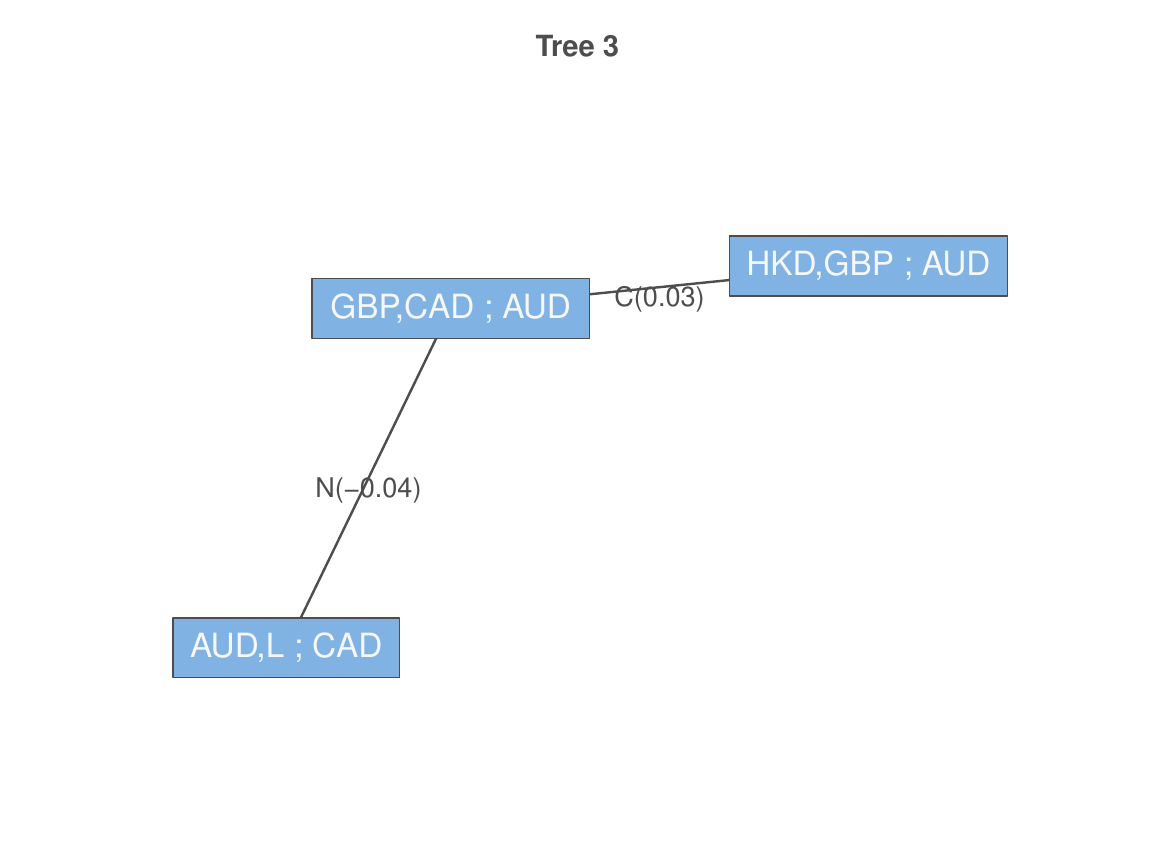}}
\subfigure{\includegraphics[scale=0.4]{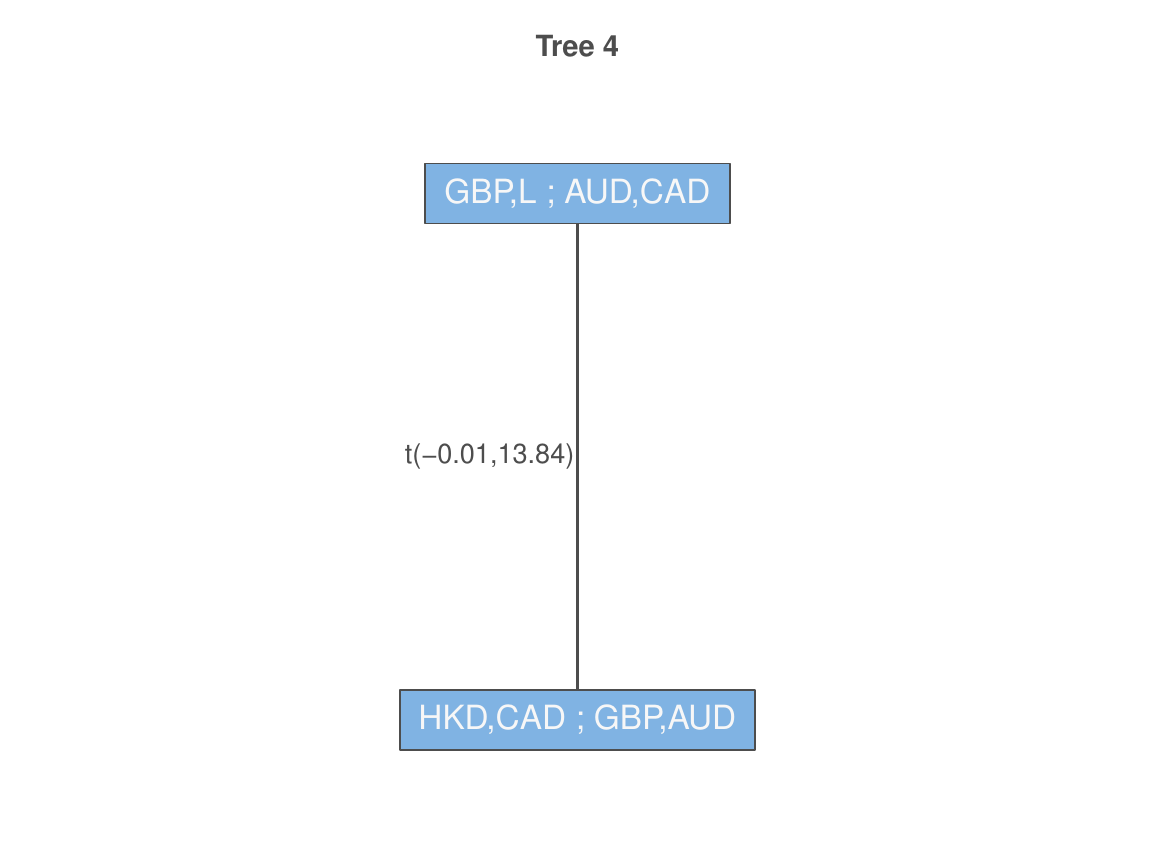}}
\caption{The four trees in the R-vine selected for Portfolio A data.}
\label{fig:trees_A}
\end{figure}

\subsection{Scatter plots of normal scores for Portfolio C}
\label{appen::sc}

\begin{figure}[H]
\centering
\subfigure{\includegraphics[scale=0.22]{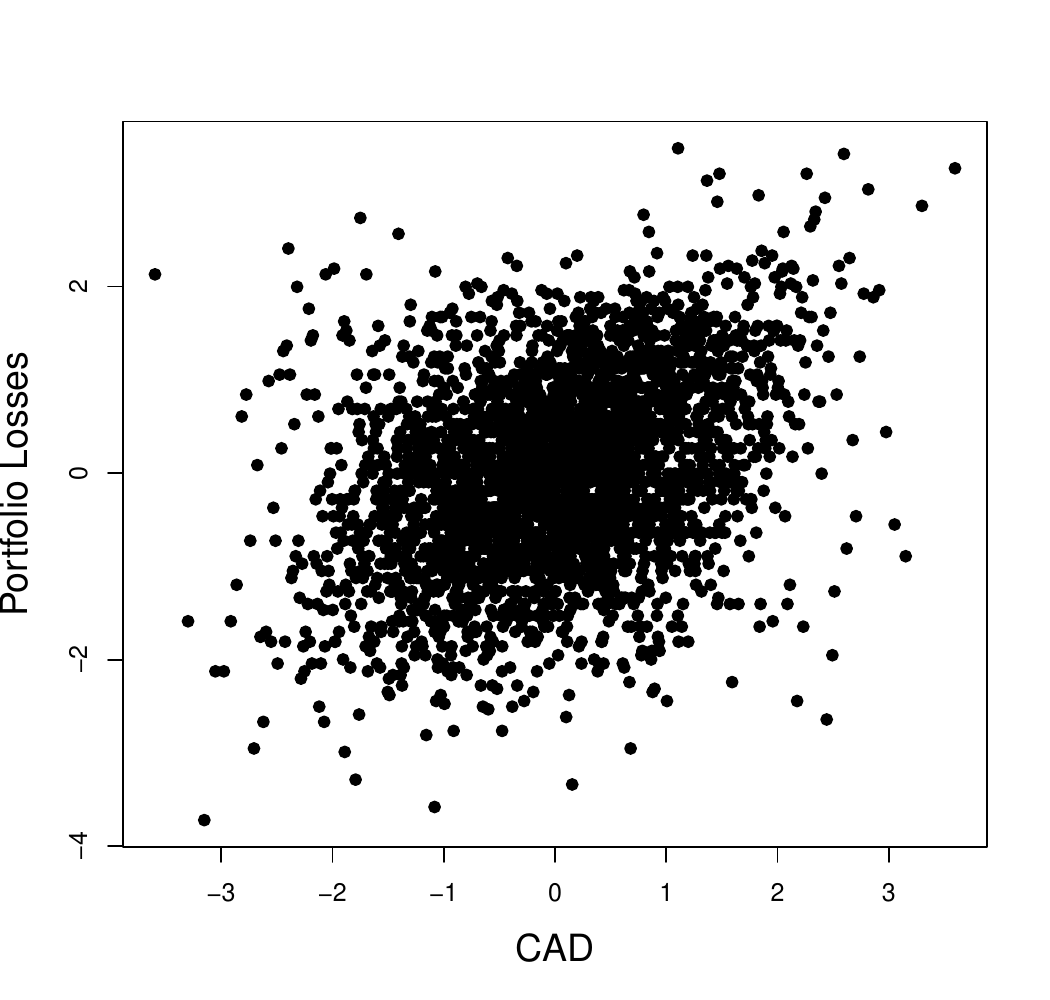}}
\subfigure{\includegraphics[scale=0.22]{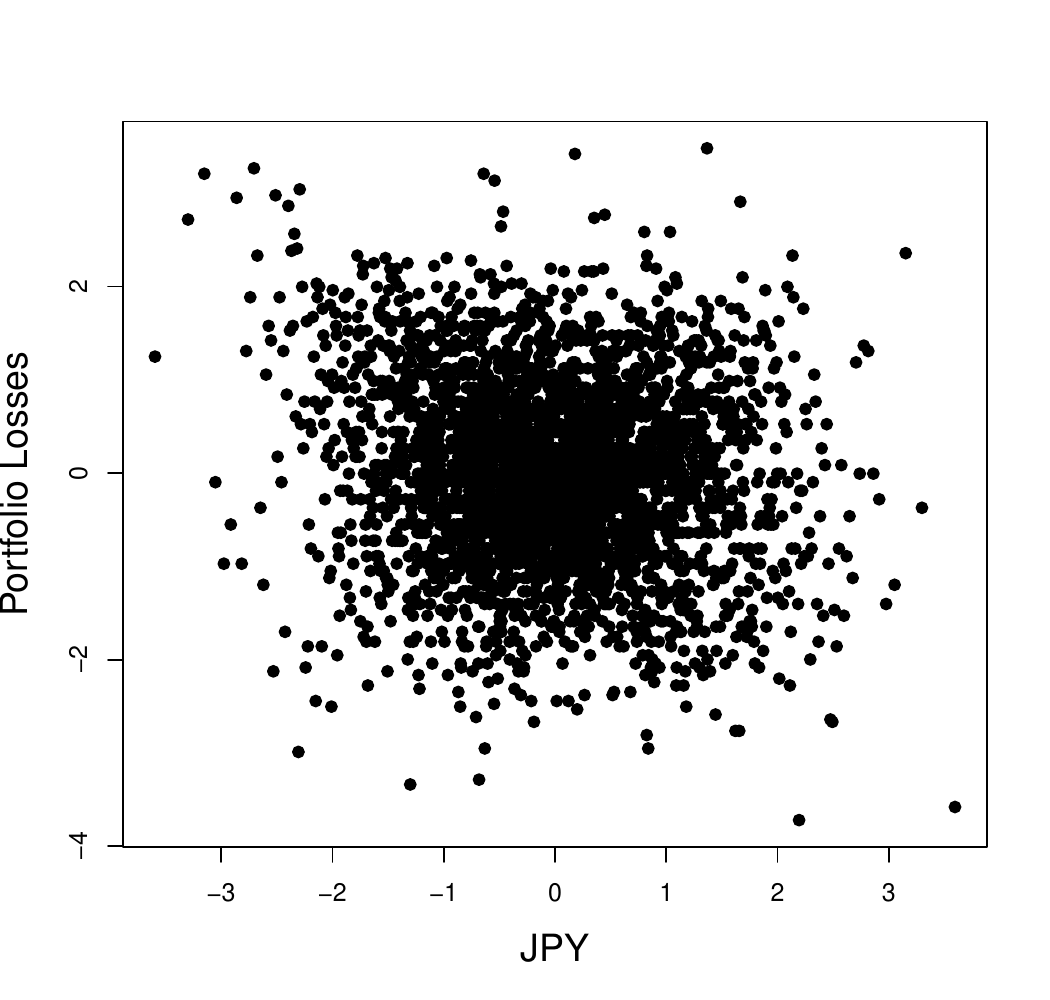}}
\subfigure{\includegraphics[scale=0.22]{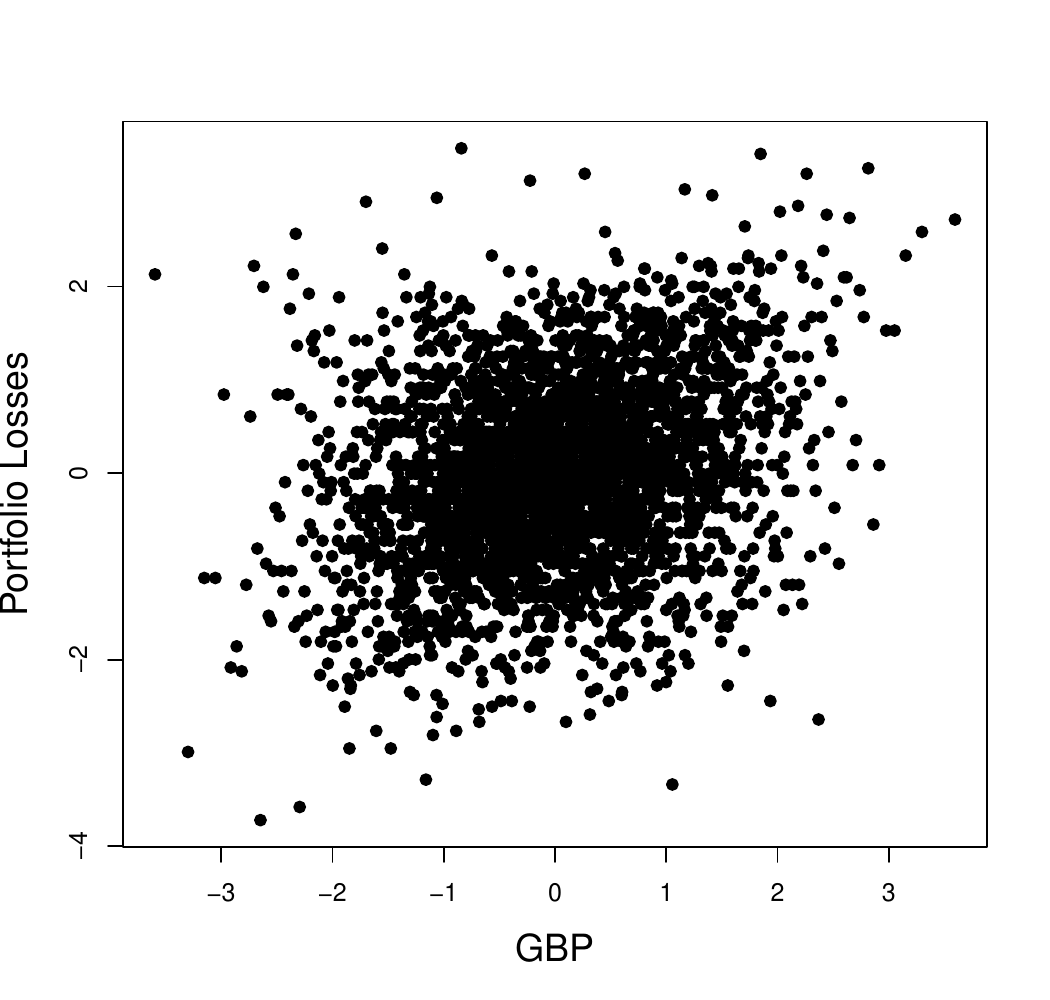}}
\subfigure{\includegraphics[scale=0.22]{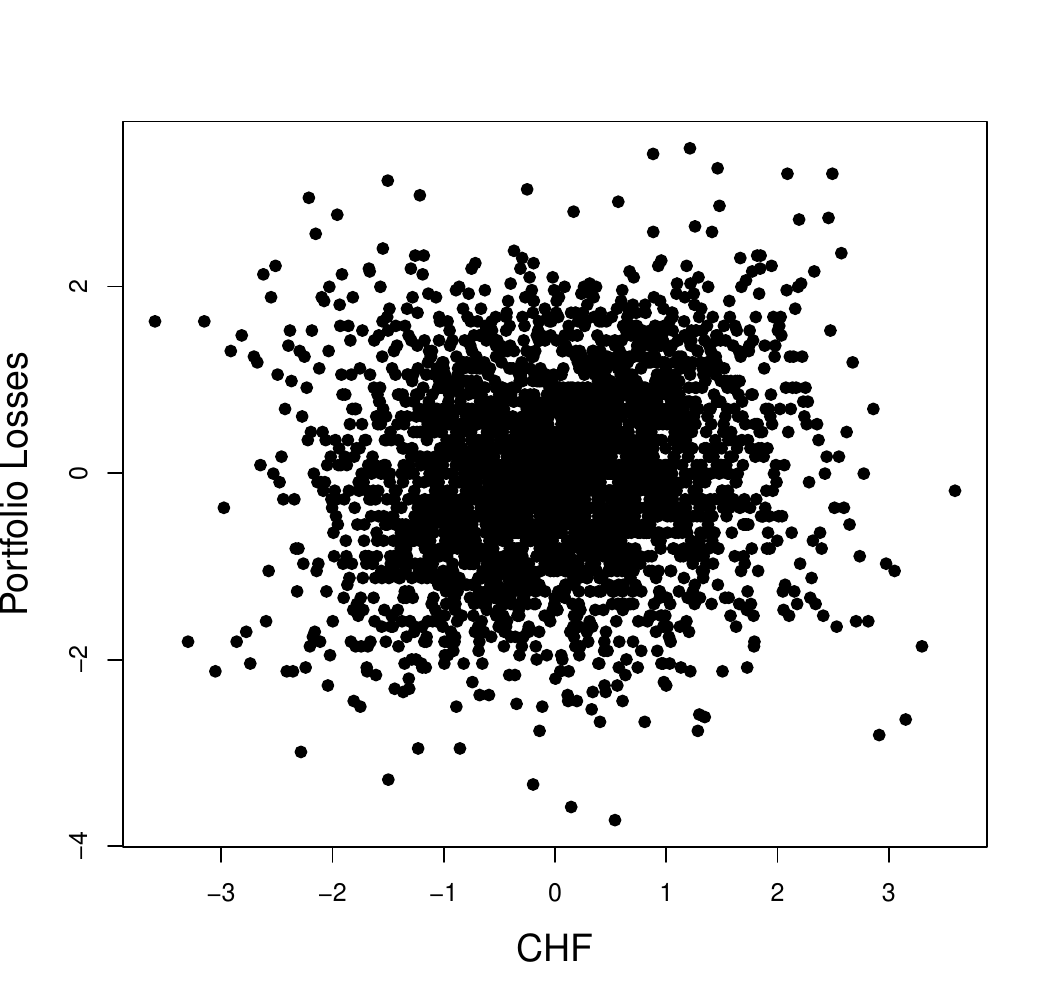}}

\subfigure{\includegraphics[scale=0.22]{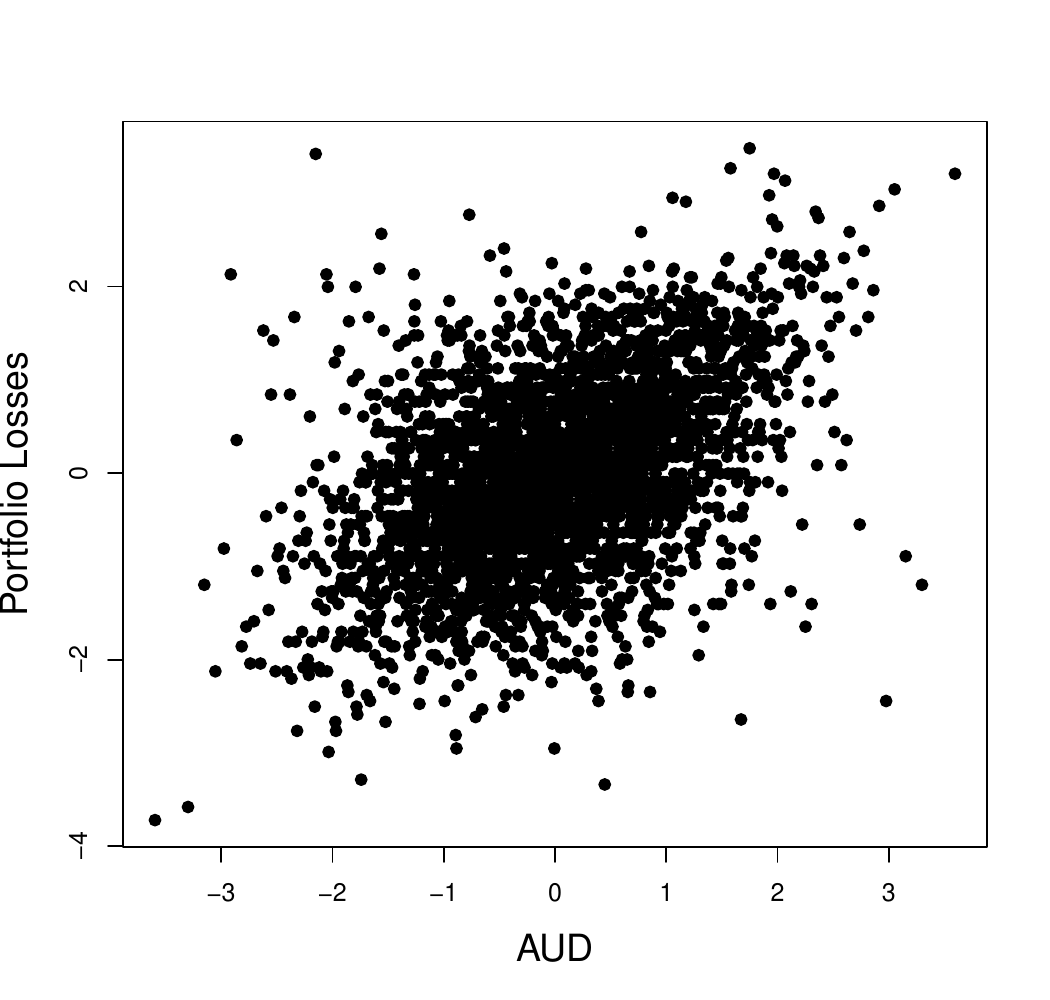}}
\subfigure{\includegraphics[scale=0.22]{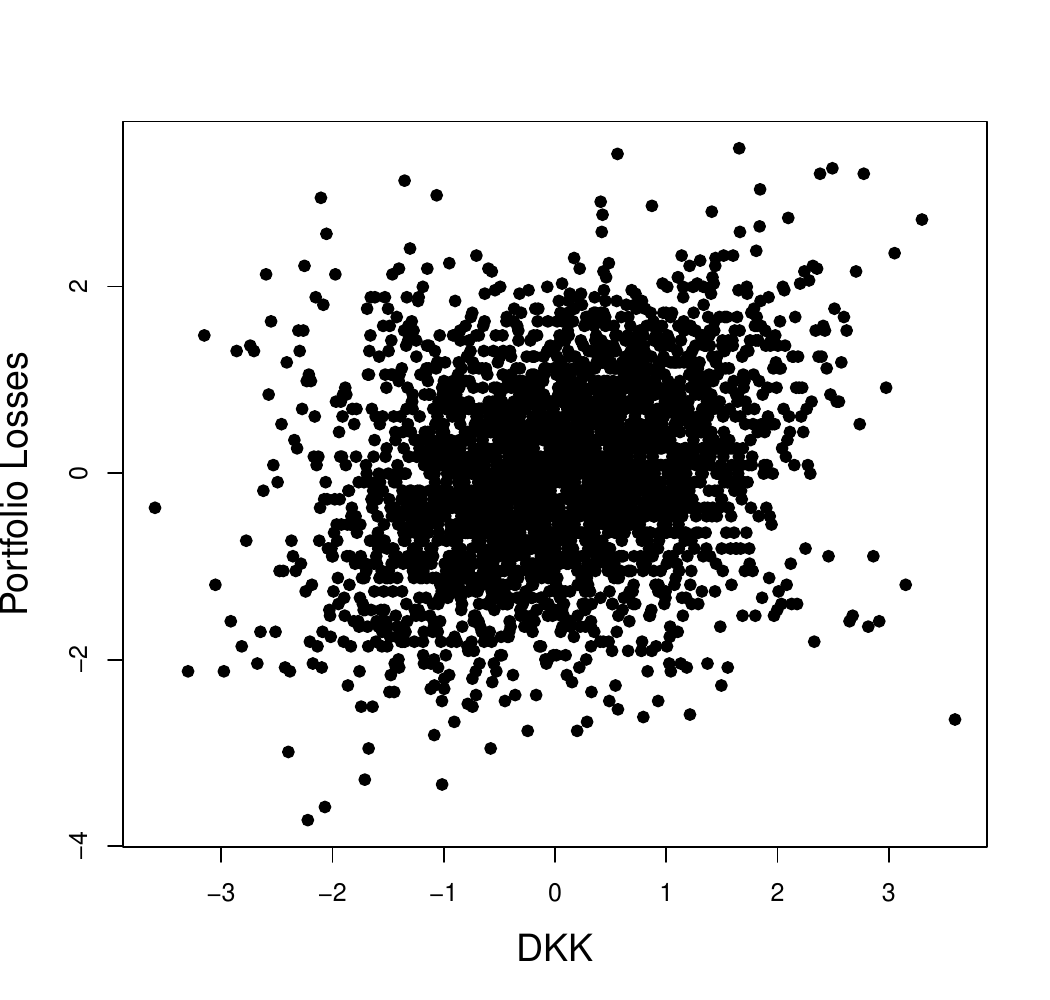}}
\subfigure{\includegraphics[scale=0.22]{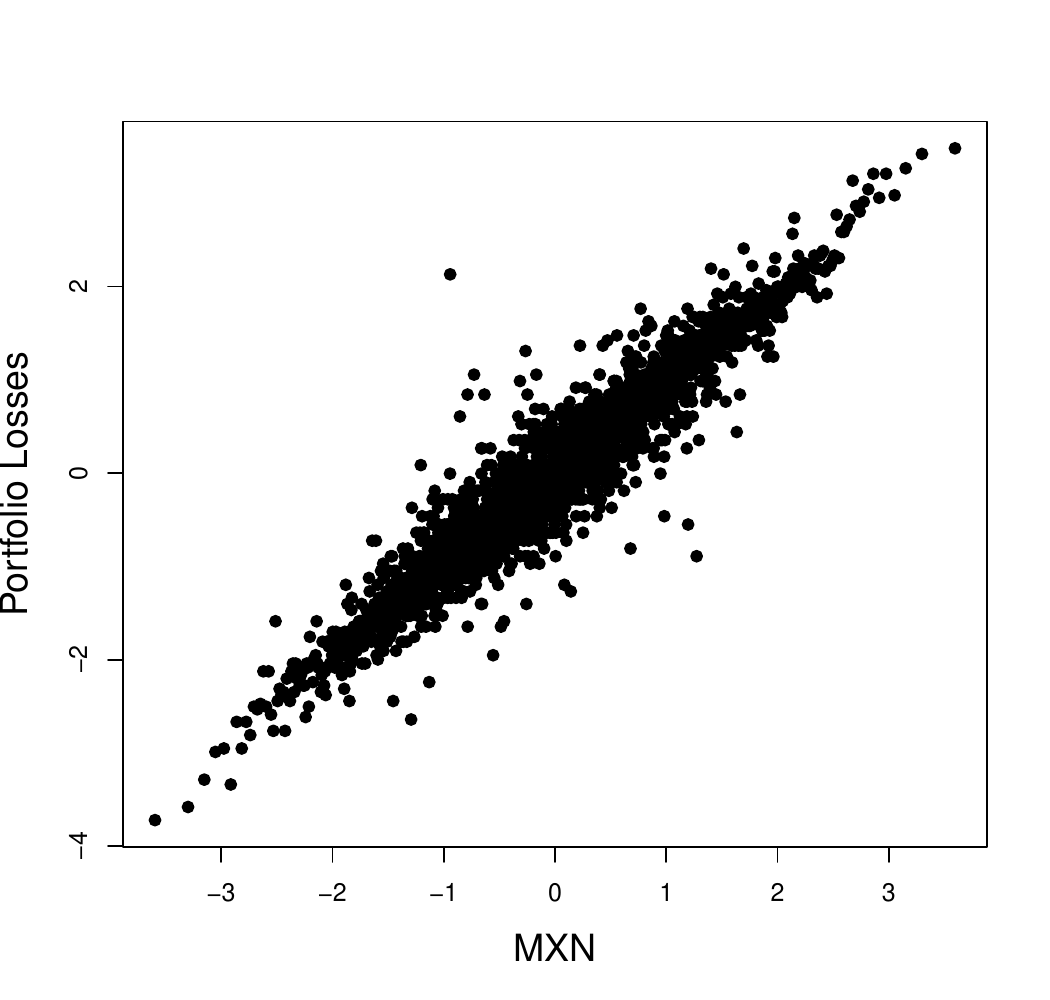}}
\subfigure{\includegraphics[scale=0.22]{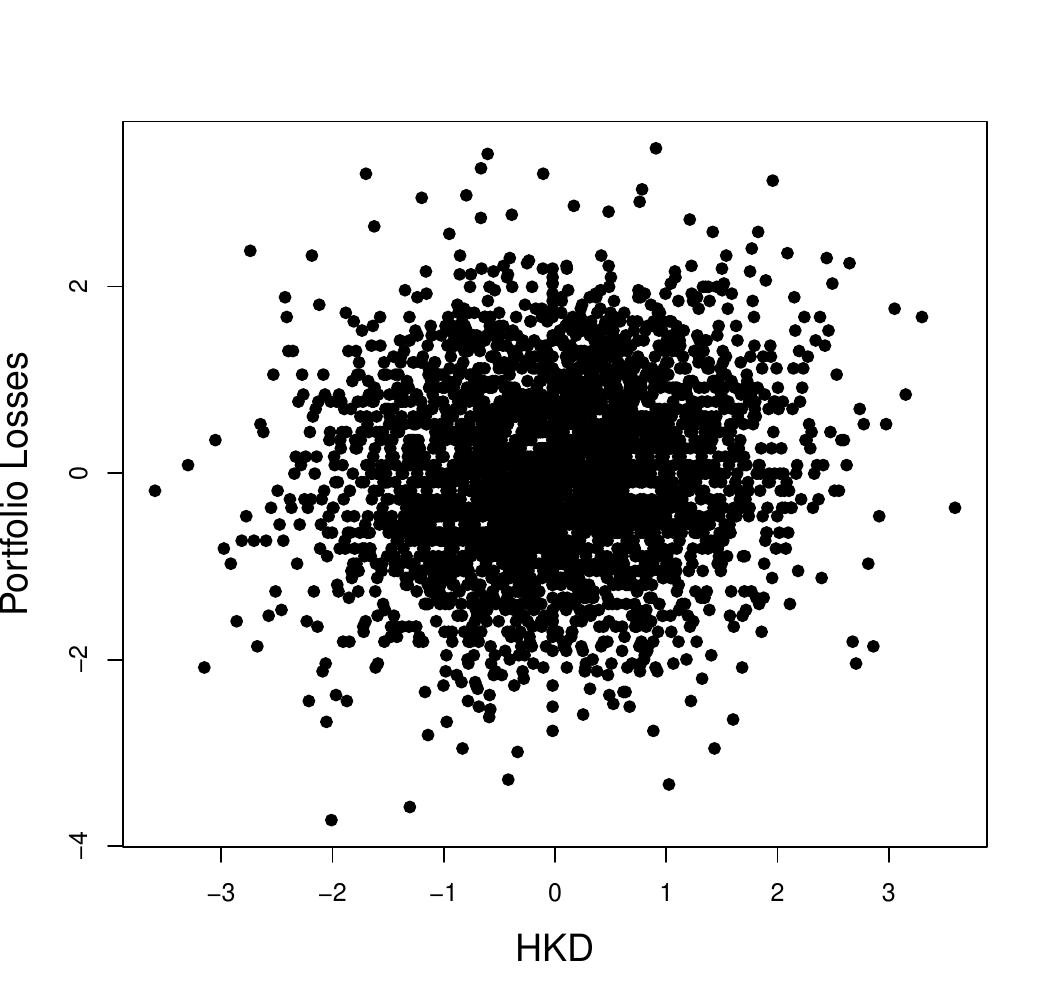}}
\end{figure}
\begin{figure}[H]
\centering
\subfigure{\includegraphics[scale=0.22]{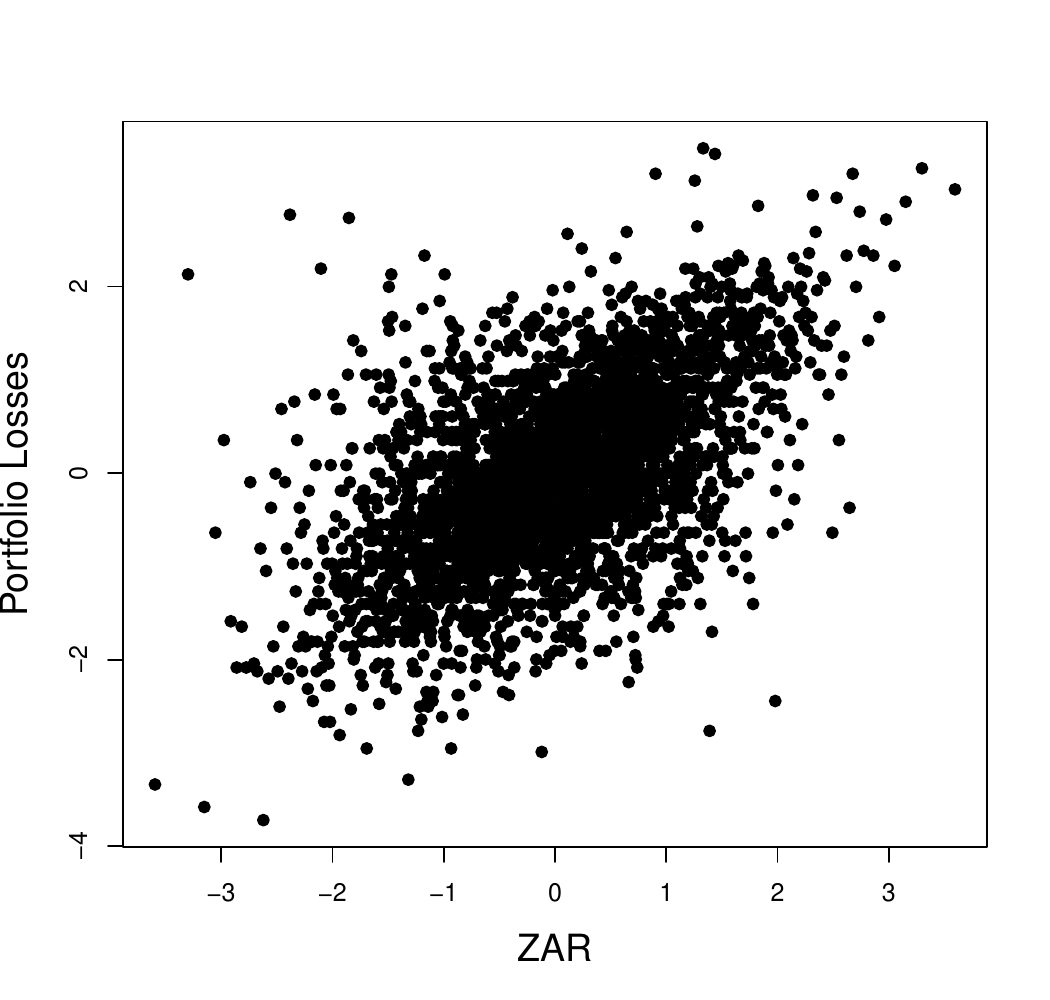}}
\subfigure{\includegraphics[scale=0.22]{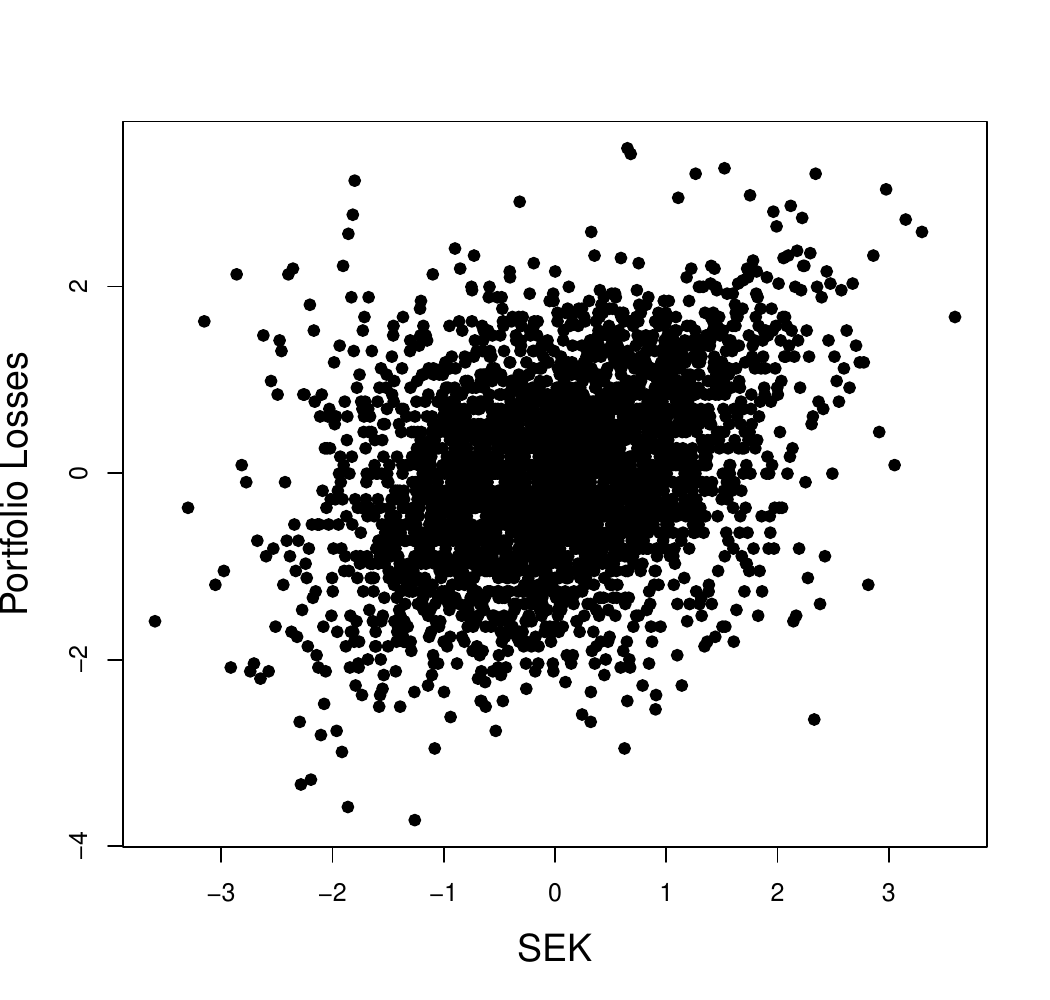}}
\subfigure{\includegraphics[scale=0.22]{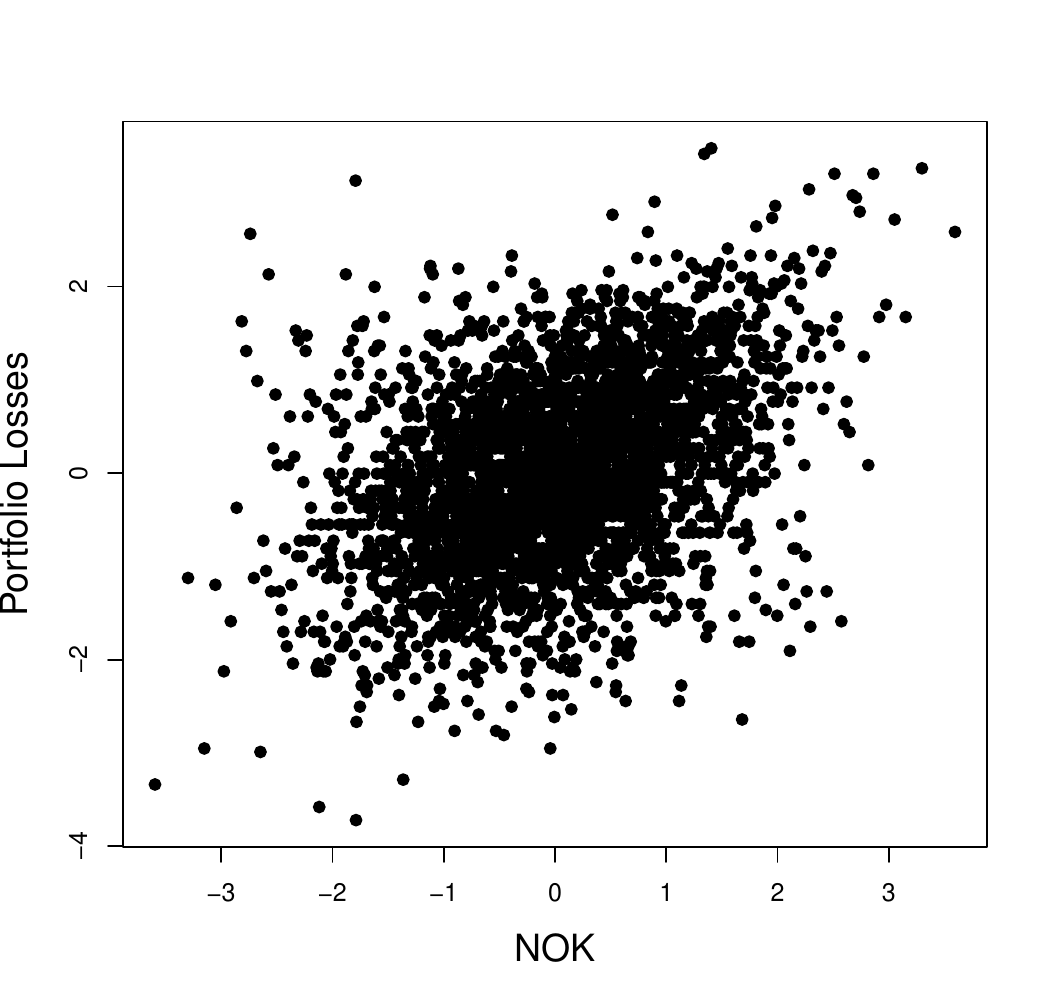}}
\subfigure{\includegraphics[scale=0.22]{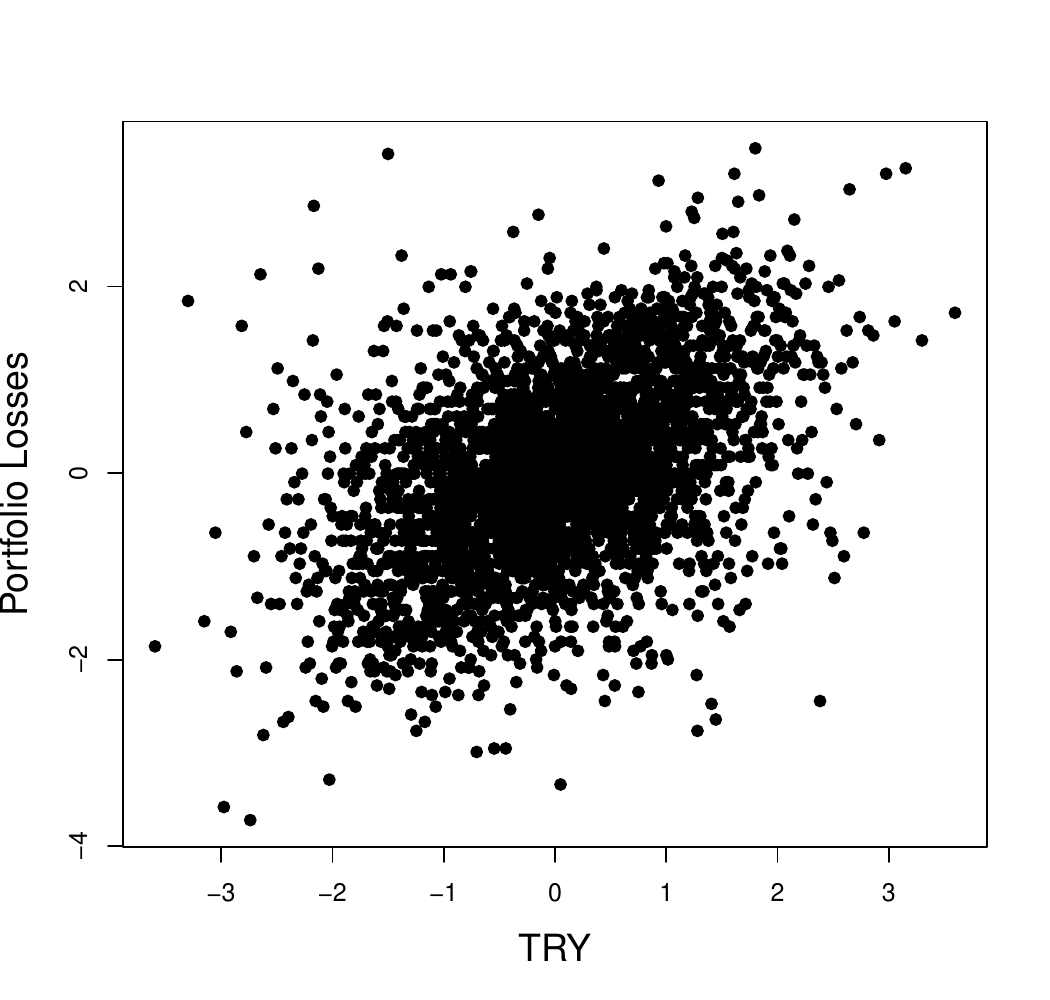}}
\subfigure{\includegraphics[scale=0.22]{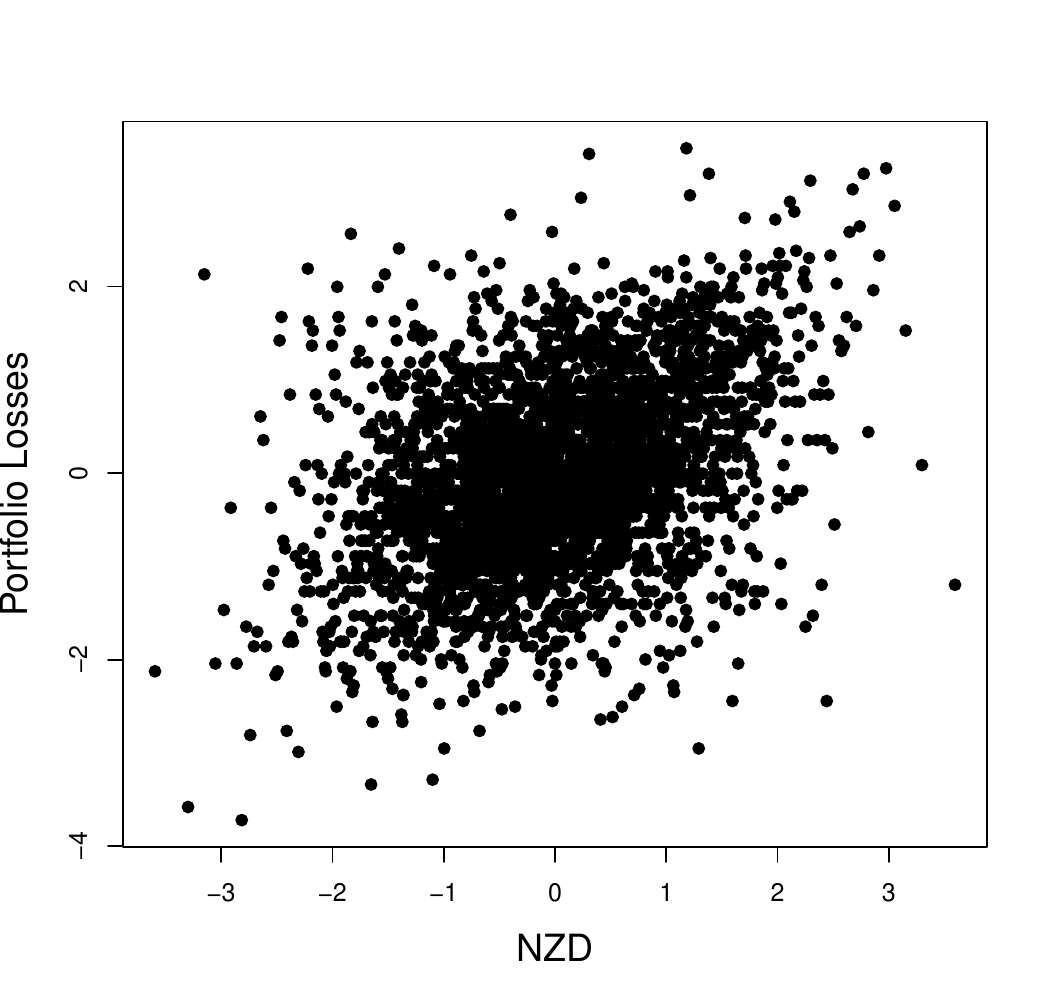}}
\subfigure{\includegraphics[scale=0.22]{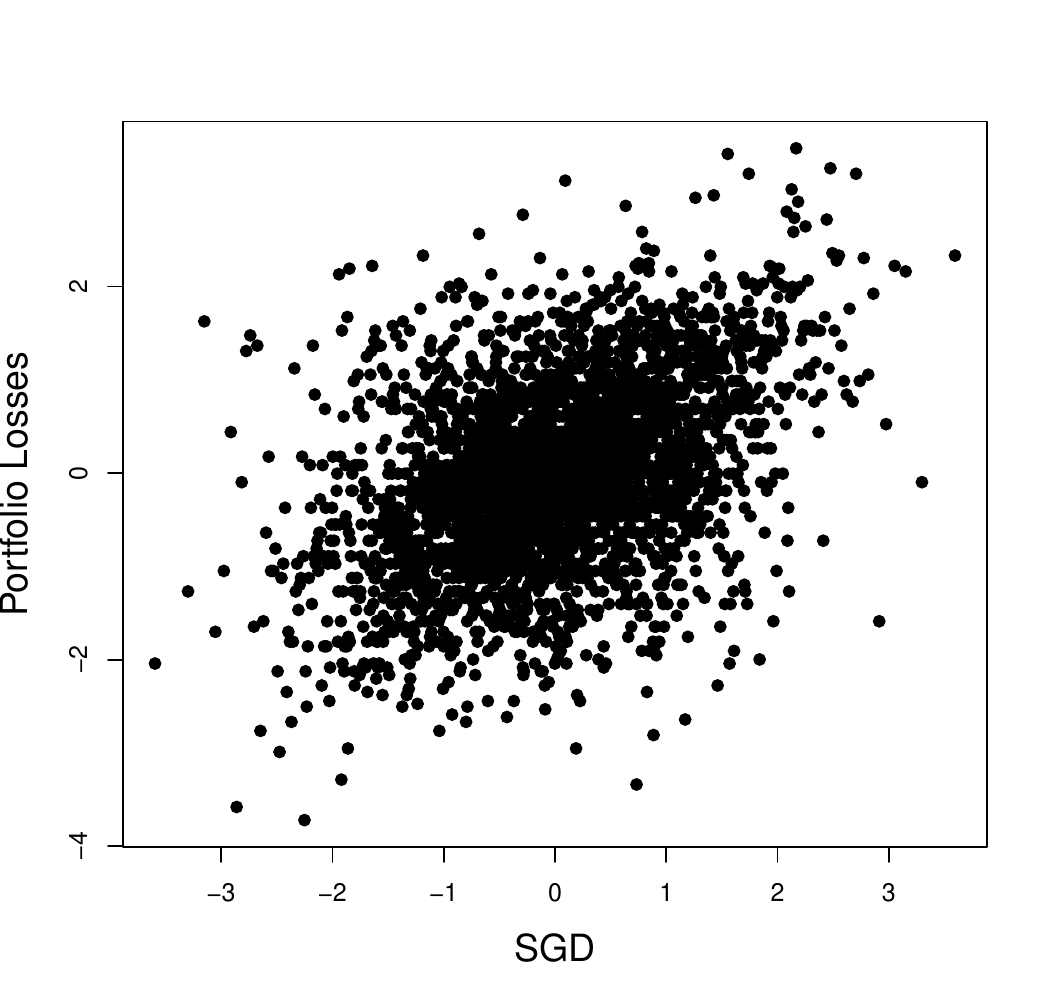}}
\subfigure{\includegraphics[scale=0.22]{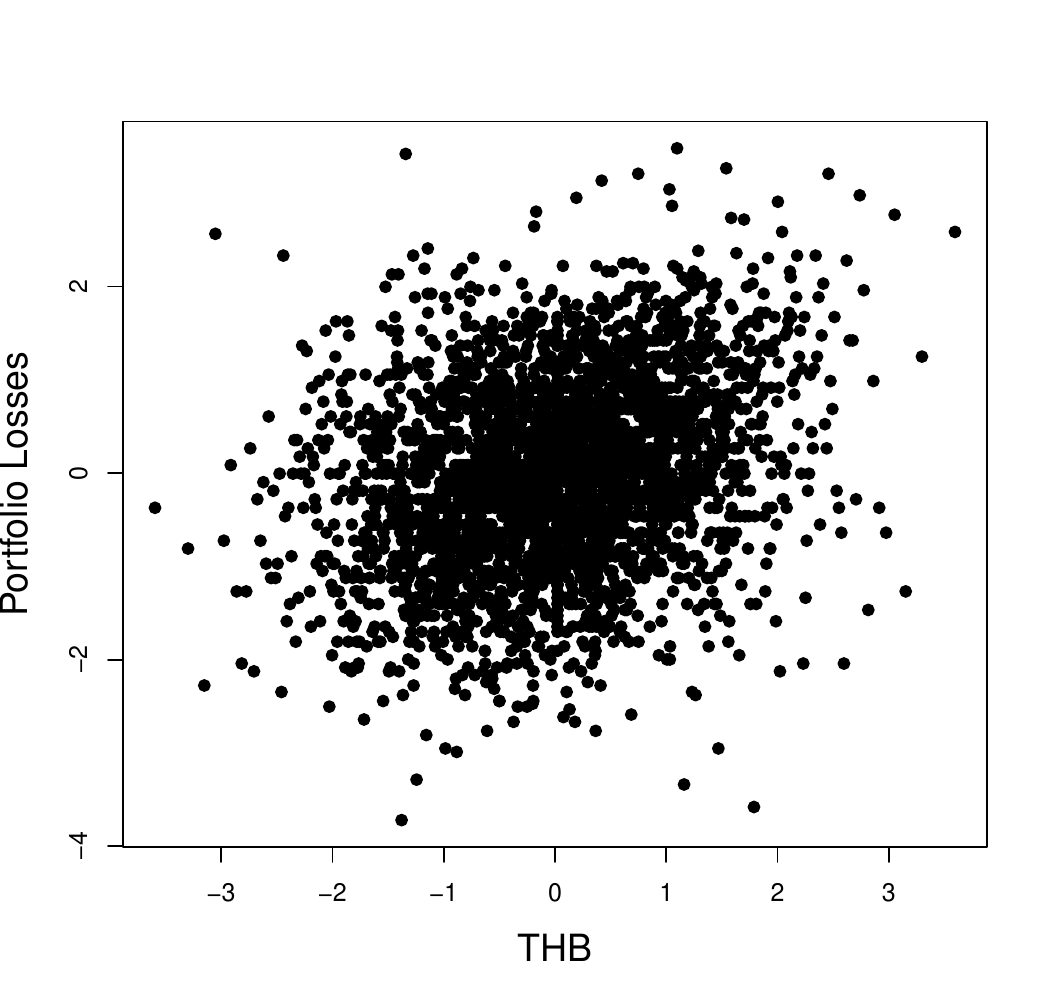}}
\subfigure{\includegraphics[scale=0.22]{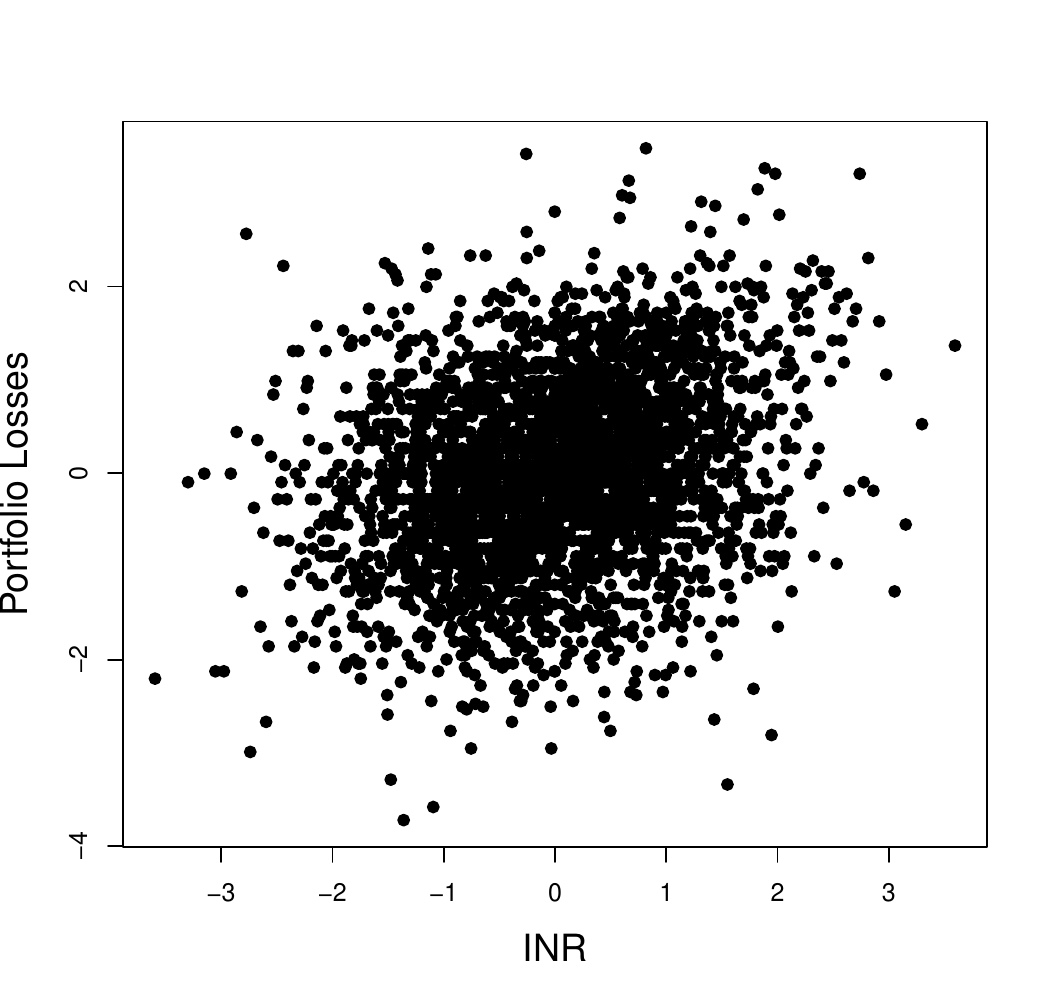}}
\subfigure{\includegraphics[scale=0.22]{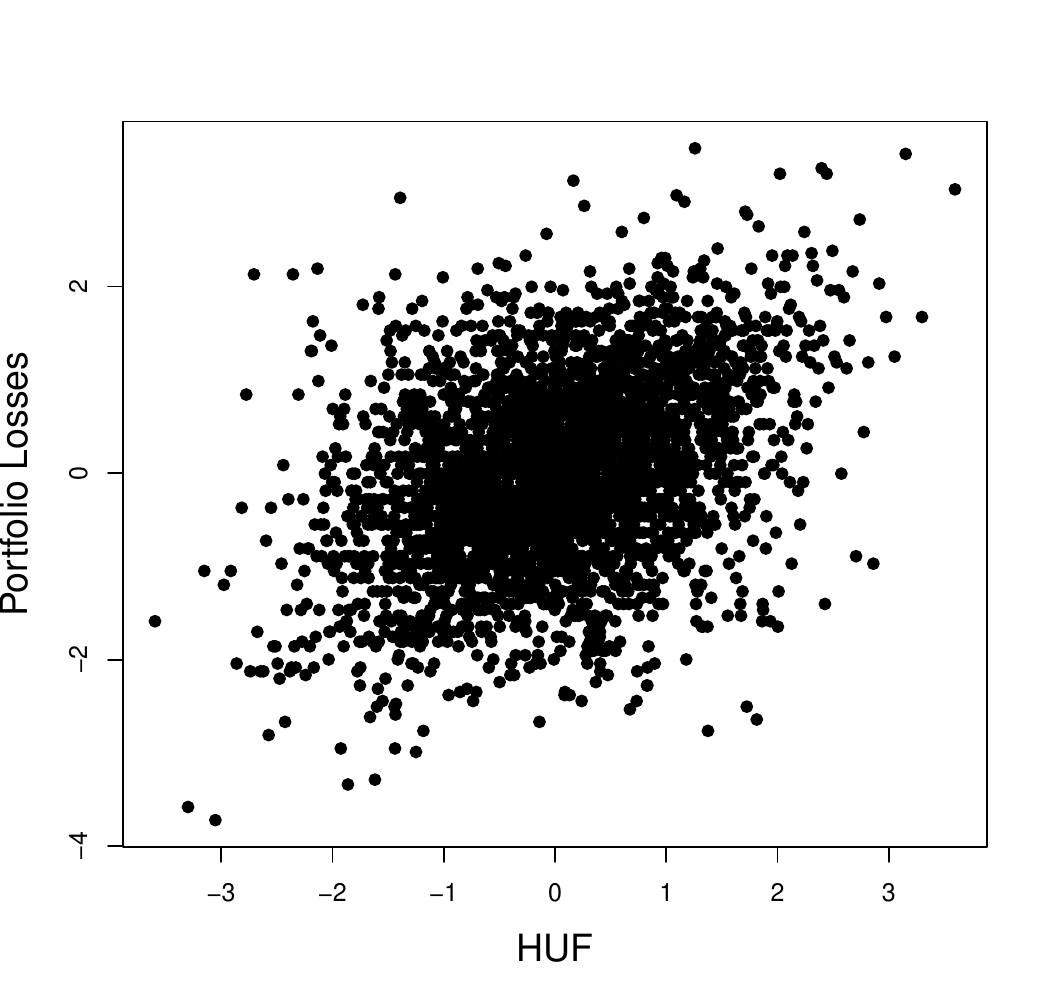}}
\subfigure{\includegraphics[scale=0.22]{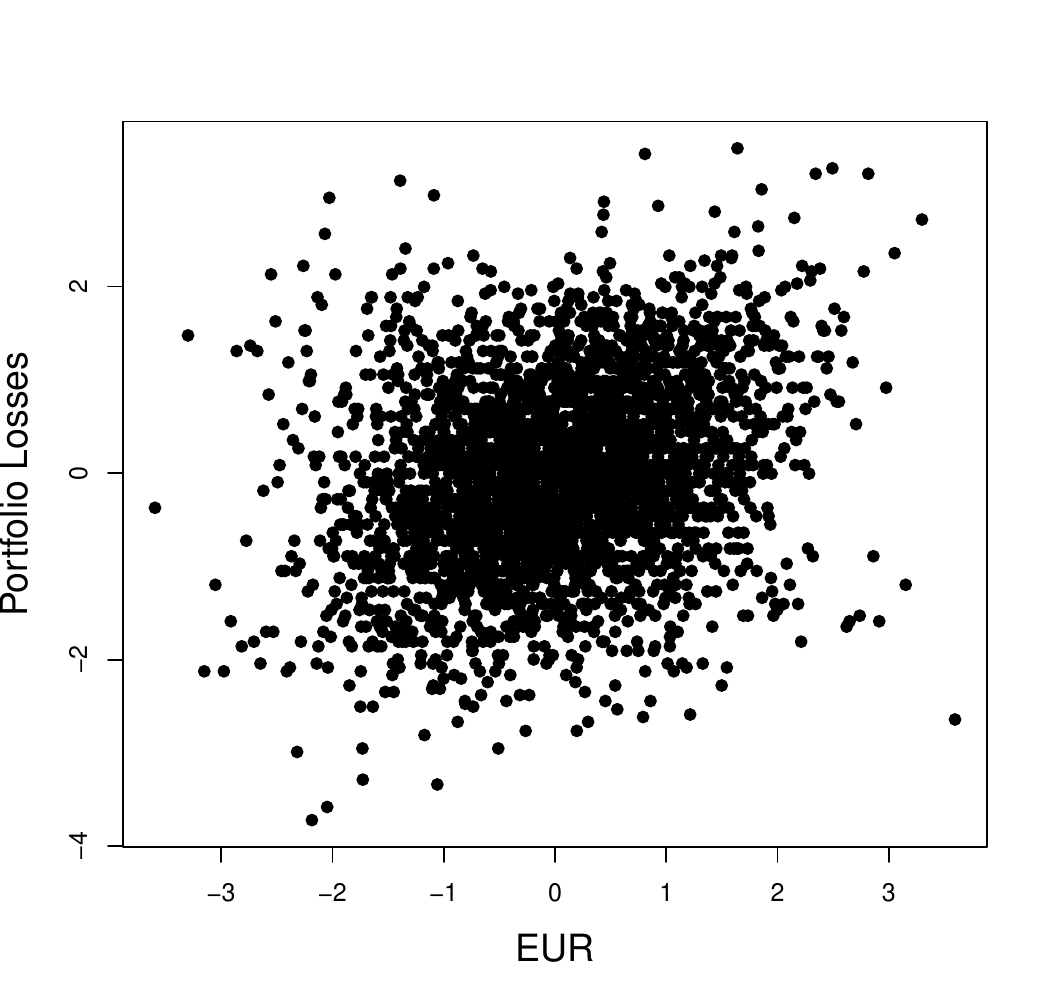}}
\caption{Scatter plots of normal scores between portfolio losses and relative changes in currency exchange rates for Portfolio C.}
\end{figure}

\subsection{Plot of estimates of stress scenarios for Portfolio B}
\label{appen::B_CI_s}
\begin{figure}[H]
\centering
\subfigure{\includegraphics[scale=0.35]{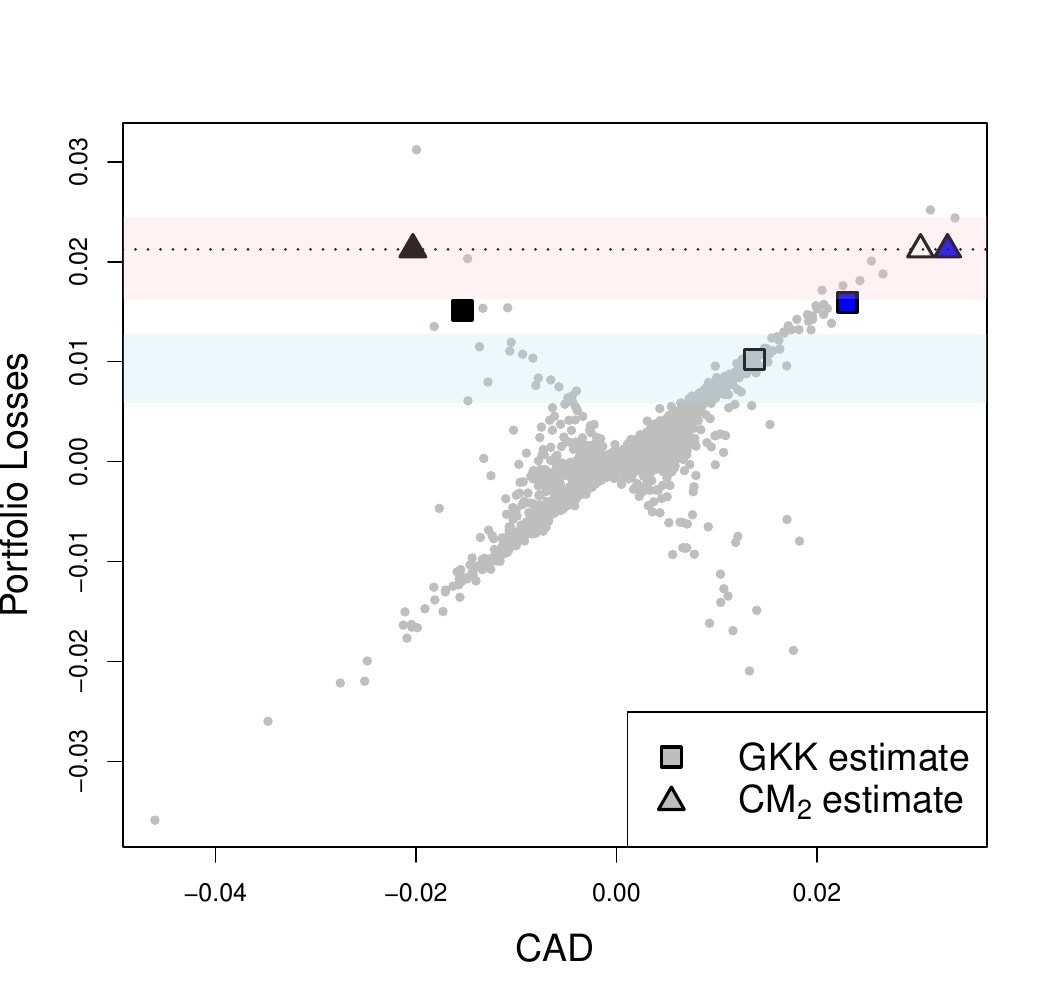}}
\subfigure{\includegraphics[scale=0.35]{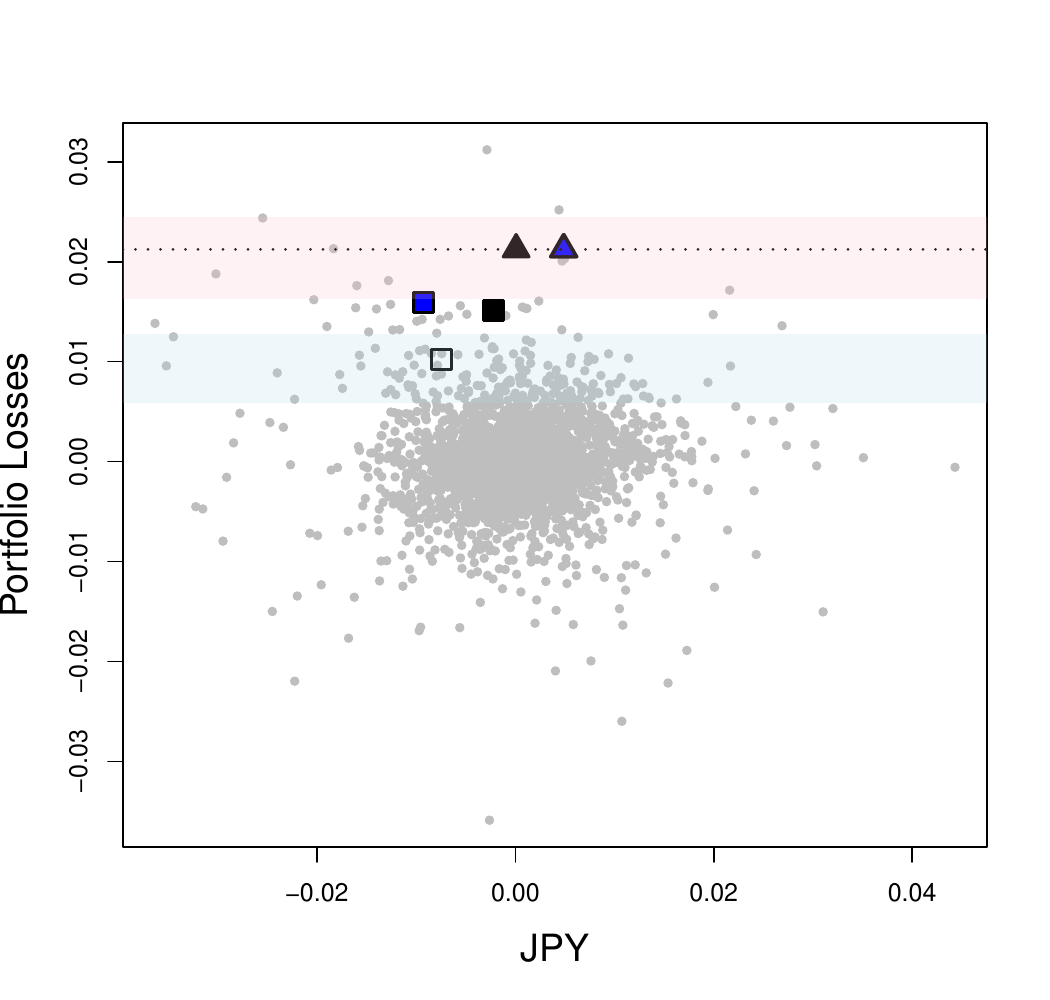}}
\end{figure}
\begin{figure}[H]
\centering
\subfigure{\includegraphics[scale=0.35]{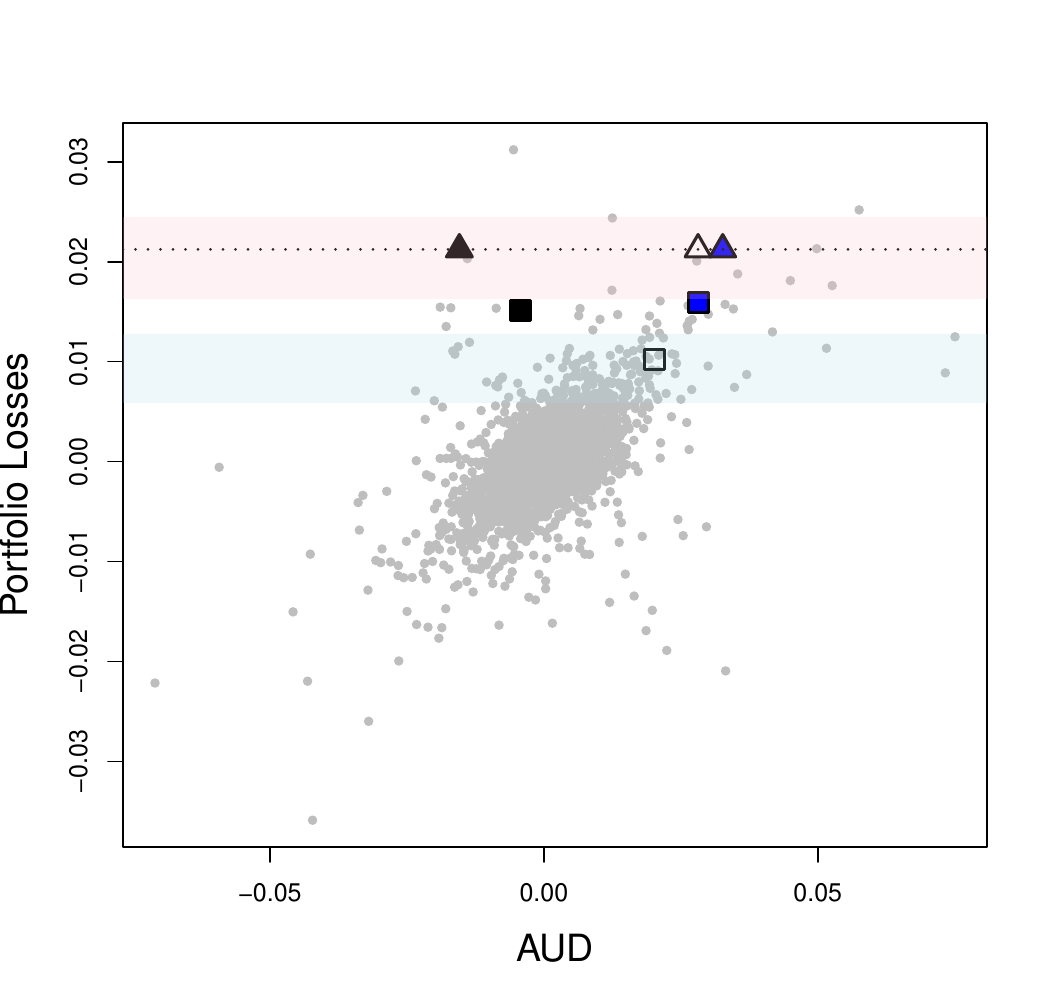}}
\subfigure{\includegraphics[scale=0.35]{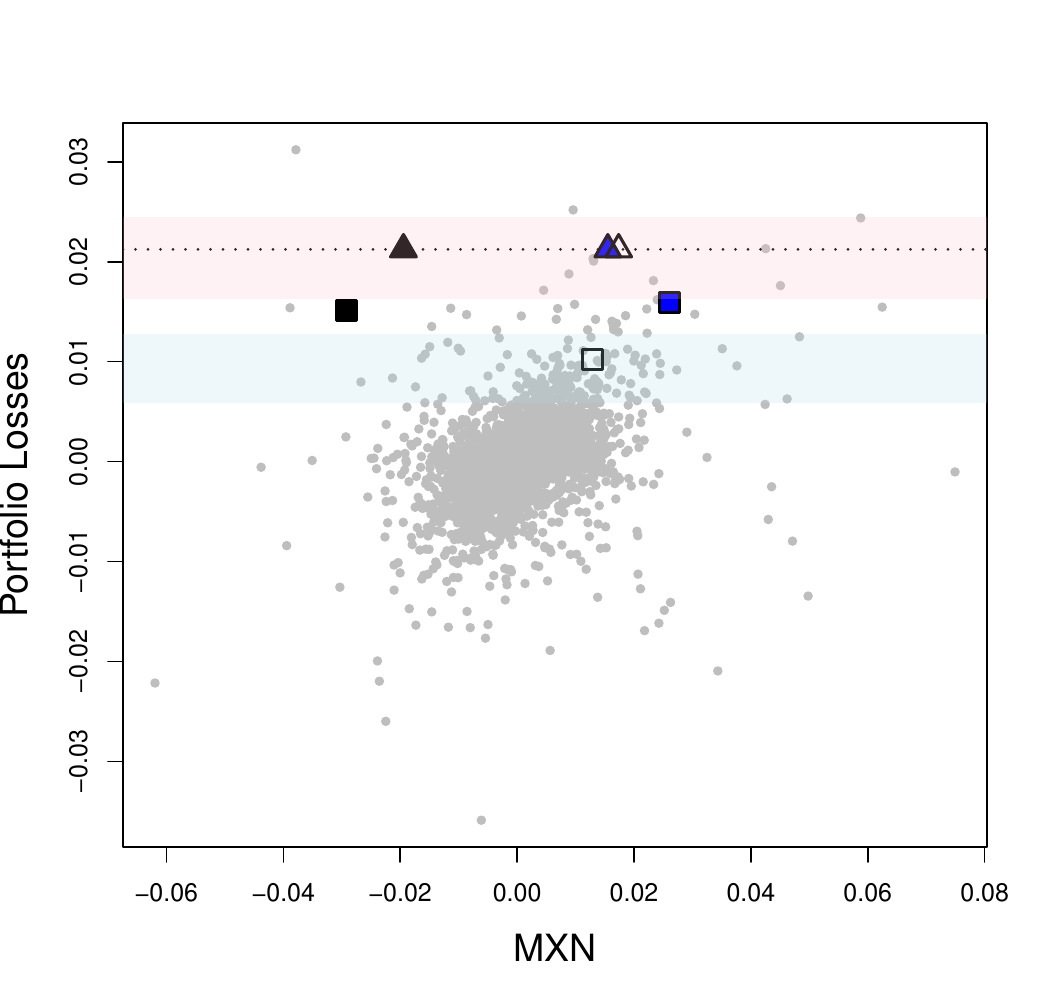}}
\subfigure{\includegraphics[scale=0.35]{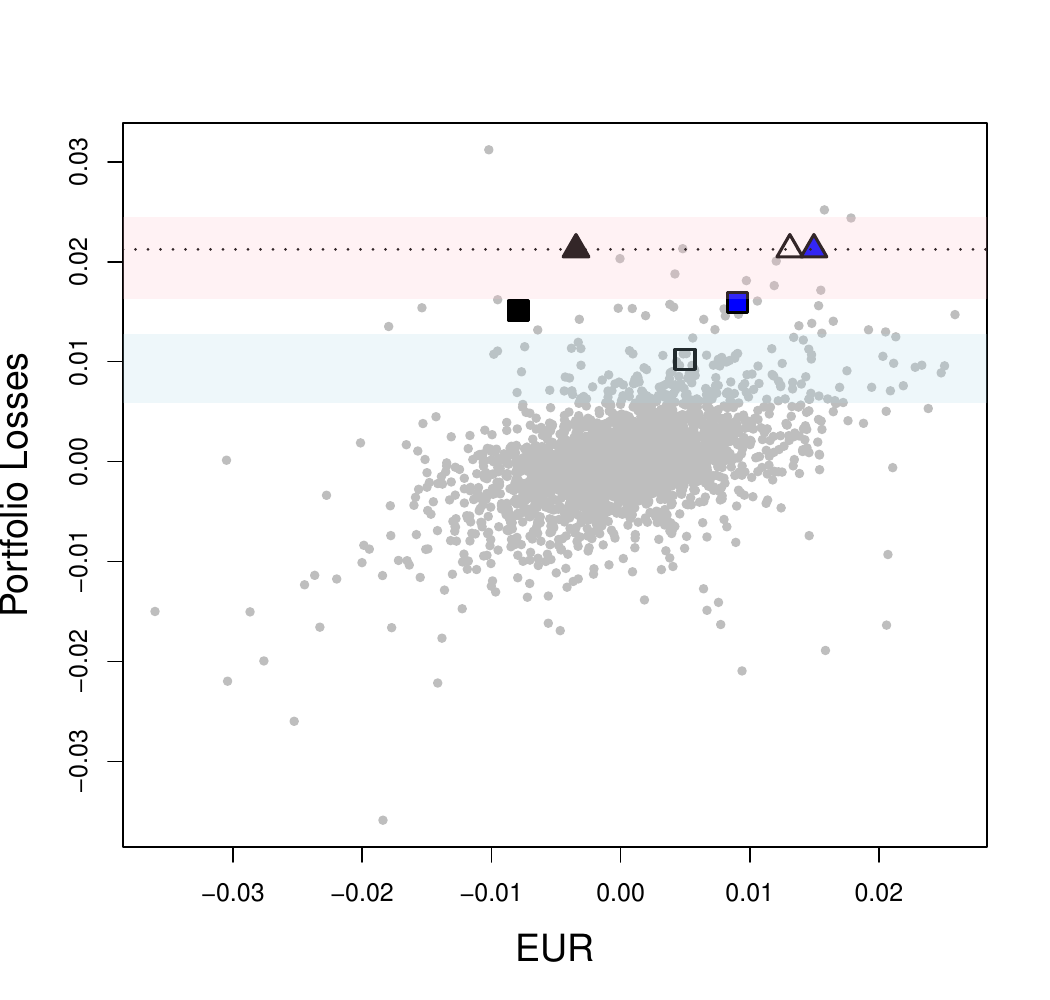}}
\caption{Plots of stress scenario estimates together with the data points for Portfolio B. The horizontal dotted lines indicate the threshold $\ell = 0.0212$. The blue, black and unfilled symbols show, respectively, stress scenario estimates for cluster 1, cluster 2 and all of the data combined. The pink and blue shaded bands show the $95\%$ confidence intervals for predicted portfolio losses for the $\CM_2$ and GKK stress scenario estimates with all observations included.}
\label{fig:B_CI_s}
\end{figure}

\subsection{Three-dimensional clouds for Portfolio B}
\label{appen::B_3d}

\begin{figure}[H]
\centering
\subfigure{\includegraphics[scale=0.5]{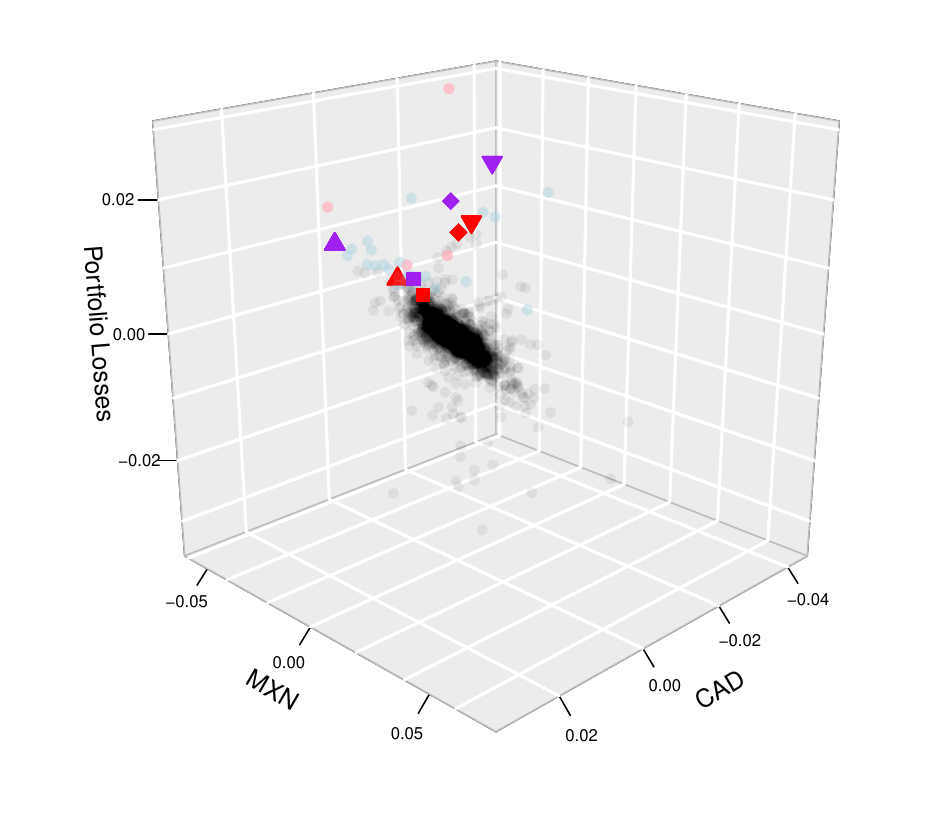}}
\subfigure{\includegraphics[scale=0.5]{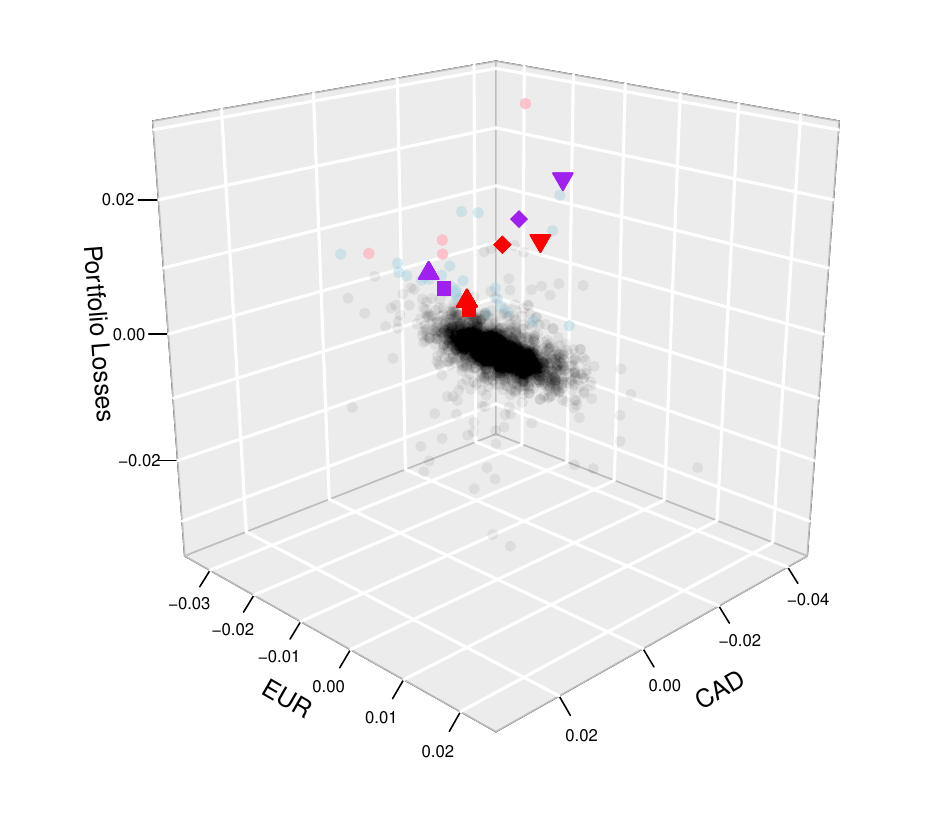}}
\end{figure}
\begin{figure}[H]
\centering
\subfigure{\includegraphics[scale=0.5]{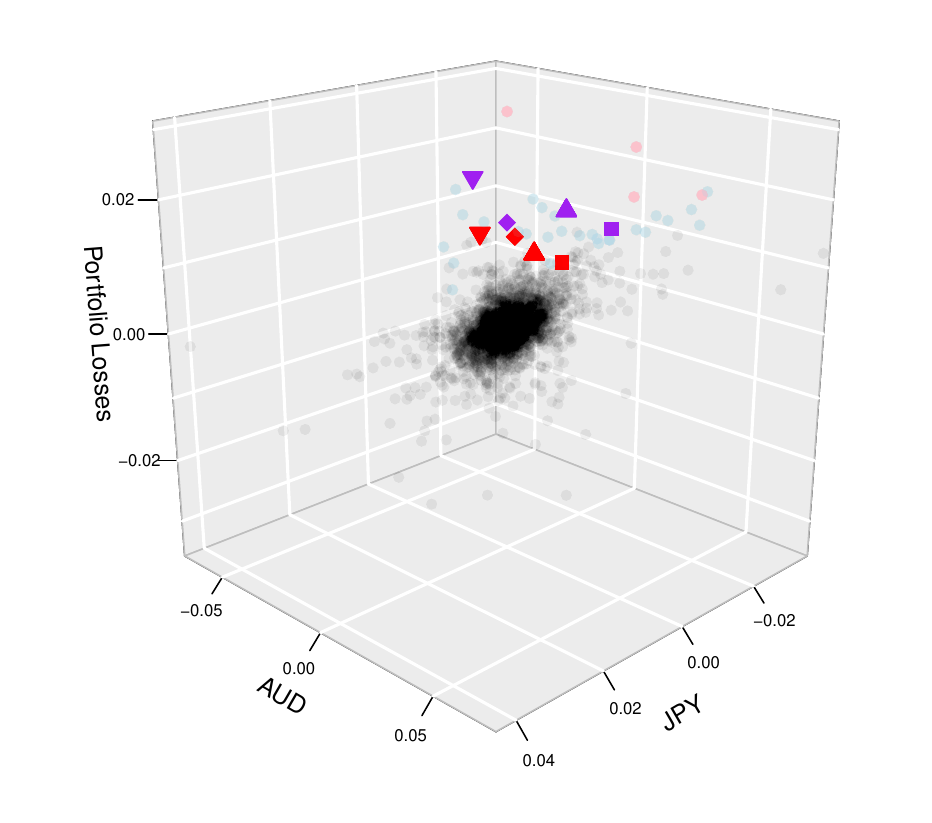}}
\subfigure{\includegraphics[scale=0.5]{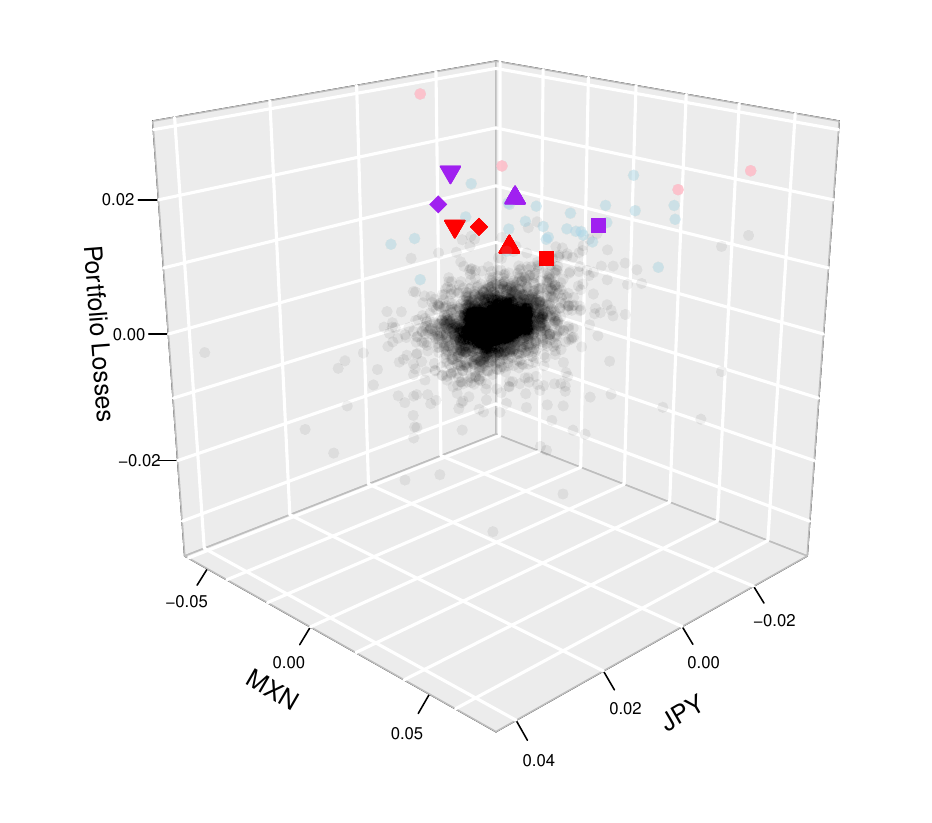}}
\subfigure{\includegraphics[scale=0.5]{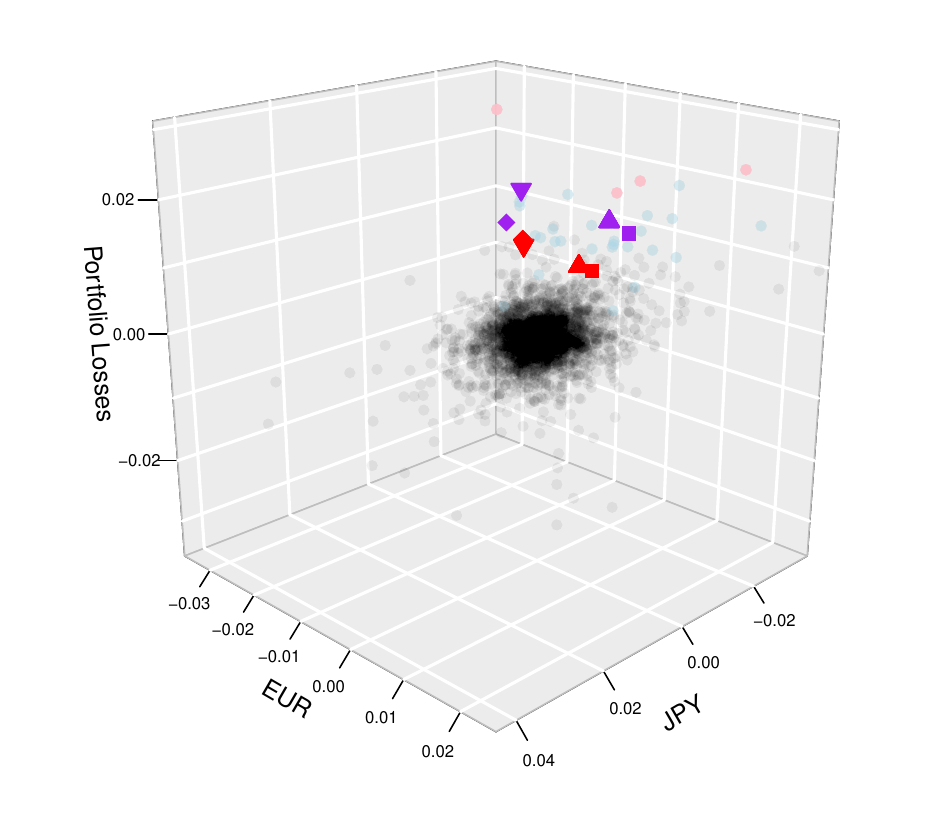}}
\subfigure{\includegraphics[scale=0.5]{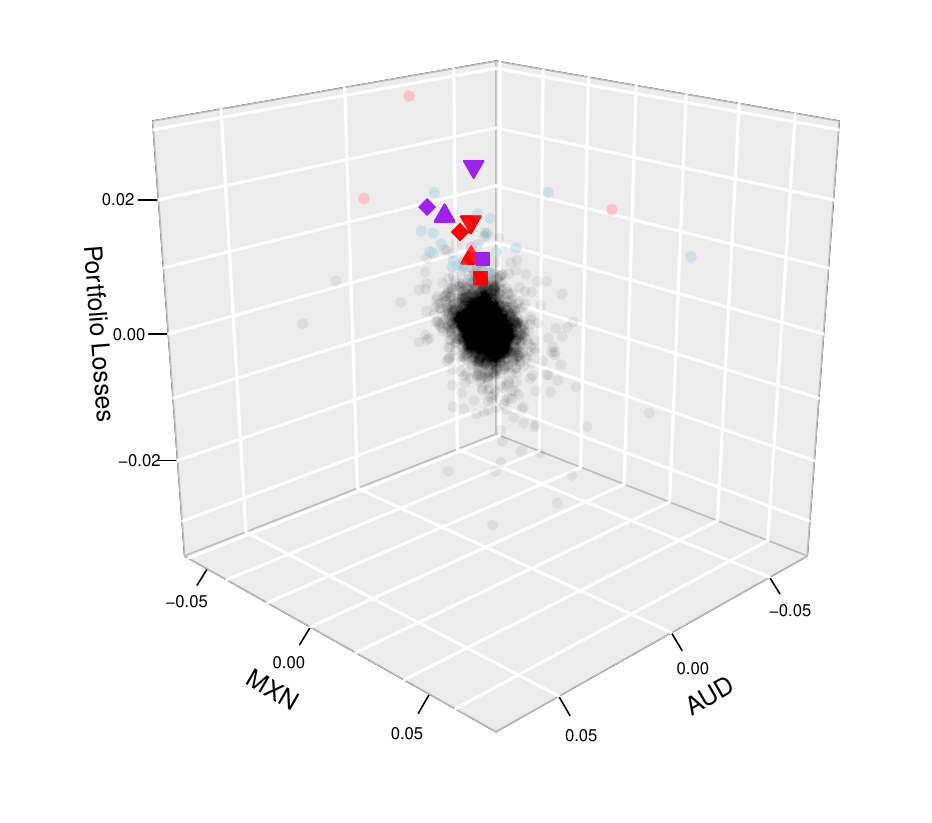}}
\subfigure{\includegraphics[scale=0.5]{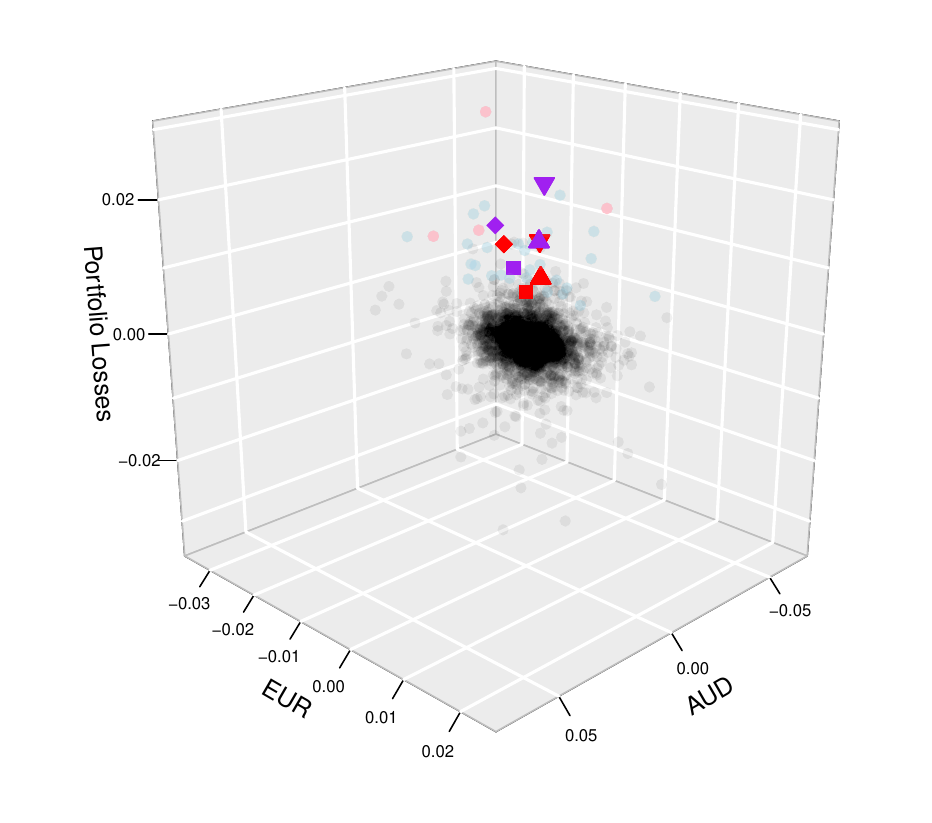}}
\subfigure{\includegraphics[scale=0.5]{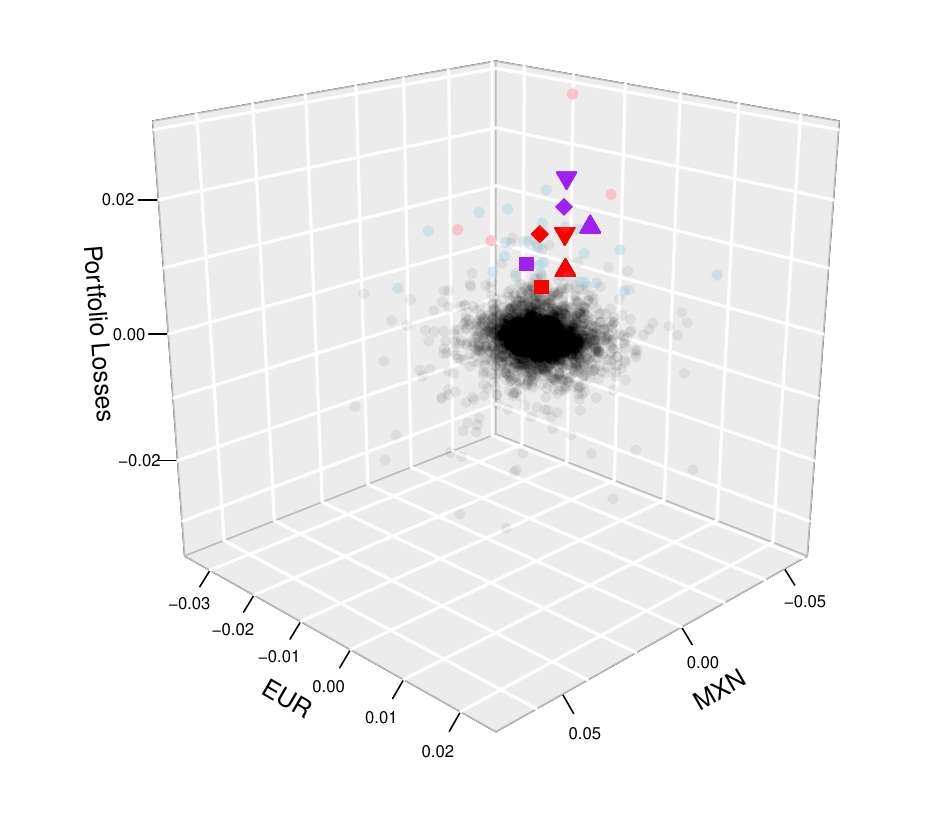}}
\caption{Plots of pairs of risk factors and associated portfolio losses for Portfolio B. The light blue and pink points indicate observations for which portfolio losses are greater than $\ell = 0.0132$ and $\ell = 0.0212$, respectively. The red and purple triangles/squares show stress scenario estimates at thresholds $\ell = 0.0132$ and $\ell = 0.0212$, respectively.}
\label{fig:T_3d_more}
\end{figure}

\end{appendices}

\end{document}